\newcommand\xleftrightarrow[2][]{\ext@arrow 0099{\longleftrightarrowfill@}{#1}{#2}}
\def\longleftrightarrowfill@{\arrowfill@\leftarrow\relbar\rightarrow}
\numberwithin{equation}{section}
\theoremstyle{plain}
\newtheorem{theorem}{Theorem}[section]
\newtheorem{proposition}{Proposition}[section]
\newtheorem{corollary}{Corollary}[section]
\theoremstyle{definition}
\newtheorem{definition}{Definition}[section]
\newtheorem{example}{Example}
\newtheorem{remark}{Remark}[section]
\author[1]{ \textbf{Noel Fortun}}\author[2]{\textbf{Piolo Gaspar}}\author[2,*]{\textbf{Editha Jose}}\author[1]{ \textbf{Angelyn Lao}}\author[1,4,5]{\textbf{Eduardo Mendoza}}\author[3]{ \textbf{Luis Razon}}
\affil[1]{\footnotesize \textit{Department of Mathematics and Statistics, De La Salle University, Manila, 0922, Philippines}}
\affil[2]{\footnotesize \textit{Institute of Mathematical Sciences, University of the Philippines Los Ba\~nos, Los Ba\~nos, 4031, Laguna, Philippines}}
\affil[3]{\footnotesize \textit{Department of Chemical Engineering, De La Salle University, Manila, 0922, Philippines}}
\affil[4]{\textit{Center for Natural Sciences and Environmental Research, De La Salle University, Manila, 0922, Philippines}}
\affil[5]{\textit{Max Planck Institute of Biochemistry, Martinsried near Munich, Germany}}
\affil[*]{Corresponding author: \texttt{ecjose1@up.edu.ph}}
\title{\textbf{A reaction network approach to modeling carbon dioxide removal systems}}
\date{}
\begin{document}
\maketitle
\thispagestyle{empty}

\abstract{This paper focuses on what we call Reaction Network Cardon Dioxide Removal (RNDCR) framework to analyze several proposed negative emissions technologies (NETs) so as to determine when present-day Earth carbon cycle system would exhibit multistationarity (steady-state multiplicity) or possibly monostationarity, that will in effect lower the rising earth temperature. Using mathematical modeling based on the techniques of chemical reaction network theory (CRNT), we propose an RNDCR system consisting of the Anderies subnetwork, fossil fuel emission reaction, the carbon capture subnetwork and the carbon storage subnetwork. The RNCDR framework analysis was done in the cases of two NETs: Bioenergy with Carbon Capture and Storage and Afforestation/Reforestation. It was found out that these two methods of carbon dioxide removal are almost similar with respect to their network properties and their capacities to exhibit multistationarity and absolute concentration robustness in certain species.} 

%%================================%%
%% Sample for structured abstract %%
%%================================%%

%\keywords{reaction network, carbon dioxide removal, bioenergy with carbon capture and storage, reforestation, direct air capture}

%%\pacs[JEL Classification]{D8, H51}

%%\pacs[MSC Classification]{35A01, 65L10, 65L12, 65L20, 65L70}

%\maketitle

\section{Introduction}\label{intro}

The rapidly increasing concentration of CO2 in the atmosphere and its dire consequences are one of the major pressing concerns worldwide at present. The Intergovernmental Panel on Climate Change (IPCC) has stated that more direct interventions like carbon dioxide removal (CDR) are now necessary to achieve net-zero greenhouse gas emissions (\cite{SHUKLA2022}). Predicting the impact of human interventions on the Earth system using mathematical models is thus a key task that would allow the selection and optimization of these activities prior to actual implementation (\cite{TAN2022}). The primary approach in Earth system modeling is to subdivide the entire Earth surface into three-dimensional, model the processes in each cell, and then subsequently model the inter-cell interactions. Ultimately, this may be the most accurate and reliable approach, but it requires large amounts of computing time even on the most powerful computers in the world (\cite{CB2018}).

Another approach is to model each major part of the Earth into separate compartments such as the atmosphere, ocean, and land without any spatial resolution and model them as interacting perfectly mixed chemical reactors. Such an approach has been taken in the following papers:  \cite{LENTON2000},  \cite{SCHMITZ2002},  \cite{ANDE2013}, and \cite{HEITZIG2016}. This approach has been further extended to include even economic systems ( \cite{ABDFHR2023}, \cite{NITZBON2017}). The computations involved are simpler and may provide useful insights that may be difficult to discern in more complex models. 

Numerical analysis of such systems has shown that the Earth System may exhibit so-called “tipping points”. The term tipping point captures the thought that  “… at a particular moment in time, a small change can have large, long-term consequences for a system” (\cite{MCKAY2022}). Exceeding a critical threshold in a key parameter may cause a rapid transition to another (perhaps undesirable) from which recovery may be difficult if not impossible. Tipping points would be commonly associated with systems that exhibit steady-state multiplicity (often called, interchangeably, multistationarity or multiple stationary states.) In the analysis of chemically reacting systems, these are more often called bifurcation points (\cite{RAZON1987}).
The numerical analysis performed in \cite{ANDE2013} demonstrated that pre-industrial Earth may be expected to exhibit only unique stable steady states. However, their analysis of industrial Earth showed that there may be regions in the parameter space where steady-state multiplicity may exist and that it is possible that a tipping point may be reached in the future.

Is it possible then that there may be human interventions that could restore the pre-industrial Earth with unique stable steady states?  The question could be also put this way: Under what conditions would a human intervention on Earth result in a system with unique steady states under all parameter combinations? Is there a way we can quickly screen possible interventions to find out?

The set of techniques called Chemical Reaction Network Theory (CRNT) is a method by which observations on the network properties of lumped chemically reacting systems can determine whether the given system can admit the possible existence (or absence) of steady-state multiplicity (\cite{FEIN2019}). Our previous studies (\cite{FMRL2018,FMRL2019,FOME2023}) that apply CRNT have shown that the pre-industrial Earth model proposed in \cite{SCHMITZ2002} has a unique steady state, while the capacity for steady-state multiplicity of the model in \cite{ANDE2013} depends on some parameters. In a subsequent study, we determined necessary conditions for the existence and uniqueness of steady states in a model of an industrial Earth system wherein Direct Air Capture is implemented (\cite{FLMR2024}).

In this paper, we demonstrate a framework for analyzing proposed negative emissions technologies (NET) and determining the conditions as to when present-day Earth can would still be exhibiting steady-state multiplicity. Consequently, this analysis would also show when present-day Earth would have unique steady states, making recovery from a high-temperature state more likely to succeed. 

\section{Power-law kinetic representation of an Earth system model}

The framework for analysis of the global carbon cycle models described in Section \ref{RNCDR} requires constructing a \textbf{power-law kinetic representation}. A power-law kinetic representation of a given dynamical system refers to a chemical reaction network (CRN) with power-law kinetics that is dynamically equivalent to the given system; i.e., they have identical ODEs.  The aim is to understand the dynamics of the power-law system by examining its power-law kinetic representation in a way that bypasses the numerical computations and simulations typically associated with nonlinear ODEs. Utilizing established tools and findings from CRNT, the system is analyzed with minimal dependence on particular parameters, as this approach does not rely on rate constants and handles kinetic orders symbolically. 

To clarify the concepts discussed in this section, we recall and use as an example the analysis conducted in \cite{FMRL2018,FOME2023,MJS2018} on the power-law representation of the pre-industrial carbon cycle of Anderies et al. \cite{ANDE2013}. The dynamics of the system is described by the following ODE system:
\begin{equation}\label{eq:Anderies}
\left.
 \begin{array}{cl}
\dot{A_1}&=k_1A_1^{p_1}A_2^{q_1} - k_2A_1^{p_2}A_2^{q_2}  \\
\dot{A_2 }&= k_2A_1^{p_2}A_2^{q_2} - k_1A_1^{p_1}A_2^{q_1} - a_mA_2 +a_m\beta A_3  \\
\dot{A_3} &=  a_mA_2 - a_m\beta A_3 \\
 \end{array}
 \right \},
\end{equation}
where $A_1$, $A_2$, and $A_3$ denote the carbon quantities in land biota, the atmosphere, and the ocean, respectively. The schematic representation of carbon transfer is recalled in Figure \ref{fig:anderies}.

\begin{figure}[t]%
\centering
\includegraphics[width=0.7\textwidth]{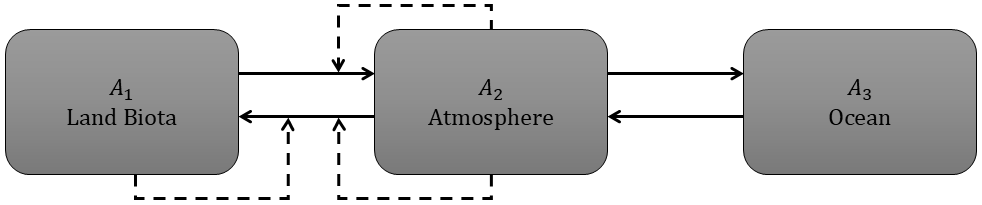}
\caption{The pre-industrial carbon cycle model based from Anderies et al. \cite{ANDE2013}. In the box model, the boxes represent the different pools, solid arrows indicate the transfer of carbon from one pool to another, and dashed arrows indicate the pools that influence a carbon transfer.}\label{fig:anderies}
\end{figure}

\subsection{Chemical reaction network representation}\label{section: CRN}
A chemical reaction network or CRN  is a finite collection of interdependent reactions occurring simultaneously. In a broader context, it can represent any system whose evolution relies on transforming its elements into different elements. The fundamental element of a chemical reaction is known as a \textbf{species}. Chemical species can represent a variety of entities, such as chemical elements, molecules, or proteins. In this study, the species symbolize the different carbon pools present in the system. A \textbf{complex} is defined as a nonnegative linear combination of these species. In other words, a complex is a set of species with associated nonnegative coefficients referred to as \textbf{stoichiometric coefficients}. Typically, a chemical reaction is expressed as:
$$
\text{Reactant complex} \rightarrow \text{Product complex},
$$
where the set of species on the left side of the equation (the \textbf{reactant complex}) are consumed or transformed to form the set of species on the right side (the \textbf{product complex}). Each complex in a CRN can be represented as a vector in a space known as \textbf{species space}, where the coordinates indicate the coefficients or stoichiometry of the various species. Consequently, each reaction can be associated to a vector called the reaction vector.  A \textbf{reaction vector} is formed by subtracting the reactant complex vector from the product complex vector.

Given a box model of a global carbon cycle, a corresponding \textbf{CRN representation} can be set up using the procedure proposed by Arceo et al. \cite{AJMSM2015}. In this approach, one associates the reaction $A_i \rightarrow A_j$ to the carbon transfer from pool $A_i$ to pool $A_j$. Moreover, if the carbon transfer is influenced by some carbon pools (as indicated by the dashed arrows in the schematic diagram), say $\sum A_k$, all these species are added to both sides of $A_i \rightarrow A_j$ to form the chemical reaction 
\begin{equation}\label{eq:CRNrep}
\underbrace{A_i+ \left( \sum A_k \right)}_{\mbox{reactant complex}} \rightarrow \underbrace{A_j + \left( \sum A_k \right) }_{\mbox{product complex}}
\end{equation}
This process preserves the coordinates of the reaction vectors, which is important in describing the dynamics of the whole system (see Section \ref{sec:PL}).

In the box model of the pre-industrial model shown in Figure \ref{fig:anderies}, the transfer of carbon from the atmosphere $A_2$ to land $A_1$ is influenced by $A_1$ and $A_2$. Thus, the reaction associated with this process is  $A_2+ \{ A_1+A_2 \} \rightarrow A_1+\{A_1+A_2\}$ or simply $A_1+2A_2 \rightarrow 2A_1+A_2$. The carbon transfer from land to atmosphere is represented by the reaction $A_1 +A_2 \to 2A_2$ because the process is influenced by $A_2$. The movement of carbon between the ocean and the atmosphere is assumed to be unaffected by the carbon pools. Hence, the reversible reaction $A_2 \rightleftarrows A_3$ is obtained. In summary, a CRN representation of the Anderies pre-industrial system  is the following network consisting of three species -- $A_1$, $A_2$, and $A_3$ -- and four reactions (each denoted by $R_i$):

\begin{equation}\label{eq: CRN}
\begin{blockarray}{lcccc}
\text{Reaction} & \text{Reactant} & \text{Product} & \text{Reaction vector} & \text{Reaction rate}  \\
R_1: A_1+2A_2 \rightarrow 2A_1+ A_2 & [1, 2, 0]^\top & [2, 1, 0]^\top & [1, -1, 0]^\top  & k_1A_1^{p_1}A_2^{q_1}\\
R_2: A_1+A_2 \rightarrow 2A_2 & [1, 1, 0]^\top & [0, 2, 0]^\top & [-1, 1, 0]^\top & k_2A_1^{p_2}A_2^{q_2} \\
R_3: A_2 \rightarrow A_3 & [0, 1, 0]^\top & [0, 0, 1]^\top & [0, -1, 1]^\top & a_m A_2\\
R_4: A_3 \rightarrow A_2 & [0, 0, 1]^\top & [0, 1, 0]^\top & [0, 1, -1]^\top & a_m\beta A_3\\
\end{blockarray}
\end{equation}

When considered as a directed graph, a CRN is classified as \textbf{weakly reversible} if the presence of a path from one complex $C_i$ to another complex $C_j$ guarantees a path from $C_j$ back to $C_i$. Clearly, the CRN in (\ref{eq: CRN}) is not weakly reversible. 
A collection of complexes interconnected by arrows is known as a \textbf{linkage class}. In the above CRN, there are three linkage classes: $\{A_1 + 2A_2 \rightarrow  2A_1 + A_2\}$, $ \{ A_1+A_2 \rightarrow 2A_2 \}$ and $\{A_2 \rightleftarrows A_3\}$.

The span or the set of all possible linear combinations of the reaction vectors is called the \textbf{stoichiometric subspace} of the network. The \textbf{rank} of a CRN refers to the dimension of the stoichiometric subspace (i.e., the maximum number of linearly independent reaction vectors). 
In (\ref{eq: CRN}), its  stoichiometric subspace has 2 basis vectors namely $\{ [1, -1, 0]^\top, [0, 1, -1]^\top  \}$. Hence, the rank of the CRN is 2.

Many significant outcomes in CRNT focus on the nonnegative structural index known as \textbf{deficiency}. The deficiency of a CRN is the non-negative integer determined by subtracting the number of linkage classes $\ell$ and the rank $s$ from the total number of complexes $n$. This index remains unaffected by the size of the network. Even large or complex CRNs can exhibit a deficiency of zero. The deficiency reflects the degree of `linear independence' in reactions; a greater deficiency indicates a lower level of linear independence (\cite{SHFE2012}).

The CRN of the pre-industrial carbon cycle model shown (\ref{eq: CRN}) has deficiency $\delta=1$ since it has $n=5$ complexes, $\ell=2$ linkage classes, and rank $s=2$. In \cite{MJS2018, FOME2021}, a dynamically equivalent deficiency-zero CRN representation of this system was introduced by reaction translation (as described in \cite{JOHNSTON2014}) but without changing the reaction vector. Specifically, the second reaction $A_1 +A_2 \to 2A_2$ is translated by by adding $A_1$ to both sides of the reaction. Consequently, the CRN becomes a reversible (and hence, weakly reversible) deficiency-zero network:

\begin{equation}\label{eq:Anderies2}
\left.
 \begin{array}{rl}
A_1+2A_2 & \rightleftarrows 2A_1+A_2 \\
A_2 & \rightleftarrows A_3.
 \end{array}
 \right.
\end{equation}

\subsection{Power-law kinetics}\label{sec:PL}

The dynamics of a network are governed by a set of nonlinear ordinary differential equations (ODEs), derived from the CRN along with the definition of the reaction rate functions, or kinetics. This set of ODEs can be expressed in vector notation, summarizing a group of equations, where each equation corresponds to the evolution of a specific species in the CRN. Formally, the system may be expressed as: 
\begin{equation}\label{eq:ode}
\dot{x}= \sum_\mathscr{R} \kappa_{y \rightarrow y'} (x) (y'-y)    
\end{equation}
where $x$ denotes the vector of species composition. The overdot indicates differentiation with respect to time. The scalar $\kappa_{y \rightarrow y'} (x)$ encodes the rate at which the reaction $y \rightarrow y'$ occurs. The quantity $(y'-y)$ pertains to the reaction vector associated with the reaction $y \rightarrow y'$. The symbol $\mathscr{R}$ denotes the set of all reactions in the given CRN and its presence under the summation sign tells that the sum is taken over the reactions in the CRN. A \textbf{positive steady state} of the system is a positive species composition $x$ for which $\dot{x}=0$. 

Equation (\ref{eq:ode}) highlights the significance of the stoichiometric subspace in a CRN, as it sets limits on the system's dynamics. Although the species compositions change over time, their trajectories cannot freely roam throughout the species space. That is, the concentrations of the species are restricted to the translations of the stoichiometric subspace, known as \textbf{stoichiometric compatibility classes}. A system is \textbf{multistationary} or has the capability for multiple steady states if there is at least one stoichiometric compatibility class with at most two distinct positive steady states. Otherwise, the system is \textbf{monostationary}.

The CRN representation of a global carbon cycle must be endowed with \textbf{power-law kinetics} in order to reflect the dynamics described in Section \ref{RNCDR}; that is, the functions that govern all the reactions are power-law functions. The power law functions of a CRN representation are encoded using the \textbf{kinetic order matrix} $F$, where entry $F_{ij}$ encodes the kinetic order of the $j$th species in the $i$th reaction. Thus, for power law kinetic systems, the rate functions described in the ODE system in (\ref{eq:ode}) can be specified as
$$\displaystyle K_{i}(x)=\kappa_i x^{F_{i,*}} \quad \text{for all } i \in \mathscr{R},$$
where $F_{i,*}$ is the row vector containing the kinetic orders of the species of the reactant complex in the $i$th reaction. Hence, given the power-law rates in (\ref{eq: CRN}), the kinetic order matrix of the current system is 
\begin{equation*}
F= \begin{blockarray}{cccl}
A_1 & A_2 & A_3  & \\
\begin{block}{[ccc]l}
p_1 & q_1 & 0 & R_1 \\
p_2 & q_2 & 0 &  R_2 \\
0 & 1 & 0 &  R_3 \\
0 & 0 & 1 &  R_4 \\
\end{block}
\end{blockarray}.
\end{equation*}

Let $N$ be the matrix, called \textbf{stoichiometric matrix}, whose columns are the reaction vectors of the CRN. Then the system of ODEs specified in Equation (\ref{eq:ode}) can be written as
$$
\dot{x}=NK(x),
$$
where $K(x)$ is the vector which contains the reaction rates. To illustrate this, we note that the ODE system of the kinetic representation of the Anderies pre-industrial system is
$$
\begin{bmatrix} \dot{A_1} \\ \dot{A_2} \\ \dot{A_3} \end{bmatrix}= \begin{bmatrix} 1 & -1 & 0 & 0 \\ -1 & 1 & -1 & 1 \\ 0 & 0 & 1 & -1 \end{bmatrix}\begin{bmatrix} k_1A_1^{p_1}A_2^{q_1} \\ k_2A_1^{p_2}A_2^{q_2} \\ a_m A_2 \\ a_m\beta A_3
\end{bmatrix},
$$
which is clearly similar to the ODE system in (\ref{eq:Anderies}) indicating the dynamical equivalence of the kinetic representation and the original system.
\\
\begin{remark}
Although the preliminary concepts presented here are fundamental, they are not exhaustive, particularly as preparation for the mathematical proofs outlined in the paper. For a formal presentation of the concepts, notations, and results in CRNT needed for the proof of the results in this study, refer to Appendix \ref{appendix: CRNT}.
\end{remark}

\section{The Reaction Network Carbon Dioxide Removal (RNCDR) Framework} \label{RNCDR}

In this section, we describe the structure of a Reaction Network Carbon Dioxide Removal (RNCDR) kinetic system. We highlight two components common to all models, the Anderies pre-industrial subsystem and the fossil fuel emission reaction. We then elaborate the remaining Carbon Dioxide Removal (CDR)-specific network and kinetics elements. We introduce the concept of CDR classes and illustrate how the Anderies decomposition of an RNCDR network may impact class characteristics using the Direct Air Capture (DAC) system as an example.

\subsection{A brief review of Anderies pre-industrial systems}\label{preindustrial}

The definition and some basic properties of an Anderies pre-industrial kinetic system are provided in the previous section. The concept of an Anderies pre-industrial system was derived in several steps from the model of Anderies et al. \cite{ANDE2013}, which serves as the formal basis of the influential Science publication by Steffen et al. \cite{STEF2015}. In \cite{FMRL2018}, a power-law approximation was constructed to model the dynamics of the system. The dynamically equivalent Anderies pre-industrial system, whose CRN representation is given in Eq. (\ref{eq:Anderies2}),  was introduced in \cite{FOME2021} and analyzed in detail in \cite{FOME2023}, where the Anderies system class concept was also introduced.

For $q_2-q_1\neq 0$, the ratio $R = \dfrac{p_2- p_1}{q_2-q_1}$ defined three classes: \textbf{positive}, \textbf{negative}, or \textbf{null} based on whether $R$ is positive, negative, or equal 0. For $p_2-p_1=0$, we now introduce a further class:
\\
\begin{definition}
A $\bm{Q}$\textbf{-null} Anderies preindustrial system is one with $p_2-p_1\neq 0$ and $q_2-q_1=0,$, that is, $Q = \dfrac{q_2- q_1}{p_2-p_1}$ is defined but equal to zero. We denote the class with $AND_{Q_0}.$    
\end{definition}
$ $
\begin{remark}
Henceforth, we omit the adjective “pre-industrial” and simply say “Anderies system”.  For clarity, we refer to null systems as P-null systems and denote the class with $AND_{P_0}.$ Table 1 provides additional notation and terminology for the Anderies systems.    
\end{remark}
$ $
\\
\indent In \cite{FOME2023}, it was shown that an Anderies $P$-null system is monostationary and demonstrates absolute concentration robustness in the species $A_1$. \textbf{Absolute concentration robustness} (ACR) is a condition where the concentration of a species within a network attains the same value across all positive steady states, regardless of the initial conditions (\cite{SHFE2010}). The following propositions show that $Q$-null systems have properties analogous to $P$-null systems:
\\
\begin{proposition}
Any $AND_{Q_0}$ system is monostationary and has ACR only in $A_1$.
\end{proposition}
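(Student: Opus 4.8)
The plan is to argue directly from the steady‑state equations of the ODE system (\ref{eq:Anderies}), specialised to the $Q$‑null hypothesis $q_1 = q_2 =: q$ and $p_2 - p_1 \neq 0$, combined with the single linear conservation law coming from the stoichiometric subspace. First I would record that conservation law: summing the three equations of (\ref{eq:Anderies}) gives $\dot{A_1}+\dot{A_2}+\dot{A_3}=0$, so $[1,1,1]^\top$ spans the orthogonal complement of the rank‑$2$ stoichiometric subspace, and every stoichiometric compatibility class has the form $\{(A_1,A_2,A_3)\in\mathbb{R}_{>0}^3 : A_1+A_2+A_3=T\}$ for some total carbon $T>0$.

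Next I would solve for a positive steady state. Setting $\dot{A_1}=0$ and using $q_1=q_2$, the common factor $A_2^{q}$ cancels, leaving $k_1 A_1^{p_1}=k_2 A_1^{p_2}$, i.e. $A_1^{p_2-p_1}=k_1/k_2$; since $p_2-p_1\neq 0$ this has the unique positive solution $A_1^{*}=(k_1/k_2)^{1/(p_2-p_1)}$, which depends only on the rate constants. Setting $\dot{A_3}=0$ forces $A_2=\beta A_3$, and substituting $A_1=A_1^{*}$ and $A_2=\beta A_3$ into $A_1+A_2+A_3=T$ pins down $A_3^{*}=(T-A_1^{*})/(1+\beta)$ and $A_2^{*}=\beta(T-A_1^{*})/(1+\beta)$. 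One then checks that $\dot{A_2}=0$ holds automatically at this point (indeed $\dot{A_2}=-\dot{A_1}-\dot{A_3}$). Hence, for each $T$ with $T>A_1^{*}$ the compatibility class contains exactly one positive steady state, so the system is monostationary; and since $A_1^{*}$ is the same across all positive steady states, independent of $T$ and therefore of the initial condition, the system has ACR in $A_1$.

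For the "only in $A_1$" part I would simply exhibit the $T$‑dependence: two distinct admissible totals $T\neq T'$ (both exceeding $A_1^{*}$) yield positive steady states whose $A_2$‑ and $A_3$‑coordinates differ, so neither $A_2$ nor $A_3$ is ACR. The only point needing a little care is positivity/existence — guaranteeing $T>A_1^{*}$ so that the computed steady state actually lies in the positive orthant — but this is a mild nondegeneracy restriction on which compatibility classes are relevant rather than a genuine obstacle, and the whole argument runs parallel to the $P$‑null case treated in \cite{FOME2023}, with the roles of the $A_1$‑exponents $p_1,p_2$ and the $A_2$‑exponents $q_1,q_2$ interchanged.
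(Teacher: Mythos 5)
Your proof is correct, and the monostationarity half is essentially the paper's own argument: the authors likewise set the ODEs to zero, obtain $A_1=(k_1/k_2)^{1/(p_2-p_1)}$, $A_3=A_2/\beta$, and use the conservation law $A_1+A_2+A_3=A_0$ (with $S^\perp=\mathrm{span}\{[1,1,1]^\top\}$) to pin down $A_2=\tfrac{\beta}{\beta+1}(A_0-A_1)$, concluding that two equilibria in the same class must coincide. Where you genuinely diverge is the ACR half. The paper does not argue from the parametrization at all there; instead it invokes the fact that any Anderies system is a PLP system and applies the Species Hyperplane Criterion of Lao et al., computing $\widetilde{S}^\perp=\mathrm{span}\{[-Q,1,1]^\top\}$ and using $Q=0$ to conclude that exactly $A_1$ is ACR. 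You instead read everything directly off the explicit steady-state formulas: $A_1^*$ depends only on rate constants, hence is ACR, while $A_2^*$ and $A_3^*$ vary with the total $T$, hence are not. Your route is more elementary and self-contained (it needs no PLP machinery and, as a bonus, makes the positivity condition $T>A_1^*$ explicit, which the paper glosses over); the paper's route buys generality, since the hyperplane criterion is the tool they reuse for the $Q$-null DAC system and other RNCDR extensions where an explicit global parametrization is less convenient. Both arguments are sound.
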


\begin{proof}
By setting the ODE system to 0, we obtain the following parametrization of equilibria: $A_1 = \left(\dfrac{k_1}{k_2}\right)^{\frac{1}{p_2-p_1}}$, $A_2$ a free parameter, and $A_3 = \left(\dfrac{1}{\beta}\right)A_2.$  If $A_0$ denotes the total amount, which in this case is simply the sum of the coordinates since $S^\perp = \text{span } \{ [1,1,1]^\top \}$, then $A_2 = \left( \dfrac{\beta}{\beta +1} \right) (A_0 -A_1).$ If another positive equilibrium is in the same stoichiometric class, then $A_2'= A_2,$ implying coincidence of the equilibria. Hence, any $AND_{Q_0}$ system is monostationary.

On the other hand, since any Anderies system
is a PLP system, the species hyperplane criterion for ACR \cite{LLMM2022} guarantees the ACR in $A_1$ (see Appendix \ref{appendix:PLP}). Since $\widetilde{S}^\perp = \text{span } \{ [-Q, 1, 1]^\top \}$ and $Q = 0,$ it follows from the criterion that $A_1$ is an ACR species while $A_2,A_3$ are not. 
\end{proof}

As we will see in the following definitions, any CDR system in the RNCDR framework contains an Anderies subsystem. The following table collects additional notation and terminology related to Anderies systems:

\begin{table}[h]
\centering
\begin{tabular}{|c|c|}
\hline
 \textbf{Symbol} & \textbf{Name}  \\
\hline
$p_1, p_2$ & Kinetic orders of land biota interaction \\
\hline
$p_1$& Land biota photosynthesis interaction ($p$-interaction)  \\
\hline
$p_2$& Land biota respiration interaction ($r$-interaction) \\
\hline
$q_1, q_2$ & Kinetic orders of atmosphere interaction  \\
\hline
$q_1$ & Atmosphere photosynthesis interaction  ($p$-interaction) \\
\hline
$q_2$ & Atmosphere respiration interaction  ($r$-interaction) \\
\hline
$p_2-p_1$ & Land biota $r$-$p$-interaction difference\\
\hline
$q_2-q_1$ & Atmosphere $r$-$p$-interaction difference \\
\hline
$R=\dfrac{p_2-p_1}{q_2-q_1}, q_2-q_1 \neq 0$ & Land biota-atmosphere $r$-$p$-interaction difference ratio \\
\hline
$Q=\dfrac{q_2-q_1}{p_2-p_1}, p_2-p_1 \neq 0$ & Atmosphere-land  biota $r$-$p$-interaction difference ratio\\
\hline
\end{tabular}
\caption{Notations Related to Anderies Systems}
\end{table} 

\subsection{Species set of an RNCDR network}\label{species}

In addition to Anderies species $A_1, A_2$ and $A_3$, the species set of an RNCDR system contains $A_4$ denoted as the Total Carbon Stock  (TCS) and $A_i$, the CDR carbon storage for each member of the CDR portfolio.

\subsubsection{Total Carbon Stock}\label{total}

The total carbon is an extension of the  geological stock concept to account for two properties of CDR technologies. Some CDR systems such as those based on pyrolysis of biomass output both long-term storable carbon (biocharcoal) and short-term storable carbon (biofuels) with only the former accounting for fossil fuel emission. Furthermore, there are CDR processes such as weathering which output only inorganic carbon and does not contribute to fossil fuel emission too.

The total carbon stock hence contains the carbon subpools: long-term organic (contains the geological stock), long-term inorganic and short-term (contains both organic and inorganic). Only the long-term organic is considered as input for the fossil fuel emission kinetics (see Section \ref{kinetics}).

\subsubsection{CDR Storage Species}\label{CDR}

Next, we fix the species notation for the CDR-specific storage as shown in Table \ref{table:storage}. This convention facilitates not only the assignment of parameters to each  CDR method (see Table \ref{table:parameter}), it also minimizes changes to reaction network notation when combinations of CDR methods, i.e., portfolios are modeled.
\\
\begin{remark}
\begin{enumerate}
\item [1.] The following assignments have been made for potential extension of the pre-industrial model:
\begin{itemize}
    \item[-] $A_5$ for rocks or weathering pool
    \item[-] $A_6$ for deep ocean (ocean floor) system
    \item[-] $A_7$ for wetlands
\end{itemize}

\item [2.] To avoid long vectors with many zeros in between, we write e.g., $[A_1,A_2,A_3,A_4,A_9,A_{17}]$ for carbon composition in a system with DAC and DOC. 
 \end{enumerate}   
\end{remark}

\begin{table}[h]
\centering
\begin{tabular}{|c|c|}
\hline
 \textbf{Symbol} & \textbf{Name}  \\
\hline
$A_8$ &  Bioenergy with Carbon Capture and Storage (BECCS) Storage \\
\hline
$A_9$ & Direct air capture (DAC) Storage\\
\hline
$A_{10}$ & Enhanced Weathering (EW) Storage\\
\hline
$A_{11}$ & Biochar Storage\\
\hline
$A_{12}$ & Ocean Fertilization (OF) Storage\\
\hline
$A_{13}$ & Soil carbon sequestration (SCS) Storage\\
\hline
$A_{14}$ & Wetland Restoration (WR) Storage\\
\hline
$A_{15}$ & Afforestation/reforestation (AR) Storage\\
\hline
$A_{16}$ & Ocean alkalinization (OA) Storage\\
\hline
$A_{17}$ & Direct ocean capture (DOC) Storage\\
\hline
\end{tabular}
\caption{Notations for CDR-specific storage}
\label{table:storage}
\end{table}

\subsection{Complexes and reactions of an RNCDR}\label{complexes}

The common complexes and reactions of RNCDR systems consist of those of the Anderies subnetwork as well as $A_4$ and the fossil fuel emission reaction $A_4\rightarrow A_2.$ 
The system's carbon capture subnetwork will contain at least one reaction since there is an outgoing reaction from the carbon source for removal and an incoming  $\rightarrow A_i$ in it (which may coincide as in DAC). The carbon storage subnetwork will contain one reaction $A_i \rightarrow A_4$ for each CDR method in the portfolio. Hence, each RNCDR system will have at least 5 species, 6 complexes and 7 reactions.

\subsection{Kinetics of an RNCDR system}\label{kinetics}

For the kinetics discussion, it is useful to introduce the following standard numbering of the common reactions:
\begin{align*}
R_1: & A_1 + 2A_2 \rightarrow 2A_1 + A_2\\
R_2: & A_2 + 2A_1 \rightarrow A_1 + 2A_2\\
R_3: & A_2 \rightarrow A_3\\
R_4: & A_3 \rightarrow A_2\\
R_5: & A_4 \rightarrow A_2   
\end{align*}

We also denote the CDR-specific storage reactions with $R_{i,6}: A_i \rightarrow A_4,$ for $i= 8,\dots,17$

\subsubsection{TCS parameters of a CDR technology}\label{parameters}

We now introduce two parameters for each CDR method to specify the storage type of their captured carbon. For each CDR species $A_i$, the parameter $\lambda_i$ denotes the percentage of long-term carbon in its output. On the other hand, the parameter $\mu_i$ indicates the percentage of inorganic carbon in the output.
\\
\begin{remark}
For CDR technologies that store their fully organic output in geological stock such as BECCS and DAC, we have  $\lambda_i = 1$ and  $\mu_i=0.$
\end{remark}
$ $
\\
Table \ref{table:parameter} below collects the parameter values used in the current RNCDR framework.

\begin{table}[h]
\centering
\begin{tabular}{|c|c|c|c|}
\hline
 NET & Physical Storage & $\lambda_i$ & $\mu_i$ \\
\hline
BECCS $(A_8)$ & CO$_2$ injected into the geological stock & 1 & 0\\ 
\hline
DAC $(A_9)$ & CO$_2$ injected onto the geological stock & 1 & 0\\ 
\hline
EW $(A_{10})$ & Rock spread primarily on beaches (other fields) & 1 & 1\\ 
\hline
Biochar $(A_{11})$ & Biocharcoal in soil (0.5), biofuel in geostock (0.5) & 1 & 0.5\\ 
\hline
OF $(A_{12})$ & Deep ocean & 1 & 0\\ 
\hline
SCS $(A_{13})$ & Soil (microbial enhancement) & 0.01 & 0\\ 
\hline
WR $(A_{14})$ & Wetland floor & 0.01 & 0.9\\ 
\hline
AR $(A_{15})$ & Sequence of trees and soil & 0.5 & 0\\ 
\hline
OA $(A_{16})$ & Deep ocean & 1 & 1\\ 
\hline
DOC $(A_{17})$ & CO$_2$ injected into the geological stock & 1 & 0\\ 
\hline
\end{tabular}
\caption{Parameter Values of the Current RNDCR Framework}
\label{table:parameter}
\end{table} 

\subsubsection{Kinetics of the emission reaction $R_5$}\label{emission}

In this section, we discuss the kinetics of the emission reaction $A_4\rightarrow A_2$ in the single CDR technology case in detail and then indicate briefly how the results extend to the general case.
In the single  CDR technology case, the emission kinetics 
$$K_5(A_1,A_2,A_3,A_4,A_i) = k_5(A_4-((1- \lambda_i)+ \mu_i\lambda_i)A_i).$$
Clearly, in general, the kinetics is not in power law form. To ensure this, we apply a power law approximation: this can be done in the 2-variable form recalled below from the Appendix of \cite{FMRL2018} since the kinetics depends only on the variables $A_4$ and $A_i.$

To keep the discussion here simple, consider a rate function that depends on two variables, say $V(X_1, X_2)$. To compute the power-law approximation, we start by choosing an operating point, such as a steady state. linearization in logarithmic coordinates reveals that $V$ can be approximated by a product of power-law functions of the form
$$V=\alpha X_1^pX_2^q.$$

The kinetic orders $p$ and $q$ are the slopes of the approximating function in logarithmic coordinates and are computed through partial derivatives of $V$ with respect to $X_1$ or $X_2$, respectively. They are given by 
$$p=\dfrac{\partial V}{\partial X_1}\cdot \dfrac{X_1}{V}$$
and 
$$q=\dfrac{\partial V}{\partial X_2}\cdot \dfrac{X_2}{V}.$$

Applying this to $K_5$ and denoting with $e_i$ the kinetic order of $A_4$ and $f_i$ the kinetic order of $A_i$ in the power law approximation, we obtain the following proposition: 
\\
\begin{proposition}\label{opp point}
For any positive steady state as operating point with coordinates $(A_{4_0}, A_{i_0}),$
\begin{enumerate}
    \item [(i)] $e_i \geq 1$ and $f_i \leq 0$
    \item [(ii)] $e_i + f_i =1$
    \item [(iii)]  $e_i=1, f_i = 0$ iff emission is linear mass action iff $\lambda_i= 1, \mu_i = 0$.
\end{enumerate}
\end{proposition}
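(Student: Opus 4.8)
The plan is to compute the two kinetic orders $e_i$ and $f_i$ explicitly from the formulas $p = (\partial V/\partial X_1)(X_1/V)$ and $q = (\partial V/\partial X_2)(X_2/V)$ applied to $V = K_5 = k_5\bigl(A_4 - cA_i\bigr)$ where $c := (1-\lambda_i) + \mu_i\lambda_i$, evaluated at a positive steady state $(A_{4_0}, A_{i_0})$. First I would record that $\partial K_5/\partial A_4 = k_5$ and $\partial K_5/\partial A_i = -k_5 c$, so that $e_i = k_5 A_{4_0}/K_5(A_{4_0},A_{i_0})$ and $f_i = -k_5 c\, A_{i_0}/K_5(A_{4_0},A_{i_0})$. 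Adding these gives $e_i + f_i = k_5(A_{4_0} - cA_{i_0})/K_5(A_{4_0},A_{i_0}) = K_5(A_{4_0},A_{i_0})/K_5(A_{4_0},A_{i_0}) = 1$, which is part (ii); this is the cleanest step and I would do it first since it is purely algebraic and uses nothing about steady states beyond positivity of the denominator.

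For part (i), I would argue that at a positive steady state the value $K_5(A_{4_0},A_{i_0}) = k_5(A_{4_0} - cA_{i_0})$ must be strictly positive. The reason is that $R_5: A_4\to A_2$ is the only reaction producing or consuming $A_i$ is not quite right — rather, $A_i$ only leaves the system via $R_{i,6}: A_i\to A_4$, whose rate is nonnegative, and $A_4$ is fed only by $R_{i,6}$ while being drained by $R_5$; balancing $\dot A_4 = 0$ forces $K_5 = K_{i,6} > 0$ at a positive steady state (a positive steady state cannot have $K_{i,6}=0$ since that rate is a power-law monomial in positive variables, hence positive). Once $K_5 > 0$ and $A_{4_0}, A_{i_0} > 0$, positivity of $e_i$ is immediate, and since $c \ge 0$ (as $\lambda_i,\mu_i\in[0,1]$) we get $f_i \le 0$; combined with $e_i + f_i = 1$ this yields $e_i = 1 - f_i \ge 1$. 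The main obstacle here is justifying $K_5 > 0$ rigorously: I need to confirm from the network structure described in Sections \ref{complexes}--\ref{kinetics} that the only flux into $A_4$ is $R_{i,6}$ and the only flux out is $R_5$, so that the steady-state balance on $A_4$ indeed gives $K_5 = K_{i,6}$; if there were additional reactions touching $A_4$ in the general portfolio case this would need adjustment, but in the single-CDR case it should be clean.

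For part (iii), I would simply observe the chain of equivalences. First, $e_i = 1, f_i = 0$ happens iff $f_i = 0$ (given $e_i + f_i = 1$), and $f_i = -k_5 c A_{i_0}/K_5 = 0$ iff $c = 0$ (since $k_5, A_{i_0}, K_5$ are all positive). Next, $c = (1-\lambda_i) + \mu_i\lambda_i = 0$ with $\lambda_i,\mu_i \in [0,1]$ forces $1 - \lambda_i = 0$ and $\mu_i\lambda_i = 0$, i.e. $\lambda_i = 1$ and then $\mu_i = 0$; conversely $\lambda_i = 1, \mu_i = 0$ gives $c = 0$. Finally, $c = 0$ means $K_5 = k_5 A_4$, which is a linear mass-action rate in the single species $A_4$, and conversely; its power-law approximation is exact with kinetic order $1$ in $A_4$ and $0$ in $A_i$. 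I would present (iii) as three short iff-statements strung together, each a one-line verification. I expect no real obstacle in (iii) beyond being careful that the equivalence "$c=0$ iff emission is linear mass action" is stated as a property of the rate function $K_5$ itself, not merely of its approximation.
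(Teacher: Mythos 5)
Your proof is correct and follows essentially the same route as the paper's: both compute $e_i$ and $f_i$ directly from the logarithmic-derivative formulas applied to $V=k_5(A_4-cA_i)$ at the operating point and read off (i)--(iii) from the resulting quotients. You are in fact more careful than the paper in two places --- your $f_i$ carries the minus sign that the paper's displayed formula appears to drop, and you justify positivity of the denominator $K_5(A_{4_0},A_{i_0})$ via the steady-state balance on $A_4$, a point the paper leaves implicit.
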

\begin{proof}
    To prove (i) and (ii), apply the formulas for  $p$ and $q$ to $$V = k_5(A_{4_0}-((1-\lambda_i)+\mu_i\lambda_i))A_{i_0})$$ at the given (symbolic) operating point, we obtain 
    \begin{align*}
        e_i & = \dfrac{A_{4_0}}{A_{4_0}-[(1 - \lambda_i)+ \mu_i\lambda_i)]A_{i_0}}; \text{ and} \\
         f_i &= \dfrac{[(1-\lambda_i)+\mu_i \lambda_i)]A_{i_0}}{A_{4_0}-[(1-\lambda_i) +\mu_i\lambda_i)]A_{i_0}}.   
         \end{align*}
    Similarly, the equivalences in (iii) are straightforward calculations.
\end{proof}

\begin{example}
As examples for (ii), we have $\mu_i = 0$ and $\lambda_i= 1$ for DAC, BECCS, OF and DOC (Direct Ocean Capture).
\end{example}

\subsubsection{Structure of the kinetic order matrix}\label{structure}

In the single CDR case, the first 6 rows of the kinetic order matrix are as follows:

\begin{table}[h]
\centering
\begin{tabular}{|c|c|c|c|c|c|}
\hline
 Reaction/Species & $A_1$ & $A_2$ & $A_3$& $A_4$ & $A_i$ \\
\hline
 $R_1$ & $p_1$ & $q_1$ & 0 & 0 & 0 \\
\hline
$R_2$ & $p_2$ & $q_2$ & 0 & 0 & 0 \\
\hline
$R_3$ & 0 & 1 & 0 & 0 & 0 \\
\hline
$R_4$ & 0 & 0 & 1 & 0 & 0 \\
\hline
$R_5$ & 0 & 0 & 0 & $e_i$ & $1-e_i$ \\
\hline
$R_{i,6}$ & 0 & 0 & 0 & 0 & 1 \\
\hline
\end{tabular}
\end{table} 

\noindent The remaining rows are CDR-specific. In current models, most of the remaining reactions have mass action kinetics.

\subsection{Classes and Anderies decomposition of an RNCDR system}\label{classes}

The classes of Anderies systems can be extended to an RNCDR system in the following ways.
Let $R$, $Q$ be the ratios defined by the kinetic orders of the Anderies subnetwork. The RNCDR system is \textbf{positive}, \textbf{negative} and $\bm{P}$\textbf{-null} if $R$ is either positive, negative or 0,  accordingly. The system is $\bm{Q}$\textbf{-null} if $Q = 0.$   

Like any subnetwork, the Anderies subnetwork induces together with its (graph) complement a binary decomposition of the RNCDR system. If this decomposition is independent, properties of the Anderies subnetwork has a considerable impact on those of the whole system. We illustrate this effect by reviewing the results of Fortun et. al \cite{FLMR2024} on the DAC system and extending them to the new class of $Q$-null DAC systems.
We have the following analogous results for $Q$-null Anderies systems:
\\
\begin{proposition}
Any $DAC_{Q_0}$ system is monostationary and has ACR only in $A_1$.
\end{proposition}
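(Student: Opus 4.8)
The plan is to mirror the proof of the analogous proposition for $AND_{Q_0}$ systems, exploiting the fact that the Anderies subnetwork of a DAC system sits inside an \emph{independent} binary decomposition (as asserted in Section \ref{classes}, following \cite{FLMR2024}). Since $\mathrm{DAC}$ has $\lambda_i = 1$ and $\mu_i = 0$, Proposition \ref{opp point}(iii) tells us the emission reaction $R_5$ is linear mass action with $e_i = 1$, $f_i = 0$; so the DAC-specific part of the network is itself a simple mass-action (or near mass-action) piece and the only source of nonlinearity — hence of potential multistationarity — is the Anderies subnetwork. First I would invoke the independence of the Anderies decomposition to reduce the analysis of steady states of the whole DAC system to the steady states of the Anderies subsystem together with the (algebraically simple) equations coming from the complement, as was done for the general DAC system in \cite{FLMR2024}; the $Q$-null hypothesis enters only through the Anderies factor.

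Concretely, the steps would be: (1) write out the ODE system of the $DAC_{Q_0}$ system and set it to zero; (2) solve the equations from the carbon-capture and carbon-storage reactions, which (because $R_5$ is linear and $R_{i,6}: A_i \to A_4$ is mass action) express $A_4$ and $A_i$ in terms of the Anderies variables, or pin them down up to the conservation laws; (3) observe that the remaining equations are exactly those of an $AND_{Q_0}$ system, so by the computation in the proof of the preceding proposition we get $A_1 = (k_1/k_2)^{1/(p_2-p_1)}$ fixed, with $A_2$ a free parameter and $A_3 = A_2/\beta$; (4) use the conservation laws of the DAC system (i.e. a basis of $S^\perp$) to show that fixing the stoichiometric compatibility class forces the remaining free parameter to a unique value, hence monostationarity. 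For the ACR claim, I would again use that a DAC system is a PLP system (this is established in \cite{FLMR2024}) and apply the species hyperplane criterion of \cite{LLMM2022}: compute $\widetilde{S}^\perp$ for the $DAC_{Q_0}$ system, note that the relevant defining vector has the form $[-Q, 1, 1, \dots]^\top$ with $Q = 0$, and read off that $A_1$ is the unique ACR species while $A_2, A_3$ (and the storage species) are not.

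The main obstacle I anticipate is step (4): verifying that the conservation laws of the \emph{enlarged} DAC network, rather than those of the bare Anderies system, still leave only one free parameter once $A_1$ is fixed — in other words, that adding $A_4$ and the DAC storage species $A_9$ together with the reactions $A_4 \to A_2$ and $A_9 \to A_4$ does not create an extra degree of freedom in the steady-state variety that survives intersection with a compatibility class. This is really a rank/dimension bookkeeping computation: one must check that the dimension of the positive steady-state set equals the codimension of the stoichiometric compatibility classes minus one in the $A_1$-direction, exactly as in the $AND_{Q_0}$ case. I expect the independence of the Anderies decomposition to make this transparent — independence guarantees the stoichiometric subspace of the whole network is the direct sum of those of the two pieces, so the conservation-law count is additive — but it should be spelled out, and one should double-check that the power-law approximation of $R_5$ at a steady-state operating point does not interfere (it does not, since for DAC that approximation is exact and linear). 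The ACR part should be routine once the PLP structure and $\widetilde{S}^\perp$ are in hand.
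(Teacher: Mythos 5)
Your proposal is correct and follows essentially the same route as the paper: parametrize the equilibria by extending the $AND_{Q_0}$ parametrization (with $A_1$ fixed, $A_2$ free, and $A_3$, $A_4$, $A_i$ proportional to $A_2$), use the single conservation relation $S^\perp = \mathrm{span}\{[1,1,1,1,1]^\top\}$ to pin down $A_2$ within a stoichiometric class, and then apply the PLP property together with the Species Hyperplane Criterion to $\widetilde{S}^\perp = \mathrm{span}\{[-Q,1,1,1,1]^\top\}$ with $Q=0$ for the ACR claim. The bookkeeping worry you raise in step (4) is resolved exactly as you anticipate — the paper's explicit relation $A_2\left(1+\tfrac{1}{\beta}+\tfrac{k_4}{k_5}+\tfrac{k_4}{k_6}\right)=A_0-A_1$ shows no extra degree of freedom survives.
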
  

\begin{proof}
For $DAC_{Q_0}$, we have a parametrization extending that of the Anderies subnetwork as follows: 
$A_1 = \left(\dfrac{k_1}{k_2}\right)^{\frac{1}{p_2-p_1}}$, $A_2$ a free parameter, $A_3 = \left(\dfrac{1}{\beta}\right) A_2$, $A_4= \left(\dfrac{k_4}{k_5}\right)A_2$, and $A_5= \left(\dfrac{k_4}{k_6}\right)A_2$. We obtain the following expression for $A_2$:
$$A_2\left(1+\dfrac{1}{\beta}+\dfrac{k_4}{k_5}+\dfrac{k_4}{k_6}\right)=A_0-A_1,$$
so that an analogous argument as in the Anderies case shows the monostationarity.

On the other hand, in view of its Anderies decomposition, the equilibria set of a DAC system is a subset of the equilibria set of its Anderies subnetwork. It follows that a DAC system is also a PLP system and $\widetilde{S}^\perp = \text{span } \{ [-Q, 1, 1,1,1 ]^\top \}$. For $DAC_{Q_0}$ systems, $Q = 0,$ so it follows from the Species Hyperplane Criterion that $A_1$ is an ACR species while $A_2, A_3,A_4$ and $A_5$  are not.   
\end{proof}

The following table updates the summary of DAC system properties from \cite{FLMR2024}:

\begin{center}
\small
\begin{tabular}{ |c|c| } 
\hline
Kinetic Property & DAC system\\
\hline
Existence of at least one equilibrium & True for positive, negative and null systems\\
\hline
\multirow{3}*{Capacity for multiple steady states} & Positive DAC: Multistationary \\ 
& Negative DAC: Contains monostationary systems  \\ 
& Null DAC: Monostationary \\ 
\hline
\end{tabular}
\end{center}

\section{RNCDR model of a BECCS system }

This section outlines the results of our model construction and analysis of the negative emission technology known as Bioenergy with Carbon Capture and Storage (BECCS). Alongside afforestation/reforestation (AR), BECCS is widely recognized as one of the most mature NETs.  This is because the mature methods of bioenergy production are supplemented by techniques of carbon capture and storage.  Most pathways/scenarios for achieving the goal of limiting global temperature rise to a maximum of $1.5^\circ$ C consider only BECCS, a limitation which has been criticized by some environmental researchers \cite{STREF021}. In any case, this fact emphasizes the importance of a thorough analysis of the model.

\subsection{Model and basic properties}
The global carbon cycle model dynamics discussed here are based on the following ODE system model.
\begin{equation}\label{eq:BECCS ODEs}
\left.
 \begin{array}{cl}
\dot{A_1}&=k_1A_1^{p_1}A_2^{q_1} - k_2A_1^{p_2}A_2^{q_2} - k_6A_1 \\
\dot{A_2 }&= k_2A_1^{p_2}A_2^{q_2} - k_1A_1^{p_1}A_2^{q_1} +k_5A_4 -a_mA_2 +a_m\beta A_3  \\
\dot{A_3} &=  a_mA_2 - a_m\beta A_3 \\
\dot{A_4} &= k_7A_8 - k_5A_4\\
\dot{A_8} &= k_6A_1 - k_7A_8
 \end{array}
 \right \}
\end{equation}

\noindent This model extends the power-law system derived by Fortun et al. \cite{FMRL2018} for Anderies et al.'s pre-industrial carbon cycle \cite{ANDE2013}. The original three-box model by Anderies et al. considers carbon interactions in the land-atmosphere-ocean system (designated as $A_1$, $A_2$, and $A_3$, respectively) along with industrial carbon transfer processes like fossil fuel combustion, resulting in the linear transfer of carbon geological stock ($A_4$) to the atmosphere. The expanded model, illustrated in Figure \ref{fig:beccs}, integrates BECCS  by introducing an additional box ($A_8$) to store carbon directly sequestered from the atmosphere, with the transfer rate assumed to be linear. The system also accounts for a potential ``leak" (from $A_8$ to $A_4$) to evaluate the CDR performance, even when a leak is present.

\begin{figure}[t]%
\centering
\includegraphics[width=0.8\textwidth]{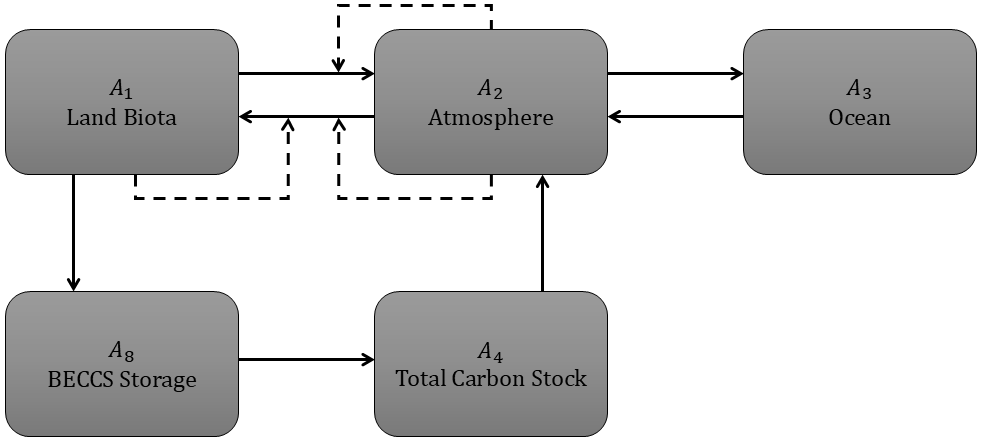}
\caption{A carbon cycle with BECCS.} \label{fig:beccs}
\end{figure}

\subsubsection{A power law kinetic representation of a BECCS system}

The analysis of the carbon cycle with BECCS begins by constructing its power law kinetic representation. This representation, referred to as \textbf{BECCS system} hereafter, involves a CRN with power law kinetics that is dynamically equivalent to the given system; i.e., they have identical ODE system (\ref{eq:BECCS ODEs}).  %By examining its power-law kinetic representation, the dynamics of the BECCS system can be understood with minimal dependence on numerical computations typically associated with nonlinear ODEs. This analysis relies on CRNT tools and results, minimizing the need for specific parameters, as it deals with kinetic orders and rate constants symbolically. 

The desired CRN is set up by extending the deficieny-zero CRN of the pre-industrial Anderies system in Eq. (\ref{eq:Anderies2}) of Section \ref{section: CRN}. The complete CRN representation of BECCS is a network with 5 species, 7 complexes, and 7 reactions:

\begin{equation*}
\begin{blockarray}{clcc}
\text{Carbon transfer} & \text{Reaction} & \text{Reaction vector} & \text{Reaction rate}  \\
A_2 \to A_1 & R_1: A_1+2A_2 \xrightarrow{k_1} 2A_1+A_2 &  [1, -1, 0, 0, 0]^\top  & k_1 A_1^{p_1}A_2^{q_1}\\
A_1 \to A_2 & R_2: 2A_1+A_2 \xrightarrow{k_2} A_1+2A_2 &  [-1, 1, 0, 0, 0]^\top & k_2 A_1^{p_2}A_2^{q_2}\\
A_2 \to A_3 & R_3: A_2 \xrightarrow{a_m} A_3  &  [0, -1, 1, 0, 0]^\top & a_m A_2\\
A_3 \to A_2 & R_4: A_3 \xrightarrow{a_m\beta} A_2 &  [0, 1, -1, 0, 0]^\top  & a_m \beta A_3\\
A_4 \to A_2 & R_5: A_4 \xrightarrow{k_5} A_2 &  [0, 1, 0, -1, 0]^\top & k_5 A_4 \\
A_1 \to A_8 & R_{8,6}: A_1 \xrightarrow{k_6} A_8  &  [-1, 0, 0, 0, 1]^\top & k_6 A_1\\
A_8 \to A_4 & R_{8,7}: A_8 \xrightarrow{k_7} A_4 &  [0, 0, 0, 1, -1]^\top  & k_7 A_8 \\
\end{blockarray}
\end{equation*}

There are two linkage classes $\ell=2$ which are $\mathscr{L}_1= \{ A_1+2A_2 \rightleftarrows 2A_1 + A_2\}$ and $\mathscr{L}_2= \{ A_1\to A_5 \to A_4 \to A_2 \rightleftarrows A_3\}$. The network's rank $s=4$ since
$$ S =\text{span} \left\lbrace   \begin{bmatrix} 1 \\ -1 \\ 0 \\ 0 \\ 0 \end{bmatrix},\begin{bmatrix} 0 \\ 1 \\ -1 \\ 0 \\ 0 \end{bmatrix},\begin{bmatrix} 1 \\ 0 \\ 0 \\ 0 \\ -1 \end{bmatrix},\begin{bmatrix} 0 \\ 0 \\ 0 \\ 1 \\ -1 \end{bmatrix} \right\rbrace. $$
Hence, its deficiency $\delta=n-\ell-s=7-2-4=1$. Table \ref{table:beccsnumbers} lists the network numbers for the BECCS system, obtained from the CRNToolBox \cite{CRNTool}. This information includes details on the number of species, complexes, reactions, linkage classes, rank, and deficiency.

\begin{table}[h] 
    \centering
    \begin{tabular}{|c|c|c|}
    \hline
    Network number & Symbol & BECCS System \\
    \hline
    Species & $m$ & 5 \\
    \hline
    Complex & $n$ & 7 \\
    \hline
    Reactant complexes & $n_r$ & 7 \\
    \hline
    Reactions & $r$ & 7 \\
    \hline
    Irreversible reactions & $r_{irr}$ & 3 \\
    \hline
    Linkage classes & $l$ & 2\\
    \hline
    Strong linkage classes & $sl$ & 1\\
    \hline
    Terminal strong linkage classes & $t$ & 2\\
    \hline
    Rank & $s$ & 4 \\
    \hline
    Reactant rank & $q$ & 5\\
    \hline
    Deficiency & $\delta$ & 1\\
    \hline
    Reactant deficiency & $\delta_\rho$ & 2\\
    \hline
\end{tabular}
\caption{Network numbers of BECCS system from CRNTToolBox \cite{CRNTool}.}
\label{table:beccsnumbers}
\end{table} 

Additionally, Table \ref{table:beccsproperties} presents various properties of the BECCS system, derived from the findings in CRNToolBox \cite{CRNTool}.

\begin{table}[h] 
    \centering
    \begin{tabular}{|c|c|}
        \hline
        Property & Report \\
        \hline
        Concordance &  Discordant \\
        \hline
        Strong concordance & Not strong concordant\\
        \hline
        Conservative & Yes \\
        \hline
        Independent linkage classes & No\\
        \hline
        Positive dependent & Has positively dependent  reaction vectors \\
        \hline
        Positive dependent & Has positively dependent  reaction vectors \\
        \hline
        Regular & Yes\\
        \hline
    \end{tabular}
    \caption{Some properties of BECCS system from CRNTToolBox}
    \label{table:beccsproperties}
\end{table}

The power law kinetics of BECCS is encoded in its corresponding kinetic order matrix:

\begin{equation*}
F= \begin{blockarray}{cccccl}
A_1 & A_2 & A_3 & A_4 & A_5 & \\
\begin{block}{[ccccc]l}
p_1 & q_1 & 0 & 0 & 0 & R_1 \\
p_2 & q_2 & 0 & 0 & 0 & R_2 \\
0 & 1 & 0 & 0 & 0 & R_3 \\
0 & 0 & 1 & 0 & 0 & R_4 \\
0 & 0 & 0 & 1 & 0 & R_5 \\
1 & 0 & 0 & 0 & 0 & R_{8,6} \\
0 & 0 & 0 & 0 & 1 & R_{8,7} \\
\end{block}
\end{blockarray}
\end{equation*}

\subsection{ACR analysis of BECCS}

A network decomposition is said to be \textbf{independent} if its stoichiometric subspace is a direct sum of the subnetwork stoichiometric subspaces. Feinberg \cite{FEIN1987} showed that, in an independent decomposition, the intersection of the set of steady states of the subnetworks is identical to the set of steady states of the main network. The finest independent decomposition (FID) of BECCS contains two subnetworks:
\begin{align*}
\mathscr{P}_1 & =\{ A_1+2A_2 \rightleftarrows 2A_1 + A_2, \; A_1\to A_5 \to A_4 \to A_2 \} \\
\mathscr{P}_2 & =\{ A_2\rightleftarrows A_3 \}  
\end{align*}
The ODE system of $\mathscr{P}_1$ is given by:
\begin{equation*}
\left.
 \begin{array}{cl}
\dot{A_1}&=k_1A_1^{p_1}A_2^{q_1} - k_2A_1^{p_2}A_2^{q_2} - k_6A_1 \\
\dot{A_2 }&= k_2A_1^{p_2}A_2^{q_2} - k_1A_1^{p_1}A_2^{q_1} +k_5A_4  \\
\dot{A_4} &= k_7A_8 - k_5A_4\\
\dot{A_8} &= k_6A_1 - k_7A_8
 \end{array}
 \right \}
\end{equation*}
Setting the equations in the ODE system to 0, we find the following solutions which are partly parametrized by $A_1$. That is, if $A_1=\tau_1>0$,
$$
    A_1 = \tau_1, \quad
    A_2 = \text{Root of } \left\lbrace k_1\tau ^{p_1}A_2^{q_1} - k_2\tau_1^{p_2}A_2^{q_2} - k_5\tau_1 = 0\right\rbrace, \quad
    A_4  = \dfrac{k_5}{k_4}\tau_1, \quad
    A_8 = \dfrac{k_5}{k_6}\tau_1.
$$
On the other hand, the ODE system of $\mathscr{P}_2$ is
\begin{equation*}
\left.
 \begin{array}{cl}
\dot{A_2 }&=  -a_mA_2 +a_m\beta A_3  \\
\dot{A_3} &=  a_mA_2 - a_m\beta A_3 \\
 \end{array}
 \right \}
\end{equation*}
Clearly, at steady state, $A_3=\dfrac{1}{\beta}\tau_2$   if $A_2=\tau_2>0$.
Hence, the steady state of the entire network can be obtained by ``stitching'' the equilibria set of $\mathscr{P}_1$ and $\mathscr{P}_2$: 
\begin{align*}
     A_1 &= \tau_1, \quad
    A_2 =  \tau_2 = \text{Root of } \left\lbrace k_1\tau_1^{p_1}A_2^{q_1} - k_2\tau_1^{p_2}A_2^{q_2} - k_5\tau_1 = 0\right\rbrace  \\
    A_3  &= \dfrac{1}{\beta}\tau_2, \quad
    A_4  = \dfrac{k_5}{k_4}\tau_1, \quad
    A_8 = \dfrac{k_5}{k_6}\tau_1.
\end{align*}

The following result presents a particular condition under which a $P$-null system will demonstrate ACR in both $A_2$ and $A_3$.
\\
\begin{proposition} \label{prop:ACRBECCS}
Let $(\mathscr{N},K)$ be a BECCS system. 
\begin{enumerate} %[(i)]
    \item A P-null BECCS system with $p_1=p_2=1$ has $A_2$ and $A_3$ as its only ACR species. 
    \item All other BECCS systems have no ACR species.
\end{enumerate}
\end{proposition}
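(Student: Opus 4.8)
## Proof Proposal

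The plan is to leverage the finest independent decomposition of the BECCS system, already computed above, together with the equilibrium parametrization it yields, and then apply the Species Hyperplane Criterion for ACR in PLP systems (as was done for the DAC case). First I would establish that the BECCS system is a PLP system: since the decomposition into $\mathscr{P}_1$ and $\mathscr{P}_2$ is independent, Feinberg's result tells us the equilibria set of the whole network is the intersection of those of the subnetworks, and the ``stitched'' parametrization displayed above shows the equilibria form (after taking logarithms) a coset of a linear subspace — precisely the condition for a positive-equilibria log-parametrized (PLP) system. This lets me identify $\widetilde{S}^\perp$, the orthogonal complement of the ``kinetic flux subspace'' spanned by the log-linearized equilibrium relations, and the ACR species are exactly the standard basis vectors lying in $\widetilde{S}^\perp$.

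Next I would compute $\widetilde{S}^\perp$ explicitly from the parametrization. In logarithmic coordinates, the equilibrium conditions give: $\log A_3 = \log A_2 - \log\beta$, $\log A_4 = \log A_1 + \log(k_5/k_4)$, $\log A_8 = \log A_1 + \log(k_5/k_6)$, and one more relation coming from the ``Root of'' equation $k_1\tau_1^{p_1}A_2^{q_1} = k_2\tau_1^{p_2}A_2^{q_2} + k_5\tau_1$. The crucial observation is that this last equation is \emph{not} a single power-law (binomial) equation in general — it has three terms — so the equilibrium set is genuinely log-linear only when that trinomial degenerates. It degenerates to a binomial (hence a linear relation between $\log A_1$ and $\log A_2$) exactly when the $P$-null condition $p_2 - p_1 = 0$ holds \emph{and} the exponents of $\tau_1 = A_1$ in the surviving two terms coincide, i.e. $p_1 = 1$. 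Under ``$P$-null with $p_1 = p_2 = 1$'', the relation becomes $k_1 A_2^{q_1} - k_2 A_2^{q_2} = k_5$, which pins $A_2$ (and hence $A_3$) to a constant independent of $A_1$: this is the ACR in $A_2$ and $A_3$. One then checks via the Species Hyperplane Criterion that $e_1 = e_2 = 0$ for the $A_2$- and $A_3$-hyperplanes while the $A_1$, $A_4$, $A_8$ directions pick up a free parameter $\log\tau_1$, so those three species are not ACR.

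For part (2), I would argue that in every other case the equilibrium set fails to be log-linear in a way that forbids any species hyperplane. If $p_2 - p_1 \neq 0$ (positive, negative, or $Q$-null BECCS), the trinomial cannot be a binomial regardless, so no exact power-law parametrization of $A_2$ in terms of the other variables exists, and no coordinate is fixed; more carefully, one shows the projection of the equilibrium set onto each coordinate axis is a nondegenerate interval. If $p_2 = p_1$ (P-null) but $p_1 \neq 1$, the relation $k_1\tau_1^{p_1} A_2^{q_1} - k_2 \tau_1^{p_1} A_2^{q_2} = k_5 \tau_1$, i.e. $\tau_1^{p_1 - 1}(k_1 A_2^{q_1} - k_2 A_2^{q_2}) = k_5$, ties $A_2$ to $\tau_1$ nontrivially, so $A_2$ varies and is not ACR; meanwhile $A_1, A_4, A_8$ manifestly still carry the free parameter $\tau_1$, and $A_3$ moves with $A_2$ — so no species is ACR. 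I would spell this out by exhibiting, for each parameter regime, two positive equilibria in a common stoichiometric compatibility class differing in every coordinate.

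The main obstacle I anticipate is part (2): ruling out ACR requires more than noting the parametrization ``looks'' nonconstant — one must produce genuinely distinct equilibria in the \emph{same} stoichiometric compatibility class (since the conservation law $S^\perp = \text{span}\{[1,1,1,1,1]^\top\}$ is one-dimensional, all positive states with a fixed total carbon lie in one class, which actually simplifies matters). The technical care lies in verifying that as $\tau_1$ ranges over a small interval the ``Root of'' equation always has a positive solution $A_2 = A_2(\tau_1)$ that genuinely depends on $\tau_1$ whenever $(p_1,p_2) \neq (1,1)$ with $p_1 = p_2$, and in handling the boundary behavior of the trinomial $k_1\tau_1^{p_1}A_2^{q_1} - k_2\tau_1^{p_2}A_2^{q_2} - k_5\tau_1$ as $A_2 \to 0^+$ and $A_2 \to \infty$ to guarantee existence and local monotonic dependence — an implicit-function-theorem argument, but one that needs the sign of $q_2 - q_1$ under control. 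The other delicate point is confirming that the Species Hyperplane Criterion of \cite{LLMM2022} applies verbatim to the BECCS system once PLP-ness is established, including that its hypotheses (e.g. on the reactant subspace or conservativity, which Table \ref{table:beccsproperties} confirms) are met.
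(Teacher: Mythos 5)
Your proposal is correct and, once the opening PLP/Species-Hyperplane framing is set aside (which, as you yourself observe, cannot apply here because the trinomial equilibrium relation is not log-linear in general), it reduces to exactly the paper's argument: use the finest independent decomposition to parametrize equilibria by $\tau_1=A_1$, divide the relation $k_1\tau_1^{p_1}A_2^{q_1}-k_2\tau_1^{p_2}A_2^{q_2}=k_5\tau_1$ by $\tau_1$, and note that the $\tau_1$-dependence cancels precisely when $p_1=p_2=1$, forcing $A_2$ (hence $A_3$) to be invariant, while in every other case a constant $A_2$ would make the left side vary with $\tau_1$ against a constant right side. Your additional care about existence of positive roots and about exhibiting distinct equilibria in a common stoichiometric compatibility class goes beyond what the paper records but is not required by the Shinar--Feinberg notion of ACR used here, which quantifies over all positive steady states rather than within a single compatibility class.
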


\begin{proof}
From the equilibria set described above, we observe that since $\tau_1$ is a free parameter, $A_1$ does not have ACR. Consequently, $A_4$ and $A_8$ are also not ACR species. 
We now consider the equation for $A_2$. By dividing by $A_1=\tau_1$, we obtain 
\begin{align*}
k_1\tau_1^{p_1-1}A_2^{q_1} - k_2\tau_1^{p_2-1}A_2^{q_2} &=k_5 \\
\Rightarrow k_1\tau_1^{p_2-1}\left( \tau_1^{p_1-p_2}A_2^{q_1}-k_2A_2^{q_2}\right) &=k_5
\end{align*} 
In case of (i), the $\tau_1$ terms are equal to  1, and the LHS depends only on $A_2$ and constants while the RHS is a constant. The non-invariance of $A_2$ immediately leads to a contradiction. This shows that $A_2$ and $A_3$ are ACR species. In case of (ii), by taking logarithms, we see that at least one $\tau_1$ term varies. If we assume $A_2$ is a constant at equlibria, i.e. has ACR, we have a varying LHS (with $\tau_1$) and a constant RHS, a contradiction. Hence, there are no ACR species in these remaining cases.

\end{proof}

\subsection{Multistationarity analysis of BECCS systems}

A CRN with stoichiometric matrix $N$ is said to be \textbf{injective} if for any distinct stoichiometrically compatible species vectors $x$ and $y$, we have $NK(x) \neq NK(y)$ for all kinetics $K$ endowed on the CRN. Note that if a CRN is injective, then it is monostationary. In other words, an injective CRN cannot support multiple positive steady states for any rate constants. Wiuf and Feliu \cite{WIUF2013, FELIU2013} established a criterion to identify if a system is injective. 

Let $M=N \; \text{diag}(z) \; F \; \text{diag}(k)$, where $N$ represents the stoichiometric matrix and $F$ is the kinetic order matrix of the PLK system. Consider the symbolic matrix $M^*$ created by using symbolic vectors $k=(k_1,\dots,k_m)$ and $z=(z_1,\dots,z_r)$. Assume $\{ \omega^1,\dots,\omega^d \}$ forms a basis of the left kernel of $N$, and $i_1,\dots,i_d$ represent row indices. Define the $m \times m$ matrix $M^*$ by substituting the $i_j$-th row of $M$ with $\omega^j$. 
\\
\begin{theorem}(\cite{WIUF2013,FELIU2013})\label{theorem:wiuffeliu}
The interaction network with power law kinetics and fixed kinetic orders is injective if and only if the determinant of $M^*$ is a non-zero homogeneous polynomial with all coefficients being positive or all being negative. 
\end{theorem}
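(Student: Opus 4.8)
The plan is to characterize the failure of injectivity in logarithmic coordinates and then recast it as a question about where the polynomial $\det M^*$ can vanish on the positive orthant, following the strategy of Wiuf and Feliu \cite{WIUF2013,FELIU2013}. Injectivity fails exactly when there are distinct positive species vectors $x\neq y$ in a common stoichiometric compatibility class, so $x-y\in S$, together with positive rate constants making $NK(x)=NK(y)$. First I would pass to $w_j=\ln x_j-\ln y_j$, a nonzero vector. The mean value theorem applied to $t\mapsto\ln t$ yields positive $\xi_j$ between $y_j$ and $x_j$ with $x_j-y_j=\xi_j w_j$, and applied to the monomials $K_r$ in log-coordinates (i.e. to $t\mapsto e^t$) yields positive $\eta_r$ with $K_r(x)-K_r(y)=\eta_r\sum_j F_{rj}w_j$. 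Substituting, $x-y\in S$ becomes $W\,\text{diag}(\xi)\,w=0$, where the rows of $W$ are the conservation vectors $\omega^1,\dots,\omega^d$, and $NK(x)=NK(y)$ becomes $N\,\text{diag}(\eta)\,F\,w=0$; of the latter it suffices to retain the $s$ equations indexed by the species not among $i_1,\dots,i_d$, provided those indices are chosen so that the remaining $s$ rows of $N$ still span its row space (equivalently, the $d\times d$ minor of $W$ on columns $i_1,\dots,i_d$ is nonzero) — the natural nondegeneracy hypothesis underlying the construction of $M^*$.

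Stacking these conditions, non-injectivity is equivalent to the existence of a nonzero $w$ annihilated by the $m\times m$ matrix whose $i_j$-th row is $(\omega^j)^\top\text{diag}(\xi)$ and whose other rows are the matching rows of $N\,\text{diag}(\eta)\,F$, for some positive $\xi,\eta$. Right-multiplying by $\text{diag}(\xi)^{-1}$ leaves unchanged both the existence of a kernel vector and the sign of the determinant, and turns this matrix into exactly $M^*$ evaluated at $z=\eta$, $k=\xi^{-1}$, with $\eta,\xi^{-1}$ free over the positive orthant. Hence the system is \emph{not} injective if and only if $\det M^*(z,k)$ has a zero on $\mathbb{R}^r_{>0}\times\mathbb{R}^m_{>0}$. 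The forward implication is the computation just sketched. For the converse, starting from a nonzero kernel vector $w$ of $M^*$ at a point $(z_0,k_0)$ of the positive orthant, I would reconstruct an honest pair $x\neq y$ together with rate constants (which are among the data quantified over): writing $x_j=y_je^{w_j}$, the mean value intermediate points are explicit monotone functions of the starting point $y$ and of the rate constants, so $y$ and the rate constants can be solved for to make these intermediate points equal the prescribed $\xi=k_0^{-1}$ and $\eta=z_0$ exactly, as in \cite{WIUF2013,FELIU2013}.

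It remains to show that $\det M^*$ has no zero on the positive orthant if and only if it is a nonzero polynomial whose monomial coefficients are all of one sign. The "if" direction is immediate, since a nonzero sum of monomials of one sign is strictly of that sign there. For "only if": first, $\det M^*\equiv 0$ makes $M^*$ singular for all parameter values, hence forces non-injectivity, so we may assume $\det M^*\not\equiv 0$; then I would argue that opposite-sign coefficients cannot coexist without producing a positive zero, by exploiting the structure of the determinant. Expanding $\det M^*$ by the generalized Laplace rule along the constant $\omega$-rows and then multilinearly in the columns shows it is homogeneous of degree $s$ in $z$ and of degree $s$ in $k$, and that collecting terms by the squarefree $k$-monomial $\prod_{j\notin T}k_j$ over $d$-subsets $T$ produces genuinely distinct monomials. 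Since ratios of distinct monomials take all positive values, two coefficients of opposite sign can be made to cancel while the remaining terms are made negligible, yielding a zero on the positive orthant by continuity — contradicting the equivalence of the previous paragraph. Combining the two equivalences proves the theorem.

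The step I expect to be the main obstacle is this final one: excluding "hidden-definite" behaviour — the phenomenon by which a polynomial like $a^2-ab+b^2$ stays positive on the orthant despite a negative coefficient — which is false for general polynomials and genuinely depends on the combinatorial shape of $\det M^*$, namely that its monomials of a given multidegree do not recombine so as to mask a sign change. Making the converse realization in the second paragraph fully rigorous (producing bona fide $x,y$ and rate constants from an abstract kernel vector) is the other delicate point, but it reduces to solving the explicit monotone relations above.
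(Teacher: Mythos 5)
The paper does not actually prove this statement: Theorem \ref{theorem:wiuffeliu} is imported verbatim from Wiuf--Feliu \cite{WIUF2013,FELIU2013} and is only \emph{applied} to the BECCS and AR networks, so there is no in-paper proof to compare against. Judged on its own terms, your outline reproduces the original Wiuf--Feliu argument faithfully and correctly: the reduction of non-injectivity to a kernel vector of $M^*$ via the mean value theorem in logarithmic coordinates, the surjectivity of the map from honest pairs $(x,y,\kappa)$ onto the intermediate data $(\xi,\eta)$ (since the logarithmic means scale freely with $y$ and the rate constants), and the reduction of the whole theorem to the question of whether $\det M^*$ vanishes on the positive orthant are all in order, as is your flagging of the nondegeneracy condition on the choice of rows $i_1,\dots,i_d$.

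The one step you leave genuinely open --- excluding ``hidden-definite'' behaviour --- does close, and for a concrete structural reason you almost state: expanding $M$ as the sum of rank-one matrices $z_r\,N_{\cdot r}F_{r\cdot}\,\mathrm{diag}(k)$ and using multilinearity of the determinant in the rows, any assignment using the same index $r$ twice produces two proportional rows and dies, so $\det M^*$ is \emph{square-free} (degree at most one in each $z_r$ and each $k_j$), homogeneous of bidegree $(s,s)$. Every exponent vector of a square-free polynomial is a vertex of the cube $[0,1]^{r+m}$, hence a vertex of the Newton polytope, hence each monomial can be made strictly dominant by a one-parameter monomial scaling of the orthant. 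A negative coefficient therefore forces a negative value somewhere and a positive coefficient a positive value somewhere, and continuity on the connected positive orthant yields a zero. This replaces your ``make two terms cancel while the rest are negligible'' heuristic, which as stated would not survive for a general polynomial (your own example $a^2-ab+b^2$), with an argument that uses exactly the combinatorial feature of $\det M^*$ that makes the theorem true. With that substitution, the proposal is a complete and correct proof along the same lines as the cited source.
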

$ $
\\
We apply this theorem to BECCS:
\\
\begin{proposition}
The BECCS system is injective, and hence monostationary, if any of the following cases hold:
\begin{enumerate}
\item[(i)] $p_1<0$, $p_2>0$, $q_1>0$, and $q_2<0$; %In this case, $R_p<0$.
\item[(ii)] $p_1=p_2=0$, $q_1>0$, and $q_2<0$; or %Here, $R_p=0$.
\item[(iii)] $q_1=q_2=0$, $p_1<0$, and $p_2>0$. %Thus, $R_q=0$.
\end{enumerate}
\noindent For all other cases, the network is not injective.
\end{proposition}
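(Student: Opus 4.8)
The plan is to apply Theorem~\ref{theorem:wiuffeliu} directly. First I would assemble the ingredients: the stoichiometric matrix $N$ whose columns are the seven reaction vectors listed above (a $5\times 7$ matrix), and the kinetic order matrix $F$ displayed at the end of the BECCS section (a $7\times 5$ matrix). Since the rank of the network is $s=4$ and there are $m=5$ species, the left kernel of $N$ is one-dimensional ($d = m - s = 1$); I would compute a basis vector $\omega$ for it. Because the network is conservative (Table~\ref{table:beccsproperties}), one expects $\omega$ to be a strictly positive conservation vector — and indeed one can read off from the reaction vectors that the total carbon $A_1+A_2+A_3+A_4+A_8$ is conserved, so $\omega = [1,1,1,1,1]^\top$. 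Then I would form $M = N\,\mathrm{diag}(z)\,F\,\mathrm{diag}(k)$ with symbolic $z_1,\dots,z_7$ and $k_1,\dots,k_7$, build $M^*$ by replacing one row (say the last) with $\omega^\top$, and compute $\det M^*$ as a polynomial in the $z_i$'s (the $k_i$'s only scale individual monomials positively, so they do not affect sign patterns).

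The bulk of the argument is then a sign analysis of the monomials of $\det M^*$. Expanding the determinant by cofactors along the $\omega$-row and/or using the Cauchy–Binet-type expansion, each monomial is a product $\pm z_{j_1}\cdots z_{j_4}$ (times positive $k$'s) with a combinatorial coefficient that is a sum of products of entries of $F$, i.e.\ a polynomial in $p_1,p_2,q_1,q_2$ (the other kinetic orders are $0$ or $1$). I would tabulate which $4$-subsets of reactions $\{j_1,\dots,j_4\}$ yield nonzero contributions — these correspond to the nonzero maximal minors of the relevant submatrices — and for each record its coefficient as an explicit expression in $p_1,p_2,q_1,q_2$. Injectivity holds iff all these coefficients share a common sign (and the polynomial is not identically zero). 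In the three listed cases one checks that every coefficient has the same sign: e.g.\ in case (i) the hypotheses $p_1<0<p_2$, $q_1>0>q_2$ force each coefficient (typically built from products like $p_2 q_1$, $-p_1 q_2$, $p_2 - p_1$, $q_1 - q_2$, etc.) to be positive; cases (ii) and (iii) are the degenerate limits where one pair of kinetic orders vanishes, and one checks the surviving coefficients are still uniformly signed. For the converse (``for all other cases, not injective''), I would show that outside these three regimes two of the coefficients have strictly opposite signs — most cleanly by exhibiting, for a representative point in each remaining region of the $(p_1,p_2,q_1,q_2)$ parameter space, two monomials of $\det M^*$ with opposite signs — which by Theorem~\ref{theorem:wiuffeliu} rules out injectivity.

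I expect the main obstacle to be bookkeeping rather than conceptual depth: correctly enumerating the nonzero monomials of $\det M^*$ and their coefficients without error, since a $5\times 5$ symbolic determinant with products of kinetic-order variables is error-prone, and then organizing the case distinction ``all other cases'' cleanly. A subtlety worth flagging is the possibility that $\det M^*$ is identically zero or that some coefficient vanishes identically as a polynomial in $p_i,q_i$ — if, say, the coefficients come in pairs $c$ and $-c$ the network would never be injective, and in boundary cases (like $p_1=p_2$, $q_1=q_2$) one must be careful whether a coefficient truly vanishes or merely could vanish for special values; the statement implicitly asserts this does not collapse the three listed open regions. I would organize the write-up by first displaying $\det M^*$ in factored/collected form, then splitting into the cases via a short table of coefficient signs, so the reader can verify the uniform-sign claim at a glance and see the opposite-sign witnesses in the complementary cases.
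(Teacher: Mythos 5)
Your proposal follows essentially the same route as the paper: the authors likewise apply the Wiuf--Feliu determinant criterion, forming $M^*$ (using the conservation relation $A_1+A_2+A_3+A_4+A_8$, i.e.\ the left-kernel vector $[1,1,1,1,1]^\top$), computing $\det M^*$ symbolically (they use the Maple script of Feliu and Wiuf rather than a hand expansion), and then reading off that the two constant terms are always positive while the remaining monomials carry coefficients $-p_1$, $p_2$, $q_1$, $-q_2$, which yields exactly the three sign regimes (i)--(iii). Your plan is a correct blueprint for that computation, including the caveat about vanishing coefficients in the degenerate cases, so it matches the paper's argument in substance.
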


\begin{proof}
Using the computational approach and Maple script provided by the authors in \cite{FELIU2013}, we obtain the determinant of $M^*$ for the BECCS system: 

\begin{align*}
det (M^*)=&-p_1k_1k_2k_4k_5z_1z_3z_5z_7 – p_1k_1k_3k_4k_5z_1z_4z_5z_7 +p_2k_1k_2k_4k_5z_2z_3z_5z_7 \\
&+ p_2k_1k_3k_4k_5z_2z_4z_5z_7 +q_1k_1k_2k_3k_4z_1z_4z_5z_6 + q_1k_1k_2k_3k_5z_1z_4z_6z_7 \\
&+q_1k_2k_3k_4k_5z_1z_4z_5z_7 -q_2k_1k_2k_3k_4z_2z_4z_5z_6 -q_2k_1k_2k_3k_5z_2z_4z_6z_7 \\
&-q_2k_2k_3k_4k_5z_2z_4z_5z_7 +k_1k_2k_4k_5z_3z_5z_6z_7+k_1k_3k_4k_5z_4z_5z_6z_7
\end{align*}

\noindent Since the last two terms of the determinant are always positive, the system is injective for the three cases specified in the proposition. In all other cases, the network is not injective.
\end{proof}

\begin{remark}
If a CRN is not injective, multistationarity does not necessarily follow.    
\end{remark}
$ $
\\
In addition, the CRN representation of BECCS is a ``regular" (in the sense of Feinberg \cite{FEIN1995DOA}), non-weakly reversible network with a deficiency of one. This network is suitable for applying the Deficiency-One Algorithm (DOA) for power-law kinetic systems described in \cite{FMRL2018} to assess its ability to support multiple steady states. Fundamentally, the DOA translated the problem of determining a system's capacity to admit multiple steady states -- a non-linear problem -- into a problem involving a linear system of equations and inequalities. Applying the DOA to BECCS shows that there is a set of non-negative kinetic orders, $p_1, p_2, q_1$, and $q_2$, such that the system has the capacity to admit multiple steady states: 
\\
\begin{proposition} \label{prop:DOA BECCS}
There exist non-negative values for $p_1, p_2, q_1$, and $q_2$ such that BECCS is multistationary.
\end{proposition}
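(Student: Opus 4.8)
The plan is to invoke the Deficiency-One Algorithm (DOA) for power-law kinetic systems, as formulated in \cite{FMRL2018}, since we have already established that the BECCS network is regular, non-weakly reversible, and of deficiency one (see Table \ref{table:beccsproperties} and the discussion preceding the statement). The DOA converts the nonlinear question ``does there exist a kinetics and a pair of distinct stoichiometrically compatible positive steady states?'' into a finite collection of linear feasibility problems: for each admissible sign pattern on the reaction-vector/complex data, one asks whether a certain system of linear equations and strict inequalities in the kinetic-order parameters (and auxiliary sign variables) has a solution. The strategy is therefore to (1) set up the DOA data for the BECCS CRN — the linkage classes, the single deficiency-one partition, and the associated ``upper'' and ``lower'' cuts that the algorithm prescribes; (2) identify at least one sign pattern / confluence vector for which the resulting linear system is consistent; and (3) exhibit an explicit nonnegative choice of $p_1,p_2,q_1,q_2$ (together with the remaining mass-action orders, which are fixed at $1$ for $R_3,R_4,R_5,R_{8,6},R_{8,7}$) that satisfies it, thereby certifying multistationarity.

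Concretely, I would first restrict attention to the nontrivial linkage class $\mathscr{L}_1 = \{A_1+2A_2 \rightleftarrows 2A_1+A_2\}$, whose two complexes carry the symbolic kinetic orders $(p_1,q_1)$ and $(p_2,q_2)$; the rest of the network ($\mathscr{L}_2$, with linear mass-action kinetics throughout) contributes a rigidly parametrized piece, exactly as exhibited in the equilibria computation of Proposition \ref{prop:ACRBECCS}. Reusing that equilibria parametrization is the key simplification: a positive steady state of the whole network is determined by the free parameter $\tau_1 = A_1$ through $A_4 = (k_5/k_4)\tau_1$, $A_8=(k_5/k_6)\tau_1$, $A_3=(1/\beta)\tau_2$, and $\tau_2=A_2$ a root of $k_1\tau_1^{p_1}A_2^{q_1}-k_2\tau_1^{p_2}A_2^{q_2}-k_5\tau_1=0$. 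Multistationarity within a stoichiometric compatibility class then amounts to finding two distinct admissible pairs $(\tau_1,\tau_2)$ and $(\tau_1',\tau_2')$ with the same conserved totals (the network is conservative, with $S^\perp$ one-dimensional up to the obvious linear relations). Passing to logarithms $u=\ln\tau_1$, $v=\ln\tau_2$, the steady-state equation becomes, after dividing by $\tau_1$, a relation of the form $k_1 e^{(p_1-1)u+q_1 v} - k_2 e^{(p_2-1)u+q_2 v} = k_5$; one wants the level set of this (together with the single conservation constraint tying $u,v$) to meet a compatibility class twice. This reduces to a convexity/sign argument: choosing the exponent vectors $(p_1-1,q_1)$ and $(p_2-1,q_2)$ to point in suitably opposite directions makes the left-hand side non-monotone along the conservation line, so that the value $k_5$ is attained at two points — and one then back-solves for rate constants $k_i$ realizing a genuine double intersection. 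This is precisely the kind of construction the DOA automates, and it is cleanest to present it as ``the DOA linear system is feasible for the following data,'' then verify feasibility by the explicit numbers.

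The main obstacle will be the bookkeeping in Step (1)–(2): correctly translating the BECCS network into the DOA's combinatorial input (the algorithm distinguishes several cases according to how the deficiency-one linkage class sits relative to the others, and one must check the ``regularity'' hypotheses that license the power-law version in \cite{FMRL2018}), and then certifying that for the chosen sign pattern the sign-compatibility conditions on the confluence vector are simultaneously satisfiable by \emph{nonnegative} $p_1,p_2,q_1,q_2$ — the nonnegativity is an extra constraint not present in the general DOA and could in principle clash with the sign requirements. I expect this to work out because the needed opposition of exponent directions only requires, e.g., $p_1$ small and $q_1$ large versus $p_2$ large and $q_2$ small (all $\ge 0$), which is exactly the ``discordant'' regime already flagged for this network; so after setting up the linear program I would simply display one feasible nonnegative tuple (for instance a choice with $q_1 > q_2$ and $p_2 > p_1$) and the accompanying rate constants, and check by direct substitution that the steady-state equation has two positive roots in a common compatibility class. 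The remaining routine verification — that these two steady states are indeed stoichiometrically compatible and positive — is then a short computation using the conservation law $A_1+A_2+A_3+A_4+A_8 = \text{const}$ (with the $\beta$- and $k$-weightings as above).
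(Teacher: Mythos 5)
Your overall plan points at the right tool (the power-law Deficiency-One Algorithm of \cite{FMRL2018}), and your proposed shortcut --- using the independent-decomposition equilibria parametrization from the ACR analysis to reduce multistationarity to finding two positive roots of $k_1\tau_1^{p_1-1}\tau_2^{q_1}-k_2\tau_1^{p_2-1}\tau_2^{q_2}=k_5$ on a fixed conservation line --- is a legitimate and arguably more transparent alternative to the paper's route. The paper instead executes the DOA machinery in full (Appendix \ref{appendix:BECCS}): it exhibits a confluence vector $h$, chooses a $\{U,M,L\}$ partition of the reactant complexes, assembles the induced linear equalities and inequalities on $T_{*,i}\cdot\mu$, and checks sign-compatibility of the resulting $\mu=[x,y,y,x,x]^\top$ with the stoichiometric subspace $S$. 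Neither route is carried to completion in your write-up, but the structural reduction you describe is sound.

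The genuine gap is in the one piece of content the proposition actually requires: a concrete feasible choice of kinetic orders. Your heuristic --- that the exponent vectors $(p_1-1,q_1)$ and $(p_2-1,q_2)$ should point in ``suitably opposite directions,'' leading you to propose $p_2>p_1$ together with $q_1>q_2$ --- selects the wrong regime. That choice gives $R=(p_2-p_1)/(q_2-q_1)<0$, i.e.\ a \emph{negative} BECCS system, which is precisely the class where the paper finds injective (hence monostationary) subsets. The certificate the DOA actually produces has the two interaction differences \emph{aligned}, not opposed: $p_1>p_2>1$ and $q_1>q_2>0$, so that $R>0$ (a positive BECCS system). This is forced by the cut-pair inequality $p_2\mu_1+q_2\mu_2>p_1\mu_1+q_1\mu_2$ combined with the sign pattern $\mu_1<0<\mu_2$ needed for compatibility with $S$: with $q_1>q_2$ one is driven to $p_1>p_2$. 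So before ``simply displaying one feasible nonnegative tuple,'' you would need to redo the feasibility analysis; the tuple you sketch would not satisfy the DOA system (nor, one should expect, yield two intersections in your direct formulation). The fix is to search in the positive class, where the required non-monotonicity of $k_1\tau_1^{p_1-1}\tau_2^{q_1}-k_2\tau_1^{p_2-1}\tau_2^{q_2}$ along the conservation line comes from the competition between two terms whose exponents both increase in the same coordinates but at different rates, rather than from opposed exponent directions.
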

$ $
\\
The above result is obtained by implementing the DOA on the BECCS system. Appendix \ref{appendix:BECCS} provides the complete proof.

\begin{table}[h]
    \centering
\begin{adjustbox}{width=1\textwidth}
\small
  \begin{tabular}{|l|l|l|}
    \hline
        \textbf{BECCS class} & \textbf{Mono-/multi-stationary subsets} & \textbf{Injective/non-injective subsets} \\
    \hline
        BECCS positive &  Multistationary: $p_1>p_2>1$ and $q_1>q_2>0$  &  Non-injective for all values of kinetic orders \\
    \hline
        BECCS negative &  Monostationary: $p_1<0$, $p_2>0$, $q_1>0$, and $q_2<0$ & Injective: $p_1<0$, $p_2>0$, $q_1>0$, and $q_2<0$  \\
    \hline
        BECCS $P$-null & Monostationary: $p_1=p_2=0$, $q_1>0$, and $q_2<0$  &  Injective: $p_1=p_2=0$, $q_1>0$, and $q_2<0$ \\
    \hline
        BECCS $Q$-null & Monostationary: $q_1=q_2=0$, $p_1<0$, and $p_2>0$ & Injective: $q_1=q_2=0$, $p_1<0$, and $p_2>0$ \\
    \hline
\end{tabular}
\end{adjustbox}
    \caption{Summary of mutistationarity and injectivity analysis for BECCS }
    \label{table: BECCS summary}
\end{table}
Table \ref{table: BECCS summary} summarizes the pertinent results that link the signs of $R$ and $Q$ to the dynamic properties of the corresponding BECCS system.

\subsection{Virtually complex balanced equilibria of a class of BECCS systems}

In this section, we introduce a special class of BECCS systems with the property that the stoichiometric subspace and kinetic flux subspace coincide. We show that for such systems, there is at least one complex balanced equilibrium in each positive stoichiometric class. We also derive a necessary condition for atmospheric carbon dioxide reduction for this special case.

\subsubsection{ A balanced negative BECCS system}
Since $n_r= n = r = 7$, any BECCS network is cycle terminal and non-branching implying it is always a PL-RDK system when endowed with power law kinetics. Moreover, its kinetic flux subspace is well-defined. Specifically, 
$$ \widetilde{S} =\text{span} \left\lbrace  \begin{bmatrix} p_2-p_1 \\ q_2-q_1 \\ 0 \\ 0 \\ 0 \end{bmatrix},  \begin{bmatrix} 0 \\ 1 \\ -1 \\ 0 \\ 0 \end{bmatrix},  \begin{bmatrix} 1 \\ 0 \\ 0 \\ 0 \\ -1 \end{bmatrix}, \begin{bmatrix} 0 \\ 0 \\ 0 \\ 1 \\ -1 \end{bmatrix}, \begin{bmatrix} 0 \\ 1 \\ 0 \\ -1 \\ 0 \end{bmatrix} \right\rbrace.$$

If $q_2 -q_1 \neq 0$, we can write the first vector as $[R \;1 \;0 \;0 \;0]^\top$, where, as usual,  $R = \dfrac{p_2-p_1}{q_2-q_1}$. We call a BECCS system \textbf{balanced negative} if $R=-1$. 
\begin{remark}
A BECCS system is balanced negative if and only if $S=\widetilde{S}$. Furthermore, $\dim \widetilde{S}=4$ and $(\widetilde{S})^\perp =\text{span } \{[1 \; 1\; 1\; 1\; 1]^\top \}$. 
\end{remark}

\subsubsection{Virtually complex balanced equilibria of a balanced negative BECCS system }
To describe the equilibria of a balanced negative BECCS system, we construct a weakly reversible transform $(\mathscr{N}^\#, K^\#)$ of the system. We denote the stoichiometric subspace and kinetic order of the transform by $S^\#$ and $\widetilde{S}^\#$, respectively.
\\
\begin{proposition}
Any CBECCS system $(\mathscr{N},K)$ has a weakly reversible network transform $(\mathscr{N}^\#, K^\#)$ such that $S^\# = S$ and $\widetilde{S}^\#=\widetilde{S}$.
\end{proposition}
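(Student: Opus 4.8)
The plan is to repair the BECCS network at the one spot where weak reversibility fails, keeping the modification small enough that the stoichiometric subspace $S$ and the kinetic flux subspace $\widetilde{S}$ are untouched, and then to verify the two equalities by checking that every newly introduced reaction vector, and every newly introduced kinetic reaction vector, already lies in $S$ and in $\widetilde{S}$ respectively. First I would locate the obstruction. The BECCS network has two linkage classes: $\mathscr{L}_1=\{A_1+2A_2\rightleftarrows 2A_1+A_2\}$, which is already reversible, and $\mathscr{L}_2=\{A_1\to A_8\to A_4\to A_2\rightleftarrows A_3\}$, whose digraph is the directed path $A_1\to A_8\to A_4\to A_2$ feeding the $2$-cycle $A_2\rightleftarrows A_3$; since no directed path runs from $A_2$ (or $A_3$) back to $A_4,A_8,A_1$, the network is not weakly reversible and $\mathscr{L}_2$ is the only culprit. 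I would then take $\mathscr{N}^\#$ to be the BECCS network together with the single extra reaction $A_2\to A_1$: this closes the cycle $A_1\to A_8\to A_4\to A_2\to A_1$, so the linkage class $\{A_1,A_8,A_4,A_2,A_3\}$ becomes strongly connected and $\mathscr{N}^\#$ is weakly reversible. (A more robust fallback is to adjoin instead the three reverse reactions $A_2\to A_4$, $A_4\to A_8$, $A_8\to A_1$, which makes $\mathscr{N}^\#$ reversible outright.) The complex set is unchanged, so $\mathscr{N}^\#$ is still cycle terminal; for $K^\#$ I keep the original power-law rates on the BECCS reactions and let each new reaction carry the power-law rate whose kinetic-order row is the one already attached to its reactant complex. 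Because $A_2$ (resp. $A_2,A_4,A_8$) is already a reactant complex of BECCS, this yields a bona fide PL-RDK kinetics on $\mathscr{N}^\#$, so $\widetilde{S}^\#$ is well defined; the rate constants on the new reactions are immaterial for the present statement and may be fixed later by whatever equilibrium-matching property the transform is required to satisfy.

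It then remains to check $S^\#=S$ and $\widetilde{S}^\#=\widetilde{S}$; in each case the inclusion $\supseteq$ is automatic, so I only need to see that the added reaction vector and the added kinetic reaction vector lie in the smaller space. For $S$ this is immediate: in the species ordering $(A_1,A_2,A_3,A_4,A_8)$ the reaction vector of $A_2\to A_1$ is $[1,-1,0,0,0]^\top$, which is the reaction vector of $R_1$ (and, in the fallback, the three reverse reactions have reaction vectors that are the negatives of those of $R_5,R_{8,7},R_{8,6}$), so $S^\#=S$; in particular $\dim S^\#=4$ and $(S^\#)^\perp=\text{span}\{[1,1,1,1,1]^\top\}$. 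For $\widetilde{S}$, read the kinetic complexes off the kinetic order matrix: $\widetilde{A}_1=[1,0,0,0,0]^\top$, $\widetilde{A}_2=[0,1,0,0,0]^\top$, $\widetilde{A}_4=[0,0,0,1,0]^\top$, $\widetilde{A}_8=[0,0,0,0,1]^\top$. The kinetic reaction vector of $A_2\to A_1$ is $\widetilde{A}_1-\widetilde{A}_2=[1,-1,0,0,0]^\top$, and this lies in $\widetilde{S}$ exactly because the system is balanced negative: $R=-1$ forces $p_2-p_1=-(q_2-q_1)\neq 0$, so the generator $[\,p_2-p_1,\ q_2-q_1,\ 0,0,0\,]^\top$ of $\widetilde{S}$ is a nonzero scalar multiple of $[1,-1,0,0,0]^\top$. (In the fallback the three new kinetic reaction vectors $[0,-1,0,1,0]^\top$, $[0,0,0,-1,1]^\top$, $[1,0,0,0,-1]^\top$ are, up to sign, three of the generators of $\widetilde{S}$ recorded above, so that variant needs no hypothesis on the kinetic orders.) Hence $\widetilde{S}^\#=\widetilde{S}$, and since a balanced negative system has $S=\widetilde{S}$, the transform constructed here in fact satisfies $S^\#=\widetilde{S}^\#=S=\widetilde{S}$, as required.

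The step I expect to carry the weight, and the one to be careful with, is keeping $\widetilde{S}$ from growing, which has two aspects. One must be sure $K^\#$ can be taken PL-RDK at all: this is precisely why the new reactions are forced to emanate from complexes that are already reactant complexes, so that no complex is made to carry two different kinetic-order rows; by contrast, one might hope to translate $R_5$, $R_{8,6}$, or $R_{8,7}$ onto an Anderies complex to obtain a deficiency-zero weakly reversible network, but any such translation forces that Anderies complex to carry both its $p,q$-row and a mass-action-like row, breaking RDK, so reaction translation alone does not suffice. And one must check that each new kinetic reaction vector already sits inside $\widetilde{S}$: this is automatic for the reversification, whereas for the lean one-reaction construction it is exactly the balanced-negative hypothesis $R=-1$ that places the single additional kinetic reaction vector into $\widetilde{S}$ — which is the reason the proposition is stated for CBECCS systems.
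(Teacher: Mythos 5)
There is a genuine gap, and it lies in what a ``network transform'' is allowed to be. You construct $\mathscr{N}^\#$ by \emph{adjoining new reactions} ($A_2\to A_1$, or the three reversals of $R_5$, $R_{8,6}$, $R_{8,7}$) to the BECCS network. Any such reaction, once given a positive rate constant, contributes a new nonzero term to the species formation rate function, so $(\mathscr{N}^\#,K^\#)$ is not dynamically equivalent to $(\mathscr{N},K)$ for any admissible choice of the new rate constants. Your remark that these constants are ``immaterial for the present statement and may be fixed later'' is exactly where the argument fails: the transform is used in the very next proposition to transfer complex-balanced equilibria of $(\mathscr{N}^\#,K^\#)$ back to positive equilibria of the original BECCS system, which is only legitimate because the paper's transform has \emph{identical ODEs}. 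Your verification that the added reaction vectors and kinetic reaction vectors lie in $S$ and $\widetilde{S}$ is correct as far as it goes, but it establishes the subspace equalities for the wrong object.

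The paper's construction avoids this by using only dynamics-preserving operations: each reaction of the chain $A_1\to A_8\to A_4\to A_2$ is translated (shifted by the complex $A_1+A_2$, which leaves the reaction vector and, since the kinetics is reactant-determined by kinetic-order rows rather than by stoichiometric reactants, the rate function unchanged), the reaction $A_1+2A_2\to 2A_1+A_2$ is split into two copies with rate constants $\tfrac{k_1}{2}+\tfrac{k_1}{2}$, one copy closing the translated chain into a directed cycle, and the reversible pair is shifted by $A_4$. The result has $n_r=r=8$ distinct reactant complexes, hence is non-branching and automatically PL-RDK --- which also disposes of your objection that ``reaction translation alone does not suffice'' because of RDK violations: that objection only applies to translations onto \emph{existing} Anderies complexes, whereas the paper translates onto fresh complexes such as $A_1+A_2+A_8$. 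Finally, note that the paper's proof of $\widetilde{S}^\#=\widetilde{S}$ (the kinetic order matrix of the transform differs from the original only by a duplicated row) needs no hypothesis on $R$, whereas your lean construction invokes $R=-1$; in the paper the balanced-negative condition enters only in the subsequent corollary, to conclude that all four subspaces coincide.
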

\begin{proof}
We construct a weakly reversible network transform that is dynamically equivalent to CBECCS through the following steps:
\begin{enumerate} %[(i)]
\item The reaction in the linkage class $A_1 \to A_8 \to A_4 \to A_2$ are shifted by the complex $A_1+A_2$. 
\item The reaction $A_1 +2A_2 \to A_2 + 2A_1$ is divided into two reactions using the sum of the rate constants $k_1=\frac{k_1}{2}+\frac{k_1}{2}$. One of the reactions is merged to the shifted linkage class to complete a cycle.
\item The reversible pair $A_1+ 2A_2 \rightleftarrows A_2+2A_1$ is shifted by the complex $A_4$. 
\item The reversible pair $A_2 \rightleftarrows A_3$ is left unchanged.
\end{enumerate}
This results in the following weakly reversible network:
\begin{align*}
2A_1 + A_2 &\to A_1+A_2+A_8 \\
A_1 + 2A_2 &\to 2A_1 + A_2 \\
A_1 + A_2 + A_8 &\to A_1+A_2+A_4 \\
A_1+A_2+A_4 &\to A_1+2A_2 \\
A_1+2A_2+A_4 &\rightleftarrows 2A_1+A_2+A_4 \\
A_2 &\rightleftarrows A_3
\end{align*}
It is non-branching ($n_r=r=8$) and hence any power-law kinetics is PL-RDK. Since shifting and splitting of reactions are $S$-invariant transformations, $S^\#=S$. Moreover, the kinetic order matrix of the transform differs from the original network only in the addition of a copy of the first row. Hence, the generated subspaces of fluxes remain the same.
\end{proof}

\begin{corollary}
For a balanced negative BECCS system, all four subspaces $S, \widetilde{S}, S^\#,$ and $\widetilde{S}^\#$ coincide.
\end{corollary}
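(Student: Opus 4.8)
The plan is to read off the Corollary directly from the Remark that precedes the Proposition together with the Proposition itself, so that essentially no new argument is needed. First I would recall the characterization already recorded in that Remark: a BECCS system is balanced negative exactly when $R=-1$, and this holds if and only if $S=\widetilde{S}$; moreover in that case $\dim\widetilde{S}=4$ with $(\widetilde{S})^\perp=\text{span}\{[1\;1\;1\;1\;1]^\top\}$. So for a balanced negative system the first two of the four subspaces already agree, and the point to double-check is that this is a genuine equality rather than a one-sided inclusion: both subspaces are $4$-dimensional, and setting $R=-1$ makes the extra flux generator $[\,p_2-p_1,\;q_2-q_1,\;0,\;0,\;0\,]^\top$ a scalar multiple of $[1,-1,0,0,0]^\top\in S$, while the generator $[0,1,0,-1,0]^\top$ of $\widetilde{S}$ becomes a linear combination of the generators of $S$; hence $\widetilde{S}=S$ outright.

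Next I would invoke the Proposition just proved: the system admits a weakly reversible network transform $(\mathscr{N}^\#,K^\#)$ with $S^\#=S$ and $\widetilde{S}^\#=\widetilde{S}$. The only thing deserving a sentence here is that the construction used in the proof of that Proposition — shifting the linkage class $A_1\to A_8\to A_4\to A_2$ by $A_1+A_2$, splitting $A_1+2A_2\to 2A_1+A_2$ via $k_1=\tfrac{k_1}{2}+\tfrac{k_1}{2}$ and merging one copy into the shifted chain, shifting the reversible pair $A_1+2A_2\rightleftarrows 2A_1+A_2$ by $A_4$, and leaving $A_2\rightleftarrows A_3$ alone — makes no use of the value of $R$. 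Each of these moves is $S$-invariant, and altogether the kinetic order matrix of the transform differs from the original only by an appended copy of its first row, so the transform applies verbatim to a balanced negative system and leaves both $S$ and $\widetilde{S}$ unchanged.

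Finally I would simply chain the equalities $S^\#=S=\widetilde{S}=\widetilde{S}^\#$, where the outer two equalities are the content of the Proposition and the middle one is the balanced-negative hypothesis via the Remark; therefore all four subspaces coincide. I do not expect any real obstacle: the argument is a two-line composition of results already in hand, and the only mildly delicate step is the verification in the first paragraph that $R=-1$ forces $S=\widetilde{S}$ exactly, which the explicit spanning sets given just before the Corollary make routine.
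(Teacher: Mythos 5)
Your proposal is correct and follows essentially the same route as the paper, which simply notes that the balanced negative condition gives $S=\widetilde{S}$ and combines this with the preceding Proposition's equalities $S^\#=S$ and $\widetilde{S}^\#=\widetilde{S}$. Your additional verification that $R=-1$ genuinely forces $\widetilde{S}=S$ via the explicit spanning sets is a reasonable elaboration but not a different argument.
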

\begin{proof}
The claim holds because for a balanced negative BECCS system, $S=\widetilde{S}$.
\end{proof}

In a balanced negative BECCS system, we have the following result:
\\
\begin{proposition}\label{prop:vcb}
For any balanced negative BECCS system, there are rate constants such that the set 
$$VCB = \left\lbrace \left. \left[   e^\alpha \; e^\alpha \; e^\alpha \; e^\alpha \; e^\alpha \right]^\top \right| \alpha \in \mathbb{R} \right\rbrace$$
are positive equilibria, called \textbf{virtually complex balanced equilibria}.
\end{proposition}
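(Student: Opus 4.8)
The plan is to realize $VCB$ as the set of complex balanced equilibria of the weakly reversible transform $(\mathscr{N}^\#,K^\#)$ built in the preceding proposition, and then to push this set back to $(\mathscr{N},K)$ through dynamic equivalence. There are three moves. First, I choose the rate constants of the BECCS system so that the all-ones composition $x^{*}=[1\;1\;1\;1\;1]^\top$ is a complex balanced equilibrium of $(\mathscr{N}^\#,K^\#)$. Second, I invoke the description of the complex balanced locus of a weakly reversible PL-RDK (generalized mass-action) system: once one complex balanced equilibrium $x^{*}$ is known, every point of the log-linear coset $x^{*}\cdot\exp\big((\widetilde{S}^\#)^\perp\big)$ is again complex balanced. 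Third, I use the preceding corollary — for a balanced negative BECCS system all four subspaces agree, so $(\widetilde{S}^\#)^\perp=\text{span}\{[1\;1\;1\;1\;1]^\top\}$ — which, since $x^{*}=[1\;1\;1\;1\;1]^\top$, identifies that coset with $\{[e^\alpha\;e^\alpha\;e^\alpha\;e^\alpha\;e^\alpha]^\top:\alpha\in\mathbb{R}\}=VCB$. A complex balanced equilibrium is in particular an equilibrium, and $(\mathscr{N}^\#,K^\#)$ has the same ODE system as $(\mathscr{N},K)$, so every element of $VCB$ is a positive equilibrium of the BECCS system; it is this property that the name ``virtually complex balanced'' records.

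For the first move, note that at $x^{*}=[1\;1\;1\;1\;1]^\top$ every power-law monomial equals $1$, so the complex balance equations at a complex $C$ of $\mathscr{N}^\#$ collapse to the combinatorial statement that the total rate constant of the reactions into $C$ equals that of the reactions out of $C$. Because $\mathscr{N}^\#$ is weakly reversible, this linear system always has a positive solution — for instance by assigning to each reaction a product of spanning-tree weights (the Matrix-Tree / Feinberg construction), or more concretely by making the kinetic flux constant around every directed cycle of $\mathscr{N}^\#$. Tracing the construction of the transform, whose rate constants are those of the BECCS system with $k_1$ split as $\tfrac{k_1}{2}+\tfrac{k_1}{2}$, one finds that this amounts to imposing constraints such as $\beta=1$ and $k_1=2k_2=2k_5=2k_6=2k_7$ on the BECCS parameters; these involve only rate constants and are therefore compatible with the defining relation $R=-1$ of the balanced negative class.

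The crux is the second move: establishing that the complex balanced locus of a weakly reversible PL-RDK system through a known complex balanced point $x^{*}$ is exactly the log-linear coset $x^{*}\cdot\exp\big((\widetilde{S}^\#)^\perp\big)$. This is the power-law analogue of the Horn--Jackson description of complex balanced steady states and should be quoted from the generalized-mass-action literature or from the CRNT results collected in the appendix; the subtle point is that the coset is a priori governed by the span of differences of the \emph{kinetic} complexes taken within each linkage class of $\mathscr{N}^\#$ rather than by $\widetilde{S}^\#$ itself, so one must check that for this transform the two coincide — equivalently, that the transform's kinetic deficiency vanishes and hence $(\widetilde{S}^\#)^\perp$ really is the line $\text{span}\{[1\;1\;1\;1\;1]^\top\}$ furnished by the corollary. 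I would expect this verification, together with confirming that no hidden relation among the photosynthesis kinetic orders (such as $p_1+q_1=1$) is needed for the kinetic-complex span to close up at dimension four, to be where most of the technical effort lies. Granting it, the three moves chain together to give the statement, the rate constants being precisely those pinned down in the first move.
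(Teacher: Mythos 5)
Your proposal follows essentially the same route as the paper: pass to the weakly reversible transform $(\mathscr{N}^\#,K^\#)$, invoke M\"uller--Regensburger to obtain rate constants for which it is complex balanced, describe $Z_+(\mathscr{N}^\#,K^\#)$ as the log-coset of $(\widetilde{S}^\#)^\perp=\text{span}\,\{[1\;1\;1\;1\;1]^\top\}$ through the all-ones reference point, and read off $VCB$; your explicit first move (choosing rate constants so that $[1\;1\;1\;1\;1]^\top$ is complex balanced, using that all monomials equal $1$ there) in fact justifies a step the paper leaves implicit. One correction to your anticipated ``crux'': the transform's kinetic deficiency is $8-3-4=1$, not $0$, and no vanishing is required --- for a weakly reversible generalized mass-action system the complex balanced locus, whenever nonempty, is the coset $\{x:\ln x-\ln x^*\in(\widetilde{S}^\#)^\perp\}$ irrespective of kinetic deficiency (the deficiency only controls for \emph{which} rate constants the locus is nonempty), which is exactly how the paper argues.
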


\begin{proof}
The system's network transform has a kinetic deficiency $=8-3-4=1$. Hence, by Theorem 1 of Müller and Regensburger \cite{MURE2014}, there are rate constants such that weakly reversible transform is complex-balanced. It follows further that for these rate constants, the network transform's set of complex balanced equilibria is given by $$ Z_+ (\mathscr{N}^\#, K^\#)= \left\lbrace x \in \mathbb{R}_>^\mathscr{S} | \ln x - \ln x^* \in \alpha [1 \; 1 \; 1 \; 1\; 1 ]^\top \right\rbrace.$$ Since $[1 \; 1 \; 1 \; 1\; 1 ]^\top $ is an equilibrium, we set it as a reference point and obtain for any VCB equilibrium, $x=\left[   e^\alpha \; e^\alpha \; e^\alpha \; e^\alpha \; e^\alpha \right]^\top$.
\end{proof}

\begin{corollary}
Any balanced negative BECCS system has no ACR species.
\end{corollary}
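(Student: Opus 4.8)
The plan is to deduce the absence of ACR directly from Proposition \ref{prop:vcb}. Recall that a species $A_j$ is an ACR species if its coordinate is constant across \emph{all} positive steady states of the system. Proposition \ref{prop:vcb} exhibits, for suitable rate constants, an entire one-parameter family $VCB = \{ [e^\alpha, e^\alpha, e^\alpha, e^\alpha, e^\alpha]^\top : \alpha \in \mathbb{R}\}$ of positive equilibria. Since the $j$-th coordinate of such an equilibrium equals $e^\alpha$, which takes every positive value as $\alpha$ ranges over $\mathbb{R}$, no coordinate can be constant on this family. Hence for these rate constants no species has ACR.

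The one subtlety is the quantifier on rate constants. ACR is usually defined for a \emph{fixed} kinetic system $(\mathscr{N},K)$ (fixed rate constants), whereas Proposition \ref{prop:vcb} only guarantees the $VCB$ family \emph{for some} choice of rate constants. So the corollary should be read as: there is no choice of rate constants for which a balanced negative BECCS system has an ACR species that survives — or, more precisely, a balanced negative BECCS system does not have ACR in any species when equipped with the rate constants furnished by Proposition \ref{prop:vcb}. The argument I would give is therefore: fix the rate constants provided by Proposition \ref{prop:vcb}; then the positive steady state set contains $VCB$; pick any species $A_j$; evaluate its coordinate along $VCB$ and observe it is the non-constant function $\alpha \mapsto e^\alpha$; conclude $A_j$ is not ACR; since $j$ was arbitrary, the system has no ACR species.

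The main (and essentially only) obstacle is making sure the logic of the quantifiers is stated cleanly, so that the corollary is not overclaiming. I would not need any new computation — everything follows from the explicit description of $VCB$. A secondary point worth a sentence is that the ambient stoichiometric compatibility classes are genuinely traversed by the family (each member of $VCB$ lies in a different class, since $(\widetilde S)^\perp = S^\perp = \text{span}\{[1,1,1,1,1]^\top\}$ and the conserved quantity $\sum_i A_i = 5e^\alpha$ distinguishes them), which is consistent with, rather than contradicting, the existence of these equilibria; this also re-confirms that the non-constancy is not an artifact of comparing incompatible states. With these remarks in place the proof is a two-line consequence of Proposition \ref{prop:vcb}.

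\begin{proof}
Fix rate constants for which the conclusion of Proposition \ref{prop:vcb} holds, so that every element of $VCB$ is a positive equilibrium. Let $A_j$ be any species. Along the family $VCB$, the $j$-th coordinate of the equilibrium $[e^\alpha, e^\alpha, e^\alpha, e^\alpha, e^\alpha]^\top$ equals $e^\alpha$, which is not constant as $\alpha$ ranges over $\mathbb{R}$. Hence $A_j$ does not attain the same value at all positive equilibria, so $A_j$ is not an ACR species. Since $A_j$ was arbitrary, a balanced negative BECCS system has no ACR species.
\end{proof}
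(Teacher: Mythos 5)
Your proof is correct and follows essentially the same route as the paper's: both deduce the absence of ACR from the fact that along the $VCB$ family of Proposition \ref{prop:vcb} every coordinate equals $e^\alpha$, which is strictly monotonic and hence non-constant in $\alpha$. Your additional remarks on the quantifier over rate constants and on the equilibria lying in distinct stoichiometric classes are sensible clarifications but do not change the argument.
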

\begin{proof}
Since the exponential map is a strictly monotonic map $\mathbb{R}\to \mathbb{R}_{>0}$, for $\alpha_1 > \alpha_2$,   $e^{\alpha_1}- e^{\alpha_2}>0$. This shows that there are no ACR species.
\end{proof}

\begin{corollary}
There is exactly one VCB equilibrium in each positive stoichiometric class $\neq S$ of a balanced negative BECCS system.
\end{corollary}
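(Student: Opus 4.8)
The plan is to reduce the claim to a one–dimensional monotonicity argument. First I would invoke Proposition~\ref{prop:vcb}: fix rate constants for which every element of $VCB$ is a genuine positive equilibrium of the balanced negative BECCS system. With that choice fixed, ``counting VCB equilibria in a stoichiometric class'' becomes purely a matter of counting how many points of the set $VCB=\{[e^\alpha,e^\alpha,e^\alpha,e^\alpha,e^\alpha]^\top : \alpha\in\mathbb{R}\}$ lie in a prescribed positive stoichiometric compatibility class.

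Next I would use the structural fact recorded just before Proposition~\ref{prop:vcb} (and in the Corollary identifying $S$, $\widetilde S$, $S^\#$, $\widetilde S^\#$): for a balanced negative BECCS system $S=\widetilde S$ has codimension one in $\mathbb{R}^5$ with $S^\perp=\operatorname{span}\{[1,1,1,1,1]^\top\}$. Hence the total carbon functional $\sigma(x):=A_1+A_2+A_3+A_4+A_8=[1,1,1,1,1]\,x$ is a conservation law: two positive vectors lie in the same stoichiometric compatibility class if and only if they have the same value of $\sigma$, and the positive stoichiometric classes are precisely the nonempty level sets $\{x\in\mathbb{R}^5_{>0}: \sigma(x)=T\}$ with $T>0$. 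In particular each such class is genuinely distinct from $S$ itself, since $S$ (as the trivial coset) contains no positive vector; this is what the qualifier ``$\neq S$'' in the statement amounts to.

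Then I would evaluate $\sigma$ along the $VCB$ parametrization: $\sigma\big([e^\alpha,e^\alpha,e^\alpha,e^\alpha,e^\alpha]^\top\big)=5e^\alpha$. Since $\alpha\mapsto 5e^\alpha$ is a strictly increasing bijection of $\mathbb{R}$ onto $(0,\infty)$, for each positive stoichiometric class, labelled by its total carbon $T>0$, there is exactly one $\alpha=\ln(T/5)$ with $5e^\alpha=T$. This single point $[e^\alpha,\dots,e^\alpha]^\top$ lies in $\mathbb{R}^5_{>0}$ and in the class (it has the right value of $\sigma$), and it is an equilibrium by the choice of rate constants; and no other element of $VCB$ lies in that class because distinct $\alpha$ give distinct values of $\sigma$. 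That establishes ``exactly one.''

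I do not expect a real obstacle here: the content is entirely the observation that $VCB$ is a smooth curve transverse to the stoichiometric foliation, meeting each leaf once. The only points requiring care are bookkeeping ones --- citing the codimension-one structure of $S$ so that the stoichiometric classes are exactly the level sets of $\sigma$, and keeping the rate constants supplied by Proposition~\ref{prop:vcb} fixed throughout, so that ``VCB equilibrium'' is meaningful. (If one preferred not to lean on Proposition~\ref{prop:vcb} for existence, one could instead note directly that $\ln x(\alpha)-\ln[1,1,1,1,1]^\top=\alpha[1,1,1,1,1]^\top\in S^\perp$, which by the Müller--Regensburger description used in that proof places $x(\alpha)$ in the complex-balanced equilibrium set; but invoking the proposition is cleaner.)
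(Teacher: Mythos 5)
Your proposal is correct and follows essentially the same route as the paper: both arguments rest on the codimension\nobreakdash-one structure $S^\perp=\operatorname{span}\{[1,1,1,1,1]^\top\}$, obtaining existence by solving for the unique $\alpha$ matching the class representative (the paper takes $\alpha=\ln k$ for the class of $[k,k,k,k,k]^\top$, you take $\alpha=\ln(T/5)$ for total carbon $T$) and uniqueness from the fact that two VCB points in the same class would differ by a vector in $S\cap S^\perp=\{0\}$, which in your phrasing is the injectivity of $\alpha\mapsto 5e^\alpha$. No gaps.
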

\begin{proof}
The positive stoichiometric classes are parametrized $[k \; k \; k \; k\; k ]^\top $ with $k \geq 0$. For $k > 0$, choose $\alpha = \ln k$, then $e^\alpha$ is contained in $[k \; k \; k \; k\; k ]^\top + S$. The difference of the two VCB is still a VCB and hence in $S^\perp$. If it were in $S$, then it would be zero. 
\end{proof}

\subsubsection{A necessary condition for carbon dioxide reduction by a balanced negative BECCS system}

Finally, we discuss a necessary condition for atmospheric carbon dioxide reduction in a balanced negative system. The following finding indicates that atmospheric carbon reduction occurs when the initial value of $A_2$ exceeds its steady-state value. Additionally, it is important to note further that an initial condition $x^0 =\left[ A_1^0 \;\; A_2^0 \;\; A_3^0 \;\; A_4^0 \;\; A_8^0 \right]^\top$ of a BECCS system determines a unique stoichiometric class $S^0$. 
\\
\begin{proposition}
Suppose there is a balanced negative BECCS system with a VCB equilibrium in $S^0\neq S$  where $A_2^0 > A_2$; that is, $A_2=\xi A_2^0, \; 0 < \xi <1.$ Then at this equilibrium point, $A_2$ cannot drop by $80 \%$ or more of its original value.
\end{proposition}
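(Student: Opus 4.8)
The plan is to exploit the single conservation law of a balanced negative BECCS system together with the diagonal shape of its VCB equilibria. Recall from the remark preceding this subsection that for a balanced negative system $(\widetilde{S})^\perp = S^\perp = \operatorname{span}\{[1\;1\;1\;1\;1]^\top\}$, so the total carbon $T := A_1 + A_2 + A_3 + A_4 + A_8$ is, up to scalar, the unique linear first integral; in particular $T$ is constant on each stoichiometric class. Since the hypothesized VCB equilibrium $x$ and the initial condition $x^0$ both lie in $S^0$, they share the same value of $T$, i.e.\ $A_1 + A_2 + A_3 + A_4 + A_8 = A_1^0 + A_2^0 + A_3^0 + A_4^0 + A_8^0$.

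Next I would use the explicit form of a VCB equilibrium from Proposition~\ref{prop:vcb}: $x = [e^\alpha\;e^\alpha\;e^\alpha\;e^\alpha\;e^\alpha]^\top$ for some $\alpha \in \mathbb{R}$, so that all five coordinates of $x$ equal $A_2$. Substituting into the conservation identity gives $5A_2 = A_1^0 + A_2^0 + A_3^0 + A_4^0 + A_8^0$, hence $A_2 = \tfrac{1}{5}(A_1^0 + A_2^0 + A_3^0 + A_4^0 + A_8^0)$. Because $x^0$ is a positive composition, the four pools $A_1^0, A_3^0, A_4^0, A_8^0$ are strictly positive, so $A_2 > \tfrac{1}{5}A_2^0$; equivalently $\xi = A_2 / A_2^0$ satisfies $\xi > \tfrac{1}{5}$.

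Finally I would translate this into the stated conclusion: a drop of $80\%$ or more in $A_2$ would mean $A_2^0 - A_2 \ge 0.8\,A_2^0$, i.e.\ $\xi \le 0.2$, contradicting $\xi > \tfrac{1}{5} = 0.2$; hence $A_2$ strictly loses less than $80\%$ of its initial value. I do not anticipate a genuine obstacle here — the argument is essentially a one-line consequence of conservation plus the diagonal structure of VCB equilibria. The only points needing care are recording that $T$ is truly the unique conserved quantity for a balanced negative system (so the VCB point in $S^0$ inherits the same total as $x^0$), and keeping the final inequality strict by invoking positivity of the remaining carbon pools $A_1^0, A_3^0, A_4^0, A_8^0$.
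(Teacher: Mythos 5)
Your argument is correct and is essentially identical to the paper's: both use the conservation law $A_1^0+A_2^0+A_3^0+A_4^0+A_8^0=5e^\alpha=5\xi A_2^0$ together with the diagonal form of the VCB equilibrium, and then invoke positivity of the remaining pools to conclude $\xi>\tfrac{1}{5}$. Your write-up is slightly more explicit about why the conserved total transfers from $x^0$ to the equilibrium and about translating $\xi>\tfrac{1}{5}$ into the stated $80\%$ bound, but the substance is the same.
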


\begin{proof}
From Proposition \ref{prop:vcb}, we have $\xi A_2^0=e^\alpha$. Since the system is conservative, the sum of initial condition values must be the same as the sum of the positive equilibrium quantities. That is,
\begin{align*}
A_1^0+ A_2^0+ A_3^0+ A_4^0+ A_8^0 &= 5e^\alpha =5 \left( \xi A_2^0 \right) \\
\Leftrightarrow A_1^0+ A_3^0+ A_4^0+ A_8^0 &= \left( 5\xi -1\right)A_2^0.
\end{align*}
Since the LHS of the last equation is positive, the condition implies in particular $\xi > \frac{1}{5}$.

\end{proof}

\section{RNCDR model of an AR system without natural reforestation } \label{section:AR system}
In this section, we introduce a model of \textbf{afforestation and reforestation (AR)} as an RNCDR kinetic system. AR storage involves planting trees on land that was not previously forested. These trees absorb CO$_2$ from the atmosphere and help restore carbon through the natural photosynthesis and respiration processes of the land. However, when one reforests or afforests, the natural reforestation or growth of trees in small patches of areas where there are adjacent mother trees is evident, therefore, increasing the photosynthesis rate over time. In this study, we consider the case where natural reforestation is absent.  

\subsection{Model and basic properties of an AR system}

Without natural reforestation, the schematic diagram of AR storage is shown in Figure \ref{AR}. In this model, we add the total carbon stock ($A_4$) in the Anderies pre-industrial system so that it considers the fossil fuel combustion, which is a transfer of carbon from the $A_4$ to $A_2$.

We introduce A$_{15}$ as the afforestation and reforestation storage which restores CO$_2$ from the land respiration process. We also consider the possibility of ``leak'' of the AR storage to the total carbon stock, resulting to the carbon transfer from A$_{15}$ to A$_4$. Lastly, the requirement of the kinetics $K_5$ of carbon transfer from $A_4$ to $A_2$ requires the $A_{15}$ in the general case of its power law form. To satisfy the positivity condition of AR system, we shall add an influence arrow from A$_{15}$ to the reaction $A_4\to A_2$ and shift this reaction by $A_{15}$.

\begin{figure}[h]
\centering
\includegraphics[scale=0.4]{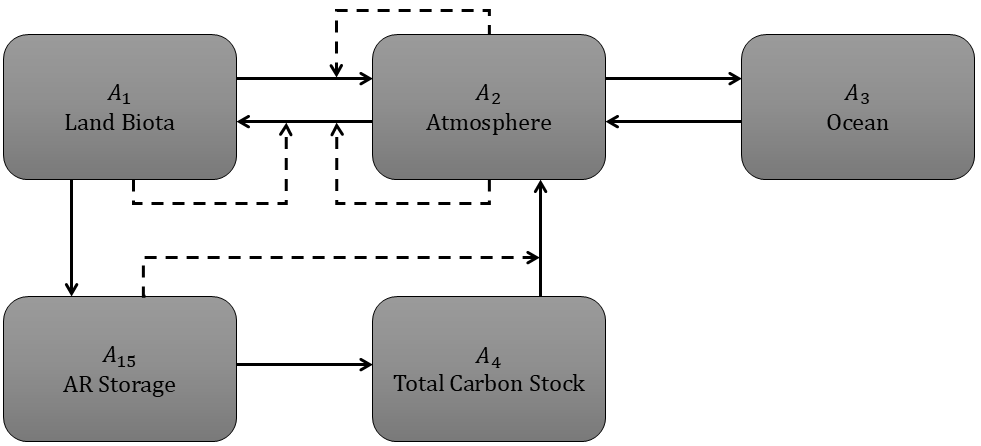}
\caption{The schematic diagram of AR system}\label{AR}
\end{figure}

The table below shows the reactions of the AR system based on the carbon transfer from the schematic diagram and their corresponding power-law kinetics. The first four reactions are drawn from the Anderies system and its emission reactions that will produce the same kinetics. The emission reaction of $R_5$ will be obtained using $\mu_i=0$ and $\lambda_i=0.5$ in the table presented in Section \ref{parameters}. Thus, the kinetics on $R_5$ will be $k_5 (A_4-0.5A_{15})$ and its power-law approximation will be $k_5 A_4^{e_{15}} A_{15}^{f_{15}}$. The rest of the reactions will be mass action kinetics.

\begin{table}[h]
    \centering
    \begin{tabular}{|c|c|c|}
    \hline
        Reaction symbol & Reactions and rate constants & Corresponding power-law kinetics \\
    \hline
        $R_1$ & $A_1 + 2A_2 \xrightarrow{k_1} 2A_1 + A_2$ & $k_1 A_1^{p_1} A_2^{q_1}$ \\
    \hline
        $R_2$ & $A_2 + 2A_1 \xrightarrow{k_2} A_1 + 2A_2$ & $k_2 A_1^{p_2} A_2^{q_2}$ \\
        \hline
        $R_3$ & $A_2 \xrightarrow{a_m} A_3$ & $a_m A_2$ \\
        \hline
        $R_4$ & $A_3 \xrightarrow{a_m\beta} A_2$ & $a_m\beta A_3$\\
        \hline
        $R_5$ & $ A_4 +A_{15} \xrightarrow{k_5} A_2 +A_{15} $ & $k_5 A_4^{e_{15}}A_{15}^{f_{15}}$\\
        \hline
        $R_{15,6}$ & $A_1\xrightarrow{k_{15,6}} A_{15}$ & $k_{15,6}A_1$\\
        \hline
        $R_{15,7}$ & $A_{15}\xrightarrow{k_{15,7}} A_{4}$ & $k_{15,7}A_{15}$\\
        \hline
        
    \end{tabular}
    \caption{CRN representation of AR system and the corresponding power-law kinetcics}
    \label{AR Power Law}
\end{table}

This reactions will lead to the following network numbers and reports from CRNT Toolbox.
\begin{table}[h] 
    \centering
    \begin{tabular}{|c|c|c|}
    \hline
    Network number & Symbol & AR System \\
    \hline
    Species & $m$ & 5 \\
    \hline
    Complex & $n$ & 9 \\
    \hline
    Reactant complexes & $n_r$ & 7 \\
    \hline
    Reactions & $r$ & 7 \\
    \hline
    Irreversible reactions & $r_{irr}$ & 3 \\
    \hline
    Linkage classes & $l$ & 4\\
    \hline
    Strong linkage classes & $sl$ & 7\\
    \hline
    Terminal strong linkage classes & $t$ & 4\\
    \hline
    Rank & $s$ & 4 \\
    \hline
    Reactant rank & $q$ & 5\\
    \hline
    Deficiency & $\delta$ & 1\\
    \hline
    Reactant deficiency & $\delta_\rho$ & 2\\
    \hline
\end{tabular}
\caption{Network numbers of AR system from CRNTToolBox}
\label{table:arnumbers}
\end{table} 

\begin{table}[h] 
    \centering
    \begin{tabular}{|c|c|}
        \hline
        Property & Report \\
        \hline
        Concordance &  Discordant \\
        \hline
        Strong concordance & Not strongly concordant\\
        \hline
        Conservative & Yes \\
        \hline
        Independent linkage classes & No\\
        \hline
        Positive dependent & Has positively dependent  reaction vectors \\
        \hline
        Positive dependent & Has positively dependent  reaction vectors \\
        \hline
        Regular & Yes\\
        \hline
    \end{tabular}
    \caption{Some properties of AR system from CRNTToolBox}
    \label{table:arproperties}
\end{table}

From Section \ref{emission}, the kinetic order matrix $F$ of the power-law kinetics in the previous table is given below. Note that from Proposition \ref{opp point}, $e_{15}$ is the kinetic order of $A_4$ while $f_{15}$ is the kinetic order of $A_{15}$ such that $e_{15}\geq 1, f_{15}\leq 0,$ and $e_{15}+f_{15}=1$. 

\begin{equation*}
F= \begin{blockarray}{cccccl}
A_1 & A_2 & A_3 & A_4 & A_{15} & \\
\begin{block}{[ccccc]l}
p_1 & q_1 & 0 & 0 & 0 & R_1 \\
p_2 & q_2 & 0 & 0 & 0 & R_2 \\
0 & 1 & 0 & 0 & 0 & R_3 \\
0 & 0 & 1 & 0 & 0 & R_4 \\
0 & 0 & 0 & e_{15} & f_{15} & R_5 \\
1 & 0 & 0 & 0 & 0 & R_{15,6} \\
0 & 0 & 0 & 0 & 1 & R_{15,7} \\
\end{block}
\end{blockarray}
\end{equation*}

The stoichiometric matrix $N$ is also given below, and will result to the corresponding stoichiometric subspace $S$ which is the same with BECCS system. Note that the non-branching reactions of AR network imply that the system is PL-RDK with deficiency equal to one.
\begin{equation*}
N= \begin{blockarray}{cccccccl}
R_1 & R_{2} & R_3 & R_4 & R_5 & R_{15,6} & R_{15,7} &  \\
\begin{block}{[ccccccc]l}
1 & -1 & 0 & 0 & 0 & -1 & 0 & A_1 \\
-1 & 1 & 1 & -1 & 1 & 0 &  0 & A_2 \\
0 & 0 & -1 & 1 & 0 & 0 & 0 & A_3 \\
0 & 0 & 0 & 0 & -1 & 0 & 1 & A_4 \\
0 & 0 & 0 & 0 & 0 & -1 & -1 & A_{15} \\
\end{block}
\end{blockarray}
\end{equation*}

\[S=\text{span}\left\lbrace \begin{bmatrix} 1 \\ -1 \\ 0 \\ 0 \\ 0 \end{bmatrix}, \begin{bmatrix} 0 \\ -1 \\ 1 \\ 0 \\ 0 \end{bmatrix},\begin{bmatrix} -1 \\ 0 \\ 0 \\ 0 \\ 1 \end{bmatrix}, \begin{bmatrix} 0 \\ 0 \\ 0 \\ 1 \\ -1 \end{bmatrix}\right\rbrace\]

Hence, the ODE system associated with the power-law kinetics of the AR system is given below:

\begin{align*}
    \dot{A_1}=  &  k_1 A_1^{p_1} A_2^{q_1}-k_2 A_1^{p_2} A_2^{q_2}-k_{15,6}A_1\\
    \dot{A_2}=  &  k_2 A_1^{p_2} A_2^{q_2}-k_1 A_1^{p_1} A_2^{q_1}-a_m A_2+a_m\beta A_3 + k_5 A_4^{e_{15}}A_{15}^{f_{15}}\\
    \dot{A_3}=  &  a_m A_2-a_m\beta A_3\\
    \dot{A_4}= & k_{15,7} A_{15}-k_5 A_4^{e_{15}}A_{15}^{f_{15}} \\
    \dot{A_{15}}= & k_{15,6}A_1-k_{15,7} A_{15}.
\end{align*}

\subsection{Injectivity and multistationarity analysis}
To determine if an AR system is injective or not, we apply the method of Feliu and Wiuf \cite{WIUF2013} to compute for $M^*$ and determine the condition/s for kinetic orders that will make $M^*$ sign-non-singular.

\begin{proposition}
    On an AR system, the following classes are injective:
    \begin{itemize}
        \item [(i)] AR Negative with $p_1<0, p_2>0,q_1>0,q_2<0$;
        \item [(ii)] AR P-null with $p_1=p_2=0,q_1>0,q_2<0$; and
        \item [(iii)] AR Q-null with $p_1<0,p_2>0,q_1=q_2=0$
    \end{itemize}
    All other AR classes are non-injective.
\end{proposition}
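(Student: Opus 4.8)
The plan is to mirror exactly the argument that was used for the BECCS system in the preceding section, since the AR network has the same stoichiometric subspace $S$, the same deficiency one, and a kinetic order matrix that differs from the BECCS one only in row $R_5$ (where $(0,0,0,1,0)$ is replaced by $(0,0,0,e_{15},f_{15})$ with $e_{15}+f_{15}=1$). The key observation is that Theorem \ref{theorem:wiuffeliu} gives an effective, purely symbolic test: form $M = N\,\mathrm{diag}(z)\,F\,\mathrm{diag}(k)$, pick a basis $\{\omega^1\}$ of the left kernel of $N$ (which is one-dimensional since $N$ is $5\times 7$ of rank $4$; in fact $\mathrm{rank}\,N = 4$ and the left kernel is spanned by the conservation vector $[1\;1\;1\;1\;1]^\top$), replace one row of $M$ by $\omega^1$ to get $M^*$, and then the network is injective iff $\det M^*$ is a nonzero homogeneous polynomial all of whose coefficients share a single sign.

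First I would compute $\det M^*$ symbolically, exactly as in the BECCS proof, using the Maple script of \cite{FELIU2013}. I expect the resulting polynomial to be structurally almost identical to the BECCS determinant: a sum of monomials in the $k_j$'s and $z_j$'s whose coefficients are $\pm p_1, \pm p_2, \pm q_1, \pm q_2$, together with at least one (probably two) monomials with a strictly positive constant coefficient independent of the kinetic orders. The extra kinetic-order dependence coming from $e_{15}, f_{15}$ in row $R_5$ should drop out in the relevant terms because the conservation vector $[1\;1\;1\;1\;1]^\top$ annihilates every column of $N$, so the contributions of $R_5$ enter only through the $z$-variables and the always-positive constants, not through $e_{15}$ or $f_{15}$ — this is precisely why the AR analysis can be expected to parallel BECCS so closely. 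Once $\det M^*$ is in hand, the next step is the sign analysis: identify the monomials whose coefficients are forced positive (the analogues of the last two BECCS terms), and then determine exactly which sign conditions on $p_1,p_2,q_1,q_2$ force every other coefficient to be nonnegative as well. This yields the three cases (i)–(iii): $p_1<0,p_2>0,q_1>0,q_2<0$ makes all the $\pm p_i,\pm q_i$ coefficients positive; setting $p_1=p_2=0$ kills the $p$-monomials and leaves only the $q$-monomials plus positives, which are all positive when $q_1>0,q_2<0$; symmetrically for $q_1=q_2=0$, $p_1<0,p_2>0$. For every other sign pattern one exhibits a monomial of each sign, so $\det M^*$ is not sign-definite and the network is not injective.

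I expect the main obstacle to be purely bookkeeping rather than conceptual: correctly carrying out the symbolic determinant expansion (a $5\times 5$ determinant with symbolic entries after the row replacement) and then verifying the sign-definiteness casework is tedious and error-prone, especially tracking how the $e_{15},f_{15}$ entries propagate. A secondary subtlety is making sure the three listed cases are genuinely exhaustive of the injective regime — i.e., that there is no fourth sign pattern (for instance with some kinetic orders equal to zero and others mixed) that also yields a sign-definite determinant; this requires a careful finite enumeration of the possible sign vectors $(\mathrm{sgn}\,p_1,\mathrm{sgn}\,p_2,\mathrm{sgn}\,q_1,\mathrm{sgn}\,q_2)$ and, for each non-listed one, pointing to two monomials of opposite sign. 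Because the determinant will be essentially the BECCS determinant with relabelled storage species, I would present the computation by invoking that parallel explicitly and then only display the AR determinant and the sign table, relegating the full symbolic output to an appendix as was done for the BECCS case.
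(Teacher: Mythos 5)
Your proposal follows essentially the same route as the paper's proof: apply the Wiuf--Feliu criterion, compute $\det M^*$ symbolically with the Maple script of \cite{FELIU2013}, identify the two always-positive monomials, and read off the three sign patterns on $p_1,p_2,q_1,q_2$ that make the determinant sign-definite, with all other patterns failing. One small correction: $e_{15}$ and $f_{15}$ do \emph{not} drop out of the determinant---in the paper's computation $e_{15}$ multiplies most of the monomials and $f_{15}$ appears in two of them---but since $e_{15}\geq 1$ and $f_{15}\leq 0$ by Proposition \ref{opp point}, the sign analysis is unaffected and your argument still goes through.
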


\begin{proof}
    Using the computational approach and Maple script provided in \cite{FELIU2013}, the matrix $M^*$ is given below

\begin{equation*}
M^*= \begin{blockarray}{cccccl}
\begin{block}{[ccccc]l}
(z_1 p_1 z_2 p_2 z_6)k_1 & -(z_1 q_1 z_2 q_2)k_2 & 0 & 0 & 0 \\
(-z_2 p_1+z_2p_2)k_1 & (-z_1q_1-z_2q_2z_3)k_2 & z_4k_3 & z_5e_{4,15}k_4 & z_5f_{4,15}k_5 \\
0 & z_3k_2 & -z_4k_3 & 0 & 0 \\
0 & 0 & 0 & 0-z_5e_{4,15}k_4 & (-z_4f_{4,15}+z_7)k_5 \\
z_6k_1 & 0 & 0 & 0 & -z_7k_5 \\
\end{block}
\end{blockarray}
\end{equation*}

Thus, the determinant of $M^*$ is
\begin{align*}
det(M^*)&=e_{4,15}q_1k_1k_2k_3k_4z_1z_4z_5z_6-e_{4,15}q_2k_1k_2k_3k_4z_2z_4z_5z_6-e_{4,15}p_1k_1k_2k_4k_5z_2z_4z_5z_7
    \\&+e_{4,15}p_2k_1k_2k_4k_5z_2z_3z_5z_7-e_{4,15}p_1k_1k_3k_4k_5z_1z_4z_5z_7+e_{4,15}p_2k_1k_3k_4k_5z_2z_4z_5z_7
    \\&+e_{4,15}q_1k_2k_3k_4k_5z_1z_4z_5z_7-e_{4,15}q_2k_2k_3k_4k_5z_2z_4z_5z_7-f_{4,15}q_1k_1k_2k_3k_5z_1z_4z_5z_6    
 \\&+f_{4,15}q_2k_1k_2k_3k_5z_2z_4z_5z_6+e_{4,15}k_1k_2k_4k_5z_3z_5z_6z_7+e_{4,15}k_1k_3k_4k_5z_4z_5z_6z_7
    \\&+q_1k_1k_2k_3k_5z_1z_4z_5z_6-q_2k_1k_2k_3k_5z_2z_4z_6z_7
\end{align*}

Now, the terms $e_{4,15}k_1k_2k_4k_5z_3z_5z_6z_7\text{ and } e_{4,15}k_1k_3k_4k_5z_4z_5z_6z_7$ are always positive since $e_{4,15}\geq 0$. Hence, all the other terms of the determinant are all positive and $M^*$ is sign-non-singular which is injective, if
\begin{itemize}
    \item [(i)] $p_1<0,p_2>0,q_1>0,q_2<0$ or
    \item [(ii)] $p_1=0,p_2=0,q_1>0,q_2<0$ or
    \item [(iii)] $p_1<0,p_2>0,q_1=0,q_2=0$
\end{itemize}

Note that other combinations of signs of kinetic order will not make all the terms of the determinant positive. Thus, AR negative with $p_1<0,p_2>0,q_1>0,q_2<0$, P-null with $p_1=0,p_2=0,q_1>0,q_2<0$, and Q-null with $p_1<0,p_2>0,q_1=0,q_2=0$ are injective and does not have the capacity for multiple steady states.
\end{proof}
%\\

To establish the capacity of multiple steady states of the AR system, we once again apply the Deficiency-One Algorithm. The proof of the following result is provided in Appendix \ref{appendix:AR}.
%\\

\begin{proposition} \label{prop:DOA AR}
    A positive AR system with kinetic orders $p_1>p_2>1$ and $q_1>q_2>0$ is multistationary. 
\end{proposition}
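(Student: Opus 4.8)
## Proof Plan for Proposition \ref{prop:DOA AR}

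The plan is to carry out the Deficiency-One Algorithm (DOA) for power-law kinetic systems, exactly as was done for the BECCS system in Proposition \ref{prop:DOA BECCS}, since the AR network shares the key structural features needed: it is regular, non-weakly-reversible, has deficiency one, and (being non-branching) is PL-RDK. First I would verify the structural hypotheses for applicability of the DOA on the AR network — in particular that the network is regular (already reported in Table \ref{table:arproperties}) and that the deficiency-one conditions on the linkage classes hold, so that the algorithm's linear feasibility system is the correct certificate for multistationarity. The goal is to exhibit, for the positive class (i.e.\ $R = (p_2-p_1)/(q_2-q_1) > 0$, realized here by $p_1 > p_2 > 1$ and $q_1 > q_2 > 0$), a solution of the associated system of linear equations and inequalities, whose existence is equivalent to the capacity for two distinct positive steady states in some stoichiometric compatibility class.

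The concrete steps I would follow are: (1) set up the shelving/partition data the DOA requires from the two nontrivial linkage classes of the AR network, writing down the sign pattern conditions on the vector $g = \mu - \mu'$ (the difference of the logarithms of the two candidate steady states, paired against reaction vectors) dictated by the "upper/lower/middle" assignment of complexes; (2) translate these into the explicit inequalities on the kinetic-order parameters $p_1,p_2,q_1,q_2,e_{15},f_{15}$ using the kinetic order matrix $F$ displayed above, recalling $e_{15}\ge 1$, $f_{15}\le 0$, $e_{15}+f_{15}=1$ from Proposition \ref{opp point}; (3) check that under $p_1>p_2>1$, $q_1>q_2>0$ the resulting linear system is consistent, producing an explicit sign-compatible vector $g$ together with rate constants $k_i$ realizing it; and (4) invoke the DOA correctness theorem to conclude the existence of two distinct positive steady states in a common stoichiometric class, i.e.\ multistationarity. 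Since the detailed computation is deferred to Appendix \ref{appendix:AR}, at this level I would present the reduction and the feasible point, not every arithmetic verification.

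The main obstacle I expect is step (3): checking consistency of the DOA linear system. The subtlety is that the shelving conditions impose a collection of strict inequalities that must be simultaneously satisfiable, and the "coupling" between the Anderies sub-block (rows $R_1, R_2$ governed by $p_i, q_i$) and the storage/emission block (rows $R_5, R_{15,6}, R_{15,7}$ governed by $e_{15}, f_{15}$ and the mass-action rows) means one cannot simply quote the pre-industrial Anderies result — the extra species $A_4, A_{15}$ and the shifted emission reaction $R_5: A_4 + A_{15} \to A_2 + A_{15}$ change the complex set and hence the partition combinatorics. I would handle this by choosing the shelving that isolates the $\{A_1+2A_2, 2A_1+A_2\}$ pair (whose reaction vector lies in the Anderies direction $[1,-1,0,0,0]^\top$) and exploiting that the conditions $p_1 - p_2 > 0$ and $q_1 - q_2 > 0$ give a definite sign to $g$ paired with that reaction vector, while the mass-action rows $R_{15,6}, R_{15,7}$ and the constraint $e_{15}+f_{15}=1$ pin down the remaining components of $g$ on the storage cycle $A_1 \to A_{15} \to A_4 \to A_2$ compatibly. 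A secondary, more bookkeeping-level difficulty is simply that the AR network has $n=9$ complexes and $\ell=4$ linkage classes (versus BECCS's $n=7$, $\ell=2$), so the DOA's partition analysis has more cases; I would reduce this by noting the two singleton/trivial linkage classes contribute no constraints, leaving effectively the same two-linkage-class problem as BECCS, modulo the shift of $R_5$ by $A_{15}$.
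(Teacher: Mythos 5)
Your proposal follows essentially the same route as the paper's Appendix~\ref{appendix:AR} proof: run the Deficiency-One Algorithm by constructing a confluence vector from the four linkage classes, choosing a $\{U,M,L\}$ partition that places the Anderies complexes in $U$ and the storage/emission complexes in $M$, using the cut pair $A_1+2A_2,\ 2A_1+A_2$ together with $p_1>p_2>1$, $q_1>q_2>0$ to force $y>x$, and using $e_{15}+f_{15}=1$ to collapse $\mu_4=\mu_5=\mu_1$ so that $\mu=[x,y,y,x,x]^\top$ is sign-compatible with $S$, exactly as in the BECCS case. One small correction: the classes $\mathscr{L}_3$ and $\mathscr{L}_4$ are not trivial and do contribute constraints (they supply the equalities involving $e_{15},f_{15}$ and the terminal-strong-linkage-class condition on $h$), though, as you note elsewhere, these constraints resolve compatibly and the problem reduces to the BECCS feasibility check.
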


\subsection{ACR Analysis of AR system}
The method of determining species of AR system that admits ACR will be the same as computing in that of BECCS. The parameterization of $A_1$ will be used to determine the symbolic properties of other species in a steady state.
\\
\begin{proposition}\label{prop:ACRAR}
    A P-null AR system with $p_1=p_2=1$ admits ACR only on species $A_2$ and $A_3$. All other AR subsets will not admit ACR on any species.
\end{proposition}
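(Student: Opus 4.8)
The plan is to mirror the ACR argument already carried out for the BECCS system in Proposition \ref{prop:ACRBECCS}, adapting it to the slightly different network of the AR system. First I would write down the equilibria set of the AR system explicitly, parametrized by $A_1=\tau_1>0$. Setting the last ODE to zero gives $A_{15}=(k_{15,6}/k_{15,7})\tau_1$; setting $\dot{A_4}=0$ then gives $k_5 A_4^{e_{15}}A_{15}^{f_{15}}=k_{15,7}A_{15}$, which (since $e_{15}+f_{15}=1$) determines $A_4$ as a constant multiple of $A_{15}$, hence of $\tau_1$. Adding the first two ODEs and using $\dot{A_3}=0$ reduces the $A_2$-equation to the single relation
\[
k_1\tau_1^{p_1}A_2^{q_1}-k_2\tau_1^{p_2}A_2^{q_2}-k_{15,6}\tau_1 = 0,
\]
and $A_3=(1/\beta)A_2$. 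So the equilibria are partly parametrized by the free parameter $\tau_1$, exactly as in the BECCS case.

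Next I would observe that since $\tau_1$ is free, $A_1$ is not ACR, and consequently $A_4$ and $A_{15}$ — being constant multiples of $\tau_1$ — are not ACR either. This leaves only $A_2$ and $A_3$ as candidates, and since $A_3=(1/\beta)A_2$ at every equilibrium, $A_2$ is ACR if and only if $A_3$ is. The remaining task is to decide when $A_2$ is invariant across the equilibria in a given stoichiometric class. Dividing the $A_2$-relation by $\tau_1$ yields
\[
k_1\tau_1^{p_1-1}A_2^{q_1}-k_2\tau_1^{p_2-1}A_2^{q_2}=k_{15,6},
\]
and then the same factoring trick as in Proposition \ref{prop:ACRBECCS}: write the left side as $k_1\tau_1^{p_2-1}\bigl(\tau_1^{p_1-p_2}A_2^{q_1}-k_2A_2^{q_2}\bigr)$. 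In the $P$-null case with $p_1=p_2=1$ all powers of $\tau_1$ vanish, so the left side depends only on $A_2$ while the right side is the constant $k_{15,6}$; if $A_2$ varied over two equilibria in the same class, then (because $q_1\ne q_2$ in a genuine $P$-null system, or by monotonicity of the left side in $A_2$) the left side would take two different values, a contradiction. Hence $A_2$, and therefore $A_3$, has ACR. In every other case, taking logarithms shows at least one power of $\tau_1$ genuinely varies with $\tau_1$; assuming $A_2$ constant then forces a varying left side against a constant right side, again a contradiction, so no species is ACR.

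The main obstacle I anticipate is making the "$A_2$ varies $\Rightarrow$ left side varies" step airtight in the $P$-null, $p_1=p_2=1$ case: one must argue that the map $A_2\mapsto k_1A_2^{q_1}-k_2A_2^{q_2}$ is injective on the relevant range of positive values, which needs $q_1\ne q_2$ (true for a $P$-null system by definition) plus a sign/monotonicity check — its derivative is $q_1k_1A_2^{q_1-1}-q_2k_2A_2^{q_2-1}$, whose sign must be controlled, or alternatively one invokes that the equation pins $A_2$ to finitely many values and then the stoichiometric constraint (conservation, via the conservative property recorded in Table \ref{table:arproperties}) selects a unique one in each class. A secondary point worth a sentence is confirming that $A_4$ being a constant multiple of $\tau_1$ really does follow from $e_{15}+f_{15}=1$ regardless of the specific values $e_{15},f_{15}$, i.e. that the power-law approximation does not break the homogeneity that the original linear kinetics $k_5(A_4-0.5A_{15})$ enjoyed — this is exactly where Proposition \ref{opp point}(ii) is used. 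Once these points are settled, the statement follows by the same bookkeeping as in the BECCS proof.
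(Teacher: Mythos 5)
Your proposal is correct and follows essentially the same route as the paper's proof: parametrize the equilibria by $A_1=\tau_1$, deduce that $A_1$, $A_4$, $A_{15}$ are non-ACR because they scale with the free parameter $\tau_1$, divide the $A_2$-relation by $\tau_1$ and factor out $\tau_1^{p_2-1}$, and compare the $\tau_1$-dependence of the two sides (the paper routes the parametrization through the finest independent decomposition $\mathscr{P}_1\cup\mathscr{P}_2$, but the resulting equations are identical to yours). The obstacle you flag --- that one must still know the equation $k_1A_2^{q_1}-k_2A_2^{q_2}=k_{15,6}$ pins down a unique positive value of $A_2$ in the case $p_1=p_2=1$ --- is a genuine subtlety that the paper's own proof passes over silently, so your suggested monotonicity/conservation check is a worthwhile addition rather than a deviation.
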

 \begin{proof}
     To ease the computation of steady states, we use the ODE system of AR network's finest independent decomposition and use the partial equilibria parametrization. The finest independent decomposition of AR is as follows:
     \[\mathscr{P}_1=\{A_1+2A_2\leftrightarrows 2A_1+A_2,A_1\to A_{15}\to A_4, A_4+A_{15}\to A_2+A_{15}\}, \mathscr{P}_2=\{A_2\leftrightarrows A_3\}.\]
The corresponding ODE system of $\mathscr{P}_1$ is given by:
     \begin{align*}
    \dot{A_1}=  &  k_1 A_1^{p_1} A_2^{q_1}-k_2 A_1^{p_2} A_2^{q_2}-k_{15,6}A_1\\
    \dot{A_2}=  &  k_2 A_1^{p_2} A_2^{q_2} -k_1 A_1^{p_1} A_2^{q_1}+ k_5 A_4^{e_{15}}A_{15}^{f_{15}}\\
    \dot{A_4}= & k_{15,7} A_{15}-k_5 A_4^{e_{15}}A_{15}^{f_{15}}\\
    \dot{A_{15}}= & k_{15,6}A_1-k_{15,7} A_{15}
\end{align*}
On the other hand, the ODE system of $\mathscr{P}_2$ is given by:
\begin{align*}
    \dot{A_2}=  &  -a_m A_2+a_m\beta A_3\\
    \dot{A_3}=  &  a_m A_2-a_m\beta A_3
\end{align*}
Using the ODE system of $\mathscr{P}_1$, set each equation to zero to obtain the steady states and let $A_1=\tau_1>0$. We can parametrize the values of $A_4$ and $A_{15}$ as follows: 
\[A_{15}=\frac{k_{15,6}}{k_{15,7}}\tau_1, A_4^{e_{15}}=\frac{k_{15,7} A_{15}}{k_5 A_{15}^{f_{15}}}=\frac{k_{15,7}}{k_5}\left(\frac{k_{15,6}}{k_{15,7}}\tau_1\right)^{e_{15}}\Longrightarrow A_4=\left(\frac{k_{15,7}}{k_5}\right)^{\frac{1}{e_{15}}}\frac{k_{15,6}}{k_{15,7}}\tau_1\]
Thus, $A_2= Root~\{k_2 \tau_1^{p_2} A_2^{q_2} -k_1 \tau_1^{p_1} A_2^{q_1}+ k_{15,6}\tau_1 \}$.\\

Using the ODE system of $\mathscr{P}_2$, let $A_2=\tau_2>0$ so that $A_3=\frac{1}{\beta}\tau_2$. Note that $A_{15}$ and $A_4$ are both dependent on $\tau_1$. The parameterization of $A_1$ means that it cannot admit an ACR, this is the same for $A_{15}$ and $A_4$.\\

Since $A_2= Root~\{k_2 \tau_1^{p_2} A_2^{q_2} -k_1 \tau_1^{p_1} A_2^{q_1}+ k_{15,6}\tau_1 \}$, we observe that
\begin{align*}
&k_2 \tau_1^{p_2} A_2^{q_2} -k_1 \tau_1^{p_1} A_2^{q_1}+ k_{15,6}\tau_1=0,\\ 
\Rightarrow & k_2 \tau_1^{p_2-1} A_2^{q_2} -k_1 \tau_1^{p_1-1} A_2^{q_1}=k_{15,6},\\ 
\Rightarrow &\tau_1^{p_2-1}(k_1\tau_1^{p_2-p_1}A_2^{q_1}-k_2 A_2^{q_2})=k_{15,6}.
\end{align*}
Using $p_2=p_1=1$, the LHS is $k_1A_2^{q_1}-k_2 A_2^{q_2}$ which is dependent on $A_2$ and the RHS is a constant. And so, $A_2$ is invariant on any steady state if $p_2=p_1=1$ and admits ACR. Since $A_3$ is dependent on $A_2$, it also admits ACR. Note that other values of the kinetic orders will result to different values of each species in steady states. Hence, any other subsets of AR system will not admit ACR on any species.
 \end{proof}

The analyses for AR system without natural reforestation are summarized in the table below.

\begin{table}[h]
    \centering
\begin{adjustbox}{width=1\textwidth}
\small
    \begin{tabular}{|l|l|l|}
    \hline
        \textbf{AR class} & \textbf{Mono-/multi-stationary subsets} & \textbf{Injective/non-injective subsets} \\
    \hline
        AR positive &  Multistationary: $p_1>p_2>1$ and $q_1>q_2>0$  &  Non-injective for all values of kinetic orders \\
    \hline
        AR negative &  Monostationary: $p_1<0$, $p_2>0$, $q_1>0$, and $q_2<0$ & Injective: $p_1<0$, $p_2>0$, $q_1>0$, and $q_2<0$  \\
    \hline
        AR $P$-null & Monostationary: $p_1=p_2=0$, $q_1>0$, and $q_2<0$  &  Injective: $p_1=p_2=0$, $q_1>0$, and $q_2<0$ \\
    \hline
        AR $Q$-null & Monostationary: $q_1=q_2=0$, $p_1<0$, and $p_2>0$ & Injective: $q_1=q_2=0$, $p_1<0$, and $p_2>0$ \\
    \hline
\end{tabular}
\end{adjustbox}
    \caption{Summary of mutistationarity and injectivity analysis for BECCS }
    \label{table: AR summary}
\end{table}

\section{Comparison of BECCS and AR Kinetic Systems}

The differences between the BECCS and AR systems arise from the reaction $R_5$, which represents the carbon emission from TCS into the atmosphere. As discussed in Section \ref{kinetics}, while neither the BECCS nor the AR systems store inorganic carbon output in TCS (i.e., $\mu_i=0$), they differ in the amounts of carbon sequestered by their CDR storage that remains in TCS as long-term carbon (indicated by the parameter $\lambda_i$). The difference in the corresponding rate functions in $R_5$ led to different reaction representations in their corresponding CRN models.

However, despite this particular difference in reactions, there seem to be no notable variations in the graphical or dynamic properties of the two networks. The differences in their kinetic representations are only apparent in network numbers, such as the number of complexes, linkage classes, strong linkage classes, and terminal strong linkage classes. All other structural properties are largely similar. Furthermore, since the reaction vectors of both systems are the same, their stoichiometry-related properties also align. Table \ref{table:comparison} summarizes the different graphical, stoichiometric, and dynamical characteristics of the BECCS and AR systems.

\begin{table}[!ht]
\centering
\begin{minipage}{\textwidth}
\begin{adjustbox}{width=1\textwidth}
\small
\begin{tabular}{|c|l|l|}
\hline
\textbf{Property class} & \textbf{BECCS system} & \textbf{AR sytem} \\ 
\hline
\multirow{3}{*}{Network}    &   Disconnected with 2 linkage classes &   Disconnected with 4 linkage classes \\
\hhline{~--}  & Non-weakly reversible & Non-weakly reversible \\
\hhline{~--} (Digraph)& Non-cycle terminal &    Non-cycle terminal \\
\hhline{~--} &  $t$-minimal &   $t$-minimal \\
\hline
\multirow{8}{*}{Network} & Discordant    & Discordant    \\
\hhline{~--} & Conservative & Conservative  \\
\hhline{~--} & Positive dependent & Positive dependent \\
\hhline{~--} &  Deficiency-one & Deficiency-one\\
\hhline{~--} (Stochiometry-related) &  Regular & Regular \\
\hhline{~--} &  FID has 2 subnetworks,  & FID has 2 subnetworks \\
\hhline{~~~} &  neither of which is an Anderies subsystem & neither of which is an Anderies subsystem \\
\hline
\multirow{1}{*}{ Kinetic system}  & PL-RDK &  PL-RDK  \\
\hline
\multirow{2}{*}{ACR}  & No ACR species except  &  No ACR species except  \\
\hhline{~~~} & when $p_1=p_2=1$ (Prop. \ref{prop:ACRBECCS}) & when $p_1=p_2=1$ (Prop. \ref{prop:ACRAR})  \\
\hline
\multirow{4}{*}{Injectivity}  & Positive:  Non-injective &  Positive:  Non-injective    \\
\hhline{~--}  & Negative: Has injective subset &  Negative: Has injective subset  \\
\hhline{~--}  & $P$-null: Has injective subset &  $P$-null: Has injective subset  \\
\hhline{~--}  & $Q$-null: Has injective subset &  $Q$-null: Has injective subset  \\
\hline
\multirow{4}{*}{Equilibria multiplicity }  & Positive:  contains multistationary systems &  Positive: contains multistationary systems    \\
\hhline{~--}  & Negative: contains monostationary systems &  Negative: contains monostationary systems  \\
\hhline{~--}  & $P$-null: contains monostationary systems &  $P$-null: contains monostationary systems  \\
\hhline{~--}  & $Q$-null: contains monostationary systems &  $Q$-null: contains monostationary systems  \\
\hline
\end{tabular}
\end{adjustbox}
\caption{Overview of network and kinetic properties of the BECCS and AR systems.}
\label{table:comparison}
\end{minipage}
\end{table}  

\newpage

\section{Summary and Outlook}
In this study, we proposed a framework based on chemical reaction network theory to analyze several negative emissions technology models to capture or remove carbon dioxide in present-day Earth. Using this framework, we looked into the conditions as to when these systems would exhibit steady-state multiplicity and absolute concentration robustness.

The proposed framework is called Reaction Network Carbon Dioxide Removal (RNDCR) highlighted by two components common to all models: the Anderies pre-industrial subsystem and the fossil fuel emission reaction. We formed the RNDCR network consisting of its species, complexes, reactions and corresponding power-law kinetics. The classes and Anderies decomposition of an RNDCR system were also discussed.

We illustrated the RNDCR framework in the context of Bioenergy with Carbon Capture and Storage (BECCS) and afforestation/reforestation (AR) which are considered two of the mature NETs. 

These two NET models differ only in one reaction, that is, the reaction representing the carbon emission from the total carbon stock into the atmosphere. Their corresponding CRN representations differ due to the rate functions associated with this reaction. We showed that despite this difference, the graphical and dynamical properties of the two CRN models are almost similar.

The ACR analysis of BECCS and AR revealed that there is a particular condition under which a P-null system will demonstrate ACR in both species $A_2$ (atmosphere) and $A_3$ (ocean). All other BECCS/AR subsets will not admit ACR on any species.

On the other hand, it was discovered through the Deficiency One Algorithm (DOA) that there exist appropriate kinetic order values such that BECCS or AR systems are multistationary, that is, they have the capacity to admit multiple steady states, and thus "tipping points".

For future studies, the analysis of other NETs can be done using the RNDCR framework. %With this, it can be concluded which of these NETs are most effective in terms of capturing carbon dioxide.

\section*{Declarations}

\begin{itemize}
\item Funding- Not applicable
\item Conflict of interest/Competing interests-The authors declare no competing interests.
\item Ethics approval and consent to participate- Not applicable
\item Consent for publication- Not applicable
\item Data availability - Not applicable
\item Materials availability- Not applicable
\item Code availability- Not applicable
%\item Author contribution
\end{itemize}

%\noindent
%If any of the sections are not relevant to your manuscript, please include the heading and write `Not applicable' for that section. 

%%===================================================%%
%% For presentation purpose, we have included        %%
%% \bigskip command. Please ignore this.             %%
%%===================================================%%
%\bibliographystyle{abbrv}
%\bibliography{sn-bibliography}% common bib file
%% if required, the content of .bbl file can be included here once bbl is generated
%%\input sn-article.bbl

\newpage
\begin{appendix}

\section{Fundamentals of reaction networks and kinetic systems }\label{appendix: CRNT}

As supplementary material, we provide a formal presentation of the relevant concepts and results related to chemical reaction networks and chemical kinetic systems.

\subsubsection*{Notation}
We denote the real numbers by $\mathbb{R}$, the non-negative real numbers by $\mathbb{R}_{\geq0}$ and the positive real numbers by $\mathbb{R}_{>0}$.  Objects in reaction systems are viewed as members of vector spaces. Suppose $\mathscr{I}$ is a finite index set. By $\mathbb{R}^\mathscr{I}$, we mean the usual vector space of real-valued functions with domain $\mathscr{I}$.  If $x \in \mathbb{R}_{>0}^\mathscr{I}$ and $y \in \mathbb{R}^\mathscr{I}$, we define $x^y \in \mathbb{R}_{>0}$ by
$
x^y= \prod_{i \in \mathscr{I}} x_i^{y_i} .
$
Let $x \wedge y$ be the component-wise minimum, $(x \wedge y)_i = \min (x_i, y_i)$.
The vector $\log x\in \mathbb{R}^\mathscr{I}$,where $x \in \mathbb{R}_{>0}^\mathscr{I}$, is given by 
$(\log x)_i = \log x_i,  \text{ for all } i \in \mathscr{I}.$  The \textit{support} of $x \in \mathbb{R}^\mathscr{I}$, denoted by $\text{supp } x$, is given by
$ \text{supp } x := \{ i \in \mathscr{I} \mid x_i \neq 0 \}.$

%One can look at \cite{AJLM2017,AJLM2017-1,AJSM1,TAM2017} for details of the concepts presented here. 
\subsection{Fundamentals of chemical reaction networks}

We begin with the formal definition of a chemical reaction network or CRN. 

\begin{definition}
A \textbf{chemical reaction network} or CRN is a triple $\mathscr{N}:= (\mathscr{S,C,R})$ of nonempty finite sets $\mathscr{S}$, $\mathscr{C}$, and $\mathscr{R}$, of $m$ \textbf{species}, $n$ \textbf{complexes}, and $r$ \textbf{reactions}, respectively, where $\mathscr{C} \subseteq \mathbb{R}_{\geq 0}^\mathscr{S}$ and $\mathscr{R} \subset \mathscr{C} \times \mathscr{C}$ satisfying the following properties:
\begin{enumerate}
%[label=(\roman*)]
    \item $(y,y) \notin \mathscr{R}$ for any $y \in \mathscr{C}$;
    \item for each $y \in \mathscr{C}$, there exists $y' \in \mathscr{C}$ such that $(y,y')\in \mathscr{R}$ or $(y',y)\in \mathscr{R}$.
\end{enumerate}
\end{definition}
\noindent For $y \in \mathscr{C}$, the vector $$y=\displaystyle{\sum_{A \in \mathscr{S}}} y_A A,$$ where $y_A$ is the \textbf{stoichiometric coefficient} of the species $A$. In lieu of $(y,y')\in \mathscr{R}$, we write the more suggestive notation  $y \rightarrow y'$. In this reaction, the vector $y$ is called the \textbf{reactant complex} and $y'$ is called the \textbf{product complex}. 

CRNs can be viewed as directed graphs where the complexes are vertices and the reactions are arcs. The (strongly) connected components are precisely the \textbf{(strong) linkage classes} of the CRN. A strong linkage class is a \textbf{terminal strong linkage class} if there is no reaction from a complex in the strong linkage class to a complex outside the given strong linkage class. 

\begin{definition}
A CRN with $n$ complexes, $n_r$ reactant complexes, $\ell$ linkage classes, $s\ell$ strong linkage classes, and $t$ terminal strong linkage classes is 
\begin{enumerate}
%[label=(\roman*)]
    \item \textbf{weakly reversible} if $s\ell = \ell$;
    \item \textbf{t-minimal} if $t=\ell$;
    \item \textbf{point terminal} if $t=n-n_r$; and
    \item \textbf{cycle terminal} if $n-n_r=0$.
\end{enumerate}
\end{definition}

For every reaction, we associate a \textbf{reaction vector}, which is obtained by subtracting the reactant complex from the product complex. From a dynamic perspective, every reaction $ y \rightarrow y' \in \mathscr{R}$ leads to a change in species concentrations proportional to the  reaction vector $ \left( y' – y \right) \in \mathbb{R}^\mathscr{S}$. The overall change induced by all the reactions lies in a subspace of $\mathbb{R}^\mathscr{S}$ such that any trajectory in $\mathbb{R}^\mathscr{S}_{>0}$ lies in a coset of this subspace. 

\begin{definition}
The \textbf{stoichiometric subspace} of a network $\mathscr{N}$ is given by
$$ \mathcal{S} := \text{span } \{ y' – y \in \mathbb{R}^\mathscr{S} \mid y \rightarrow y' \in \mathscr{R} \}.$$
The \textbf{rank} of the network is defined as $s:= \dim \mathcal{S}$. For $x \in \mathbb{R}^\mathscr{S}_{>0}$, its \textbf{stoichiometric compatibility class} is defined as $(x+\mathcal{S}) \cap \mathbb{R}^\mathscr{S}_{ \geq 0}$. Two vectors $x^{*}, x^{**} \in  \mathbb{R}^\mathscr{S}$ are \textbf{stoichiometrically compatible} if $ x^{**}-x^{*} \in \mathcal{S}$.
\end{definition}

\begin{definition}
A CRN with stoichiometric subspace $S$ is said to be \textbf{conservative} if there exists a positive vector $x \in \mathbb{R}^\mathscr{S}_>$ such that $S^\perp \cap \mathbb{R}^\mathscr{S}_> \neq \emptyset$. 
\end{definition}

An important structural index of a CRN, called \textit{deficiency}, provides one way to classify networks.

\begin{definition}
The \textbf{deficiency} $\delta$ of a CRN with $n$ complexes, $\ell$ linkage classes, and rank $s$ is defined as $\delta:=n-\ell-s$.
\end{definition}

\subsection{Fundamentals of chemical kinetic systems}
It is generally assumed that the rate of a reaction $y \rightarrow y' \in \mathscr{R}$ depends on the concentrations of the species in the reaction. The exact form of the non-negative real-valued rate function $K_{ y \rightarrow y'}$ depends on the underlying \textit{kinetics}.

The following definition of kinetics is expressed in a more general context than what one typically finds in CRNT literature.
\begin{definition}
A \textbf{kinetics} for a network $\mathscr{N}=(\mathscr{S,C,R})$ is an assignment to each reaction $y \rightarrow y' \in \mathscr{R}$ a rate function $ K_{ y \rightarrow y'}: \Omega_K \rightarrow \mathbb{R}_{\geq 0}$, where $\Omega_K$ is a set such that $\mathbb{R}^\mathscr{S}_{> 0} \subseteq \Omega_K \subseteq \mathbb{R}^\mathscr{S}_{\geq 0}$, $x  \wedge  x^{*} \in \Omega_K$ whenever $x, x^{*} \in \Omega_K$, and $ K_{ y \rightarrow y'} (x) \geq 0$ for all $x \in \Omega_K$. A kinetics for a network $\mathscr{N}$ is denoted by $K:\Omega_K \rightarrow \mathbb{R}^\mathscr{R}_{\geq 0}$ (\cite{WIUF2013}). A \textbf{chemical kinetics} is a kinetics $K$ satisfying the condition that for each $y \rightarrow y' \in \mathscr{R}$, $ K_{ y \rightarrow y'} (x) >0$ if and only if $\text{supp } y \subset \text{supp } x$. The pair $(\mathscr{N},K)$ is called a \textbf{chemical kinetic system} (\cite{AJLM2017}). 
\end{definition}

The system of ordinary differential equations that govern the dynamics of a CRN is defined as follows.

\begin{definition}\label{def:ODE}
The \textbf{ordinary differential equation (ODE)} associated with a chemical kinetic system $(\mathscr{N},K)$ is defined as 
$ \dfrac{dx}{dt}=f(x)$ with \textbf{species formation rate function} 
\begin{equation}\label{eq:sfrf}
    f(x)= \sum_{ y \rightarrow y' \in \mathscr{R}} K_{ y \rightarrow y'} (x) (y'-y).
\end{equation}
A \textbf{positive equilibrium} or \textbf{steady state} $x$ is an element of $\mathbb{R}^\mathscr{S}_{>0}$ for which $f(x) = 0$.
\end{definition}

\begin{definition}\label{def:equilibria}
The \textbf{set of positive equilibria} or \textbf{steady states} of a chemical kinetic system $(\mathscr{N},K)$ is given by 
$$ E_+ (\mathscr{N},K) = \{ x \in \mathbb{R}^\mathscr{S}_{>0} \mid f(x) = 0 \}. $$
For brevity, we also denote this set by $E_+$. The chemical kinetic system is said to be \textbf{multistationary} (or has the capacity to admit \textbf{multiple steady states}) if there exist positive rate constants such that $\mid E_+ \cap \mathcal{P}\mid \geq 2$ for some positive stoichiometric compatibility class $\mathcal{P}$. On the other hand, it is \textbf{monostationary} if $\mid E_+ \cap \mathcal{P}\mid \leq 1$ for all positive stoichiometric compatibility class $\mathcal{P}$.
\end{definition}

\begin{definition}
The reaction vectors of a CRN $ (\mathscr{S,C,R})$ are \textbf{positively dependent} if for each reaction $y \rightarrow y' \in \mathscr{R}$, there exists a positive number $k_{ y \rightarrow y'}$ such that $\sum_{y \rightarrow y' \in \mathscr{R}}k_{ y \rightarrow y'} (y'-y)=0$. 
\end{definition}

\begin{remark}
 In view of Definitions \ref{def:ODE} and  \ref{def:equilibria}, a necessary condition for a chemical kinetic system to admit a positive steady state is that its reaction vectors are positively dependent.  
\end{remark}

To reformulate the species formation rate function in Eq. (\ref{eq:sfrf}), we consider the natural basis vectors $\omega_i \in \mathbb{R}^\mathscr{I}$ where $i \in \mathscr{I}=\mathscr{C}$ or $\mathscr{R}$ and define 
\begin{enumerate}
%[label=(\roman*)]
    \item[(i)] the \textbf{molecularity map} $Y: \mathbb{R}^\mathscr{C} \rightarrow \mathbb{R}^\mathscr{S}$ with $Y(\omega_y)=y$;
    \item[(ii)] the \textbf{incidence map} $I_a: \mathbb{R}^\mathscr{R} \rightarrow \mathbb{R}^\mathscr{S}$ with $I_a (\omega_{y \rightarrow y'})= \omega_{y'} - \omega_y$; and
    \item[(iii)] the \textbf{stoichiometric map} $N: \mathbb{R}^\mathscr{R} \rightarrow \mathbb{R}^\mathscr{S}$ with $N= YI_a$.
\end{enumerate}
Hence, Eq. (\ref{eq:sfrf}) can be rewritten as $f(x)=YI_aK(x)=NK(x).$ The positive steady states of a chemical kinetic system that satisfies $I_a K(x)=0$ are called \textit{complex balancing equlibria}. 
\begin{definition}
The \textbf{set of complex balanced equilibria} of a chemical kinetic system $(\mathscr{N},K)$ is the set
$$ Z_+(\mathscr{N},K) = \{ x\in \mathbb{R}^\mathscr{S}_{>0} \mid I_a K(x) =0 \} \subseteq E_+(\mathscr{N},K).$$
A chemical kinetic system is said to be \textbf{complex balanced} if it has a complex balanced equilibrium. 
\end{definition}

We define power law kinetics through the $r \times m$ \textbf{kinetic order matrix} $F=[F_{ij}]$, where $F_{ij} \in \mathbb{R}$ encodes the kinetic order the $j$th species of the reactant complex in the $i$th reaction. Further, consider the \textbf{rate vector} $k \in \mathbb{R}^\mathscr{R}_{>0}$, where $k_i \in \mathbb{R}_{>0}$ is the rate constant in the $i$th reaction. 

\begin{definition}\label{def:PLK}
A kinetics $K: \mathbb{R}^\mathscr{S}_{>0} \rightarrow \mathbb{R}^\mathscr{R}$ is a \textbf{power law kinetics} or \textbf{PLK} if
$$\displaystyle K_{i}(x)=k_i x^{F_{i,*}} \quad \text{for all } i \in \mathscr{R},$$
where $F_{i,*}$ is the row vector containing the kinetic orders of the species of the reactant complex in the $i$th reaction.
\end{definition}
 
%Power law kinetic systems can be classified based on kinetic orders assigned to its \textbf{branching reactions}, i.e., reactions sharing a common reactant complex. 

\begin{definition} A PLK system has \textbf{reactant-determined kinetics} (or of type \textbf{PL-RDK}) if for any two \textbf{branching reactions} $i$, $j \in \mathscr{R}$ (i.e., reactions sharing a common reactant complex), the corresponding rows of kinetic orders in $F$ are identical. That is, $F_{ih}=F_{jh}$ for all $h  \in \mathscr{S}$.  %Otherwise, a PLK system has \textbf{non-reactant-determined kinetics} (or of type \textbf{PL-NDK}).
\end{definition}

%\begin{remark} In a mass action system where the reactions occur in a homogeneous space, the kinetic order is the same as the number of molecules entering into the reaction. Hence, in view of Definition \ref{def:PLK}, a kinetics is a \textbf{mass action kinetics} if the entries of the row vector $F_{i,*}$ are  the stoichiometric coefficients of a reactant complex in the $i$th reaction. Moreover, mass action kinetics is of type PL-RDK. \end{remark}

In the implementation of the DOA in Appendix \ref{appendix:DOA}, kinetic orders are encoded using $T$-matrix  \cite{TALABIS2017}. 
This matrix is derived from the $m \times n$ matrix $\widetilde{Y}$ defined by M\"{u}ller and Regensburger in \cite{MURE2012}. In this matrix,  $( \widetilde{Y})_{ij} = F_{ki}$ if $j$ is a reactant complex of reaction $k$ and $( \widetilde{Y})_{ij} = 0$, otherwise. 

\begin{definition}
The $m \times n_r$ $\bm{T}$\textbf{-matrix} is the truncated $\widetilde{Y}$ where the non-reactant columns are deleted and $n_r$ is the number of reactant complexes. % Define the $n_r \times \ell$ matrix $L=\left[e^1, e^2,\dots, e^\ell \right]$ where $e^i$ is the characteristic vector of the set of complexes $\mathscr{C}_i$ in the linkage class $\mathscr{L}_i$. That is, for all $j \in \mathscr{C}$ and $i=1,\dots, \ell$, $e^i_j=1$ if $j \in \mathscr{C}_i$ and  $e^i_j=0$ if $j \in \mathscr{C}\backslash \mathscr{C}_i$. The \textbf{augmented $\bm{T}$-matrix} is the $(m +\ell) \times n_r$ block matrix defined as $\widehat{T}=\left[ \begin{array}[center]{c} T \\  L^{\top} \\ \end{array} \right].$
\end{definition}

%\begin{definition}
%The $m \times n_r$ \textbf{T-matrix} is the truncated $\widetilde{Y}$ where the non-reactant colums are deleted and $n_r$ is the number of reactant complexes. Define the $n_r \times \ell$ matrix $L=\left[e^1, e^2,\dots, e^\ell \right]$ where $e^i$ is the characteristic vector of the set of complexes $\mathscr{C}_i$ in the linkage class $\mathscr{L}_i$. That is, for all $j \in \mathscr{C}$ and $i=1,\dots, \ell$, $e^i_j=1$ if $j \in \mathscr{C}_i$ and  $e^i_j=0$ if $j \in \mathscr{C}\backslash \mathscr{C}_i$. The \textbf{augmented T-matrix} is the $(m +\ell) \times n_r$ block matrix defined as $\widehat{T}=\left[ \begin{array}[center]{c} T \\  L^{\top} \\ \end{array} \right].$
%\end{definition}

%\begin{remark}
%In \cite{TALABIS2017}, Talabis et al. defined a subclass of PL-RDK systems whose augmented $T$-matrix has maximal column rank. They called such system as \textbf{$\bm{\widehat{T}}$-rank maximal} (or of type \textbf{PL-TIK}).
%\end{remark}

\subsection{Independent decomposition of a CRN}
Decomposition theory was initiated by M. Feinberg in his 1987 review paper \cite{FEIN1987}. He introduced the general concept of a network decomposition of a CRN as a union of subnetworks whose reaction sets form a partition of the network’s set of reactions. He also introduced the so-called \textit{independent decomposition} of chemical reaction networks.

\begin{definition}
A decomposition of a CRN $\mathscr{N}$ into $k$ subnetworks of the form $\mathscr{N}=\mathscr{N}_1 \cup \cdots \cup \mathscr{N}_k$ is \textbf{independent} if its stoichiometric subspace is equal to the direct sum of the stoichiometric subspaces of its subnetworks, i.e., $\mathcal{S}=\mathcal{S}_1 \oplus \cdots \oplus \mathcal{S}_k$.
\end{definition}

For an independent decomposition, Feinberg concluded that any positive equilibrium of the “parent network” is also a positive equilibrium of each subnetwork.

\begin{theorem}[Rem. 5.4, \cite{FEIN1987}]  \label{feinberg theorem}
Let $(\mathscr{N},K)$ be a chemical kinetic system with partition $\{\mathscr{R}_1, \dots, \mathscr{R}_k \}$. If $\mathscr{N}=\mathscr{N}_1 \cup \cdots \cup\mathscr{N}_k$ is the network decomposition generated by the partition  and $E_+(\mathscr{N}_i,K_i)= \{ x \in \mathbb{R}^\mathscr{S}_{>0} \mid N_i K_i(x) = 0, i=1,\dots,k \}$, then 
$ \bigcap_{i=1}^kE_+ (\mathscr{N}_i, K_i)\subseteq E_+ (\mathscr{N}, K)$. If the network decomposition is independent, then equality holds.
\end{theorem}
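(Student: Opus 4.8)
The plan is to exploit the additivity of the species formation rate function over the reaction partition, together with uniqueness of representation in a direct sum. First I would record that, for the parent system, $f(x) = NK(x) = \sum_{y \rightarrow y' \in \mathscr{R}} K_{y \rightarrow y'}(x)(y'-y)$, and note that since $\{\mathscr{R}_1, \dots, \mathscr{R}_k\}$ is a partition of $\mathscr{R}$, this sum splits as $f(x) = \sum_{i=1}^k f_i(x)$, where $f_i(x) := N_i K_i(x) = \sum_{y \rightarrow y' \in \mathscr{R}_i} K_{y \rightarrow y'}(x)(y'-y)$ is exactly the species formation rate function of the subnetwork $(\mathscr{N}_i, K_i)$, with $K_i$ the restriction of $K$ to $\mathscr{R}_i$. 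The minor bookkeeping point to verify here is that $f_i$, built from the stoichiometric map $N_i$ of the subnetwork and the restricted kinetics $K_i$, really coincides with the partial sum of $f$ over $\mathscr{R}_i$; this is immediate once one unwinds the definitions.

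For the inclusion $\bigcap_{i=1}^k E_+(\mathscr{N}_i, K_i) \subseteq E_+(\mathscr{N}, K)$, which holds for any decomposition generated by a partition, I would take $x$ in the intersection, so that $f_i(x) = 0$ for every $i$, and conclude $f(x) = \sum_{i=1}^k f_i(x) = 0$, i.e. $x \in E_+(\mathscr{N}, K)$. Independence is not used for this direction.

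For the reverse inclusion under independence, the key observation is that $f_i(x) \in \mathcal{S}_i$, since $f_i(x)$ is a linear combination of the reaction vectors of $\mathscr{N}_i$. So if $x \in E_+(\mathscr{N}, K)$, then $0 = f(x) = \sum_{i=1}^k f_i(x)$ is a representation of the zero vector as a sum of vectors drawn one from each $\mathcal{S}_i$. Because the decomposition is independent, $\mathcal{S} = \mathcal{S}_1 \oplus \cdots \oplus \mathcal{S}_k$, and such a representation of $0$ in a direct sum is unique, hence trivial: $f_i(x) = 0$ for every $i$. Therefore $x \in E_+(\mathscr{N}_i, K_i)$ for all $i$, so $x$ lies in the intersection. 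Combining the two inclusions yields equality.

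I do not expect a genuine obstacle here; the argument is short. The one conceptual step that deserves emphasis is the invocation that a direct-sum decomposition forces $0 = \sum_i v_i$ with $v_i \in \mathcal{S}_i$ to have all $v_i = 0$ — this is precisely where the hypothesis $\mathcal{S} = \bigoplus_i \mathcal{S}_i$ (as opposed to mere spanning) is needed, and it is the reason independence is exactly the right condition for the equality to hold.
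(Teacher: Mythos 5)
Your proof is correct and is precisely the standard argument for this result (the paper itself only cites Feinberg's Remark 5.4 without reproducing a proof): additivity of the species formation rate function over the partition gives the inclusion, and the uniqueness of the representation of $0$ in the direct sum $\mathcal{S}=\mathcal{S}_1\oplus\cdots\oplus\mathcal{S}_k$ gives equality under independence. Nothing is missing.
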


\subsection{Absolute concentration robustness in PLP systems}\label{appendix:PLP}

Lao et al. \cite{LLMM2022} introduced the concept of positive equilibria log-parametrized (PLP) kinetic system:

\begin{definition}
For a reaction network $\mathscr{N}$ with species $\mathscr{S}$, a \textbf{log-parametrized (LP) set} is a non-empty set of the form $$E(P,x^*)= \{ x\in \mathbb{R}^\mathscr{S}_{>0} \mid \log x -\log x^* \in P^\perp \},$$ where $P$ (called the LP set's \textbf{flux subspace}) is a subspace of $\mathbb{R}^\mathscr{S}$, $x^*$ (called the LP's \textbf{reference point}) is a given element of $\mathbb{R}^\mathscr{S}_{>0}$, and $P^\perp$ (called the LP set's \textbf{parameter subspace}) is the orthogonal complement of $P$. A chemical kinetic system $(\mathscr{N},K)$ is \textbf{positive equilibria log-parametrized (PLP) system} if its set of positive equilibria is an LP set, i.e., $E_+(\mathscr{N},K)=E(P_E,x^*)$ where $P_E$ is the flux subspace and $x^*$ is a given positive equilibrium.
\end{definition}

The \textit{Species Hyperplane Criterion} for absolute concentration robustness is recalled below.

\begin{theorem}[Theorem 3.12, \cite{LLMM2022}] 
If $(\mathscr{N},K)$ is a PLP system, then it has ACR is a species $A$ if and its parameter subspace $P_E^\perp$ is a subspace of the species hyperplane $\{x \in  \mathbb{R}^\mathscr{S} \mid x_A =0 \}$.
\end{theorem}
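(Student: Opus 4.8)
The plan is to unwind both directions of the equivalence directly from the defining property of a PLP system, namely that $E_+(\mathscr{N},K)=E(P_E,x^*)=\{x\in\mathbb{R}^\mathscr{S}_{>0}\mid \log x-\log x^*\in P_E^\perp\}$ for some positive equilibrium $x^*$. The first observation is that since $x^*$ itself lies in $E_+(\mathscr{N},K)$, if the system has ACR in $A$ then the common steady-state value of the $A$-coordinate is forced to be $x^*_A$; conversely, the condition $x_A=x^*_A$ for every $x\in E_+(\mathscr{N},K)$ is exactly ACR in $A$ (and PLP already guarantees $E_+(\mathscr{N},K)\neq\emptyset$, so the notion is not vacuous). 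Hence it suffices to prove: $x_A=x^*_A$ for all $x\in E(P_E,x^*)$ if and only if $P_E^\perp\subseteq H_A$, where $H_A:=\{v\in\mathbb{R}^\mathscr{S}\mid v_A=0\}$.

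For the ``if'' direction, take any $x\in E(P_E,x^*)$. By definition $\log x-\log x^*\in P_E^\perp\subseteq H_A$, so the $A$-component of $\log x-\log x^*$ vanishes, i.e. $\log x_A=\log x^*_A$, hence $x_A=x^*_A$. This yields ACR in $A$.

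For the ``only if'' direction, I would exploit that the componentwise exponential map gives a bijection between $P_E^\perp$ and $E(P_E,x^*)$: given any $v\in P_E^\perp$, define $x\in\mathbb{R}^\mathscr{S}_{>0}$ by $x_i=x^*_i\, e^{v_i}$ for each $i\in\mathscr{S}$. Then $\log x-\log x^*=v\in P_E^\perp$, so $x\in E(P_E,x^*)=E_+(\mathscr{N},K)$. Applying the assumed ACR in $A$ to this equilibrium gives $x^*_A\, e^{v_A}=x_A=x^*_A$, whence $e^{v_A}=1$ and therefore $v_A=0$. Since $v\in P_E^\perp$ was arbitrary, $P_E^\perp\subseteq H_A$.

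There is no serious obstacle here: the result is essentially a restatement of the PLP structure once one records that $x^*$ realizes the ACR value and that $\log$ carries $E(P_E,x^*)$ onto the coset $\log x^*+P_E^\perp$. The only points requiring a little care are (i) checking that the constructed point $x$ with $x_i=x^*_i e^{v_i}$ is genuinely positive (immediate, since $x^*$ is positive and $e^{v_i}>0$) and lands in the equilibrium set, and (ii) making sure the argument uses ACR across \emph{all} stoichiometric compatibility classes rather than within a single one — which is precisely how ACR is defined and how the LP parametrization is global. Everything else is bookkeeping in logarithmic coordinates.
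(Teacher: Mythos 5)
Your proof is correct: both directions follow cleanly from the observation that $E_+(\mathscr{N},K)=E(P_E,x^*)$ is exactly the image of the coset $\log x^*+P_E^\perp$ under the componentwise exponential, so ACR in $A$ is equivalent to every $v\in P_E^\perp$ having $v_A=0$. Note, however, that the paper does not prove this statement at all --- it is imported verbatim (modulo typos in the transcription; the intended statement is the biconditional you proved) as Theorem 3.12 of Lao et al.\ \cite{LLMM2022} --- so there is no in-paper argument to compare against; your derivation is the standard one for the Species Hyperplane Criterion and fills that gap correctly.
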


\begin{remark}
The flux subspace of the Anderies system is its \textbf{kinetic flux subspace}, denoted by $\widetilde{S}$. This subspace is the kinetic analogue of the stoichiometric subspace. If the stoichiometric subspace is the span of the reaction vectors, the kinetic flux subspace is the span of the fluxes in terms of the kinetic vectors. In light of the discussion above, if the vector $x^*$ is any positive steady state of the system, the set of positive equilibria consists of vectors $x$ such that the vector $\log (x) - \log (x^*)$ resides in $\widetilde{S}^\perp$. Specifically, for the Anderies system, $\widetilde{S}^\perp = \text{span } \{ [-Q, 1, 1]^\top \}$ where $Q = \dfrac{q_2- q_1}{p_2-p_1}$.
\end{remark}

\section{Deficiency-one algorithm for a class of power-law kinetic systems} \label{appendix:DOA}

The proofs of Propositions \ref{prop:DOA BECCS} and \ref{prop:DOA AR} rely on the Deficiency-one algorithm (DOA). This converts the problem of determining a system's capacity to admit multiple steady states -- a non-linear problem -- into a problem involving a linear system of equations and inequalities. Feinberg \cite{FEIN1995DOA} first introduced this method to deal with regular deficiency-one mass action systems. In \cite{FMRL2018}, the algorithm was extended to a class of power law kinetic systems with deficiency one. 

\subsection{Proof of Proposition \ref{prop:DOA BECCS}}
\label{appendix:BECCS}
The CRN representation of BECCS is a regular, non-weakly reversible network with a deficiency of one. Furthermore, the CRN is discordant. Theorem 10.5.10 of Feinberg's book \cite{FEIN2019} suggests that for this network there is a weakly monotonic kinetics (such as those with non-negative kinetic orders) for which the resulting kinetic system is not injective and for which there are two distinct stoichiometrically-compatible positive steady states. %Applying the DOA to BECCS shows that the system can be governed by PL-NIK kinetics, (where kinetic orders $p_1, p_2, q_1$, and $q_2$ are non-negative) resulting in a multistationary system.

For the necessary preliminary concepts related to the algorithm, the reader may refer to \cite{FMRL2018}.

\begin{proof}
\label{BECCS DOA}
We wish to check if there exists a $\mu=[\mu_1, \mu_2, \mu_3, \mu_4, \mu_5]^\top$ which is a solution to the linear system induced by a confluence vector and a $\{U,M,L \}$ partition of the reactant complexes, and is sign compatible with $S$.

\noindent \textit{Step 1:} Find a confluence vector $h$. A confluence vector that satisfies the three conditions in Definition 2.3 of Fortun et al. \cite{FMRL2018} is
$$ h= \begin{bmatrix} h_{A_1+2A_2} \\h_{2A_1+A_2} \\h_{A_1} \\h_{A_2} \\h_{A_3} \\h_{A_4} \\h_{A_8} \end{bmatrix} = \begin{bmatrix} -1 \\ 1 \\ -1 \\ 1 \\ 0 \\ 0 \\ 0 \end{bmatrix}. $$

\noindent \textit{Step 2:} Choose an upper-middle-lower or $\{ U,M,L \}$  partition of the reactant complexes. Consider the partition 
\begin{align*}
U &=\{ A_1+2A_2, 2A_1+A_2 \} \cup \{A_2,A_3 \} \\
M &= \{ A_1, A_4, A_8 \} \\
L &= \emptyset
\end{align*}
\noindent \textit{Step 3:} Collect all equations comparing the $T_{*,i}\cdot \mu$ among the complexes $i\in M$. Since 
\begin{equation*}
T= \begin{blockarray}{cccccccl}
A_1 + 2A_2 & 2A_1 + A_2 & A_1 & A_2 & A_3 & A_4 & A_8    \\
\begin{block}{[ccccccc]l}
 p_1 & p_2 & 1 & 0 & 0 & 0 & 0 & A_1 \\
 q_1 & q_2 & 0 & 1 & 0 & 0 & 0 & A_2 \\
 0 & 0 & 0 & 0 &  1 & 0 & 0 &  A_3  \\
0 & 0 & 0 & 0 &  0 & 1 & 0 &  A_4  \\
0 & 0 & 0 & 0 &  0 & 0 & 1 &  A_8  \\
 \end{block}
\end{blockarray}, 
\end{equation*}
we have
\begin{equation*} 
\left.
\begin{array}{rllrl}
T_{*, A_1} \cdot \mu &= T_{*, A_4}\cdot\mu & \Rightarrow& \mu_1=\mu_4, \\
T_{*, A_1} \cdot \mu &= T_{*, A_8}\cdot\mu & \Rightarrow& \mu_1=\mu_5, \\
T_{*, A_4} \cdot \mu &= T_{*, A_8}\cdot\mu & \Rightarrow& \mu_4=\mu_5. \\
  \end{array}
 \right.
\end{equation*}

\noindent \textit{Step 4:} Collect all inequalities comparing $T_{*,i}\cdot \mu$'s among the complexes in different parts. That is, if a complex $i$ lies above a complex $j$, then $T_{*,i}>T_{*,j}$. We obtain
\begin{equation*} 
\left.
  \begin{array}{rllrl}
T_{*,A_1+2A_2} \cdot \mu &> T_{*,A_1}\cdot\mu & \Rightarrow& p_1 \mu_1 + q_1\mu_2 &> \mu_1, \\
T_{*,A_1+2A_2} \cdot \mu &> T_{*,A_4}\cdot\mu & \Rightarrow& p_1 \mu_1 + q_1\mu_2 &> \mu_4, \\
T_{*,A_1+2A_2} \cdot \mu &> T_{*,A_8}\cdot\mu & \Rightarrow& p_1 \mu_1 + q_1\mu_2 &> \mu_8, \\
T_{*,2A_1+A_2} \cdot \mu &> T_{*,A_1}\cdot\mu & \Rightarrow& p_2 \mu_1 + q_2\mu_2 &> \mu_1, \\
T_{*,2A_1+A_2} \cdot \mu &> T_{*,A_4}\cdot\mu & \Rightarrow& p_2 \mu_1 + q_2\mu_2 &> \mu_4, \\
T_{*,2A_1+A_2} \cdot \mu &> T_{*,A_8}\cdot\mu & \Rightarrow& p_2 \mu_1 + q_2\mu_2 &> \mu_8, \\
T_{*,A_2} \cdot \mu &> T_{*,A_1}\cdot\mu & \Rightarrow& \mu_2 &> \mu_1, \\
T_{*,A_2} \cdot \mu &> T_{*,A_4}\cdot\mu & \Rightarrow& \mu_2 &> \mu_4, \\
T_{*,A_2} \cdot \mu &> T_{*,A_8}\cdot\mu & \Rightarrow& \mu_2 &> \mu_5, \\
T_{*,A_3} \cdot \mu &> T_{*,A_1}\cdot\mu & \Rightarrow& \mu_3 &> \mu_1, \\
T_{*,A_3} \cdot \mu &> T_{*,A_4}\cdot\mu & \Rightarrow& \mu_3 &> \mu_4, \\
T_{*,A_3} \cdot \mu &> T_{*,A_8}\cdot\mu & \Rightarrow& \mu_3 &> \mu_5. \\
  \end{array}
 \right.
\end{equation*}

\noindent \textit{Step 5:} Collect all equalities and inequalities comparing $T_{*,i}\cdot \mu$'s between complexes of a cut pair in $U$ and $L$. The complexes $A_1+2A_2$ and $2A_1+A_2$ form a cut-pair. Additionally, 
\begin{align*}
h \left( \mathscr{W}(A_1+2A_2) \right) &=\sum_{i \in \mathscr{W}(A_1+2A_2)} h_i =h_{A_1+2A_2}=-1, \\
h \left( \mathscr{W}(2A_1+A_2) \right) &=\sum_{i \in \mathscr{W}(2A_1+A_2)} h_i =h_{2A_1+A_2}=1.
\end{align*}
Hence, $$ T_{*,2A_1+A_2}\cdot \mu > T_{*,A_1+2A_2}\cdot \mu \Rightarrow p_2\mu_1+q_2\mu_2 > p_1\mu_1+q_1\mu_2.$$
$A_2 $ and $A_3 $ also form a cut-pair. Note that 
\begin{align*}
h \left( \mathscr{W}(A_2) \right) &=\sum_{i \in \mathscr{W}(A_2)} h_i =h_{A_1}+ h_{A_2}+ h_{A_4}+ h_{A_5}=0, \\
h \left( \mathscr{W}(A_3) \right) &=\sum_{i \in \mathscr{W}(A_3)} h_i =h_{A_3}=0.
\end{align*}
It follows that $T_{*,A_2}\cdot\mu = T_{*,A_3}\cdot\mu \Rightarrow \mu_2=\mu_3$.

\noindent \textit{Step 6:} Gather all relations in Step 3-5 to form a linear system. If $x=\mu_1=\mu_4=\mu_5$ and $y=\mu_2=\mu_3$, we have $$\mu=\begin{bmatrix} \mu_1 \\ \mu_2 \\ \mu_3 \\ \mu_4 \\ \mu_5 \end{bmatrix}=\begin{bmatrix}   x\\ y \\ y \\ x \\ x \end{bmatrix} \text{ with } y>x.$$
Moreover, $$x <p_1x+q_1y <p_2x+q_2y.$$
\begin{align}
\Rightarrow y &> \dfrac{(1-p_1)x}{q_1} \; \text{ if } q_1>0 , \label{ineq1} \\
y &> \dfrac{(1-p_2)x}{q_2} \; \text{ if } q_2>0, \label{ineq2}\\
y &< \dfrac{(p_2-p_1)x}{q_1-q_2} \; \text{ if } q_1>q_2.\label{ineq3} 
\end{align}

\noindent \textit{Step 7:} Check if the linear system has a nonzero solution $\mu \in \mathbb{R}^\mathscr{S}$ which is sign compatible with $S$. If such $\mu$ exists, we conclude that the CRN can admit multiple equilibria and exit the algorithm. 
$$ S =\left\lbrace  \left. a_1 \begin{bmatrix} 1 \\ -1 \\ 0 \\ 0 \\ 0 \end{bmatrix} + a_2 \begin{bmatrix} 0 \\ 1 \\ -1 \\ 0 \\ 0 \end{bmatrix} + a_3 \begin{bmatrix} 1 \\ 0 \\ 0 \\ 0 \\ -1 \end{bmatrix}+ a_4 \begin{bmatrix} 0 \\ 0 \\ 0 \\ 1 \\ -1 \end{bmatrix} \right| a_1, a_2, a_3, a_4 \in \mathbb{R} \right\rbrace. $$
Suppose $v= \begin{bmatrix} a_1+a_3 \\ -a_1+a_2 \\ -a_2 \\ a_4 \\ -a_3-a_4 \end{bmatrix} \in S$ such that $\text{sign }(v)= \begin{bmatrix} - \\ + \\ + \\ - \\ - \end{bmatrix}$. This means that $a_1<a_2<0$ and $a_1<-a_3<a_4<0$. We want to find a vector $\mu$ such that $\text{sign }(\mu)= \text{sign }(v)$. 
Note that $\mu=[\mu_1, \mu_2, \mu_3, \mu_4, \mu_5]^\top=[x, y, y, x, x]^\top $. Take $x<0$. We get $y<0$ if $p_1 >1$ in (\ref{ineq1}) and $p_2 >1$ in (\ref{ineq2}). To satisfy (\ref{ineq3}), we set $p_1 >p_2$. %In this case, $R_q=\dfrac{q_2-q_1}{p_2-p_1}>0$.
\end{proof}
In summary, we can find $\mu$ that is sign compatible with $S$ when $p_1>p_2>1$ and $q_1>q_2>0$. Hence, the system can admit multiple steady states.

\subsection{Proof of Proposition \ref{prop:DOA AR}}
\label{appendix:AR}

\begin{proof}
    Consider the reaction network of the AR system with four linkage classes and four terminal strong linkage classes.

    \begin{align*}
        \mathscr{L}_1: & {A_1+2A_2\leftrightarrows 2A_1+A_2} & \mathscr{L}_2: & {A_2\leftrightarrows A_3} \\
        \mathscr{L}_3: & {A_4+A_{15}\to A_2+A_{15}} & \mathscr{L}_4: & {A_1\to A_{15}\to A_4}
    \end{align*}

    The associated $T$ matrix of the AR system is
    \begin{equation*}
T= \begin{blockarray}{cccccccl}
A_1+2A_1 & 2A_1+A_2 & A_1 & A_2 & A_3 & A_4+A_{15} & A_{15} &  \\
\begin{block}{[ccccccc]l}
p_1 & p_2 & 1 & 0 & 0 & 0 & 0 & A_1 \\
q_1 & q_2 & 0 & 1 & 0 & 0 & 0 & A_2 \\
0 & 0 & 0 & 0 & 1 & 0 & 0 & A_3 \\
0 & 0 & 0 & 0 & 0 & e_{15} & 0 & A_4 \\
0 & 0 & 0 & 0 & 0 & f_{15} & 1 & A_{15} \\
\end{block}
\end{blockarray}
\end{equation*}

The confluence vector of an AR system is $h=\begin{bmatrix}h_{A_1+2A_2}\\h_{2A_1+A_2}\\h_{A_1}\\h_{A_2}\\h_{A_3}\\h_{A_4}\\h_{A_{15}}\\h_{A_4+A_{15}}\\h_{A_2+A_{15}}\end{bmatrix}=\begin{bmatrix}
    h_1\\h_2\\h_3\\h_4\\h_5\\h_6\\h_7\\h_8\\h_9\end{bmatrix}\in \mathbb{R}^{\mathscr{C}}$

    such that the following are satisfied:
    \begin{enumerate}
        \item $ \sum_{i\in \mathscr{C}}h_i\cdot i=h_1\begin{bmatrix}
        1\\2\\0\\0\\0\end{bmatrix}+h_2\begin{bmatrix}
        2\\1\\0\\0\\0\end{bmatrix}+h_3\begin{bmatrix}
        1\\0\\0\\0\\0\end{bmatrix}+h_4\begin{bmatrix}
        0\\1\\0\\0\\0\end{bmatrix}+h_5\begin{bmatrix}
        0\\0\\1\\0\\0\end{bmatrix}+h_6\begin{bmatrix}
        0\\0\\0\\1\\1\end{bmatrix}+h_7\begin{bmatrix}
        0\\0\\0\\0\\1\end{bmatrix}+h_8\begin{bmatrix}
        0\\0\\0\\1\\1\end{bmatrix}h_7\begin{bmatrix}
        0\\1\\0\\0\\1\end{bmatrix}=\begin{bmatrix}
        0\\0\\0\\0\\0\end{bmatrix}$
        \item For each linkage classes $\mathscr{L}_j$, $\sum_{i\in \mathscr{L}_j} h_i=0$. Hence,
        \begin{align*}
            \sum_{i\in\mathscr{L}_1} h_i= & h_1+h_2=0, & \sum_{i\in\mathscr{L}_2} h_i= & h_4+h_5=0, \\
            \sum_{i\in\mathscr{L}_3} h_i= & h_8+h_9=0, & \sum_{i\in\mathscr{L}_4} h_i= & h_3+h_6+h_7=0.
        \end{align*}
        \item For each terminal strong linkage class which is not a linkage class $\mathscr{A}$, $\sum_{i\in \mathscr{A}} h_i>0$. Since $A_4$ and $A_2+A_{15}$ are the only terminal strong non-linkage class, then 
        \[\sum_{i\in \mathscr{A}} h_i=h_6+h_9>0.\]
        
    \end{enumerate}
    
    Solving this system of linear equation, we obtain a linearly dependent vector of $h=[-1,1,-1,0,0,1,0,-1,1]^{\top}$. Let $\mu=[\mu_1,\mu_2,\mu_3,\mu_4,\mu_5]^{\top}
    \in \mathbb{R}^\mathscr{S}$. Now we choose the Upper-Middle-Lower (UML) partition of the complexes as follows:
    \begin{align*}
    U & =\{A_1+2A_2,2A_1+A_2,A_2,A_3\} & M & =\{A_1,A_4,A_{15},A_2+A_{15},A_4+A_{15}\} & L & =\emptyset
    \end{align*}
    Thus, the equalities of $T_{\cdot i}\cdot \mu$'s in $M$ are
    \begin{align*}
        T_{\cdot A_1}\cdot \mu=T_{\cdot A_{15}}\cdot \mu & \Longrightarrow \mu_1=\mu_5,\\
        T_{\cdot A_1}\cdot \mu=T_{\cdot A_4+A_{15}}\cdot \mu & \Longrightarrow \mu_1=e_{15}\mu_4+f_{15}\mu_5 \\
        T_{\cdot A_{15}}\cdot \mu=T_{\cdot A_4+A_{15}}\cdot \mu & \Longrightarrow \mu_5=e_{15}\mu_4+f_{15}\mu_5.
    \end{align*}
Since $A_2,2A_1+A_2\in U,A_1\in M$, then 
\begin{align*}
    T_{\cdot A_2}\cdot \mu>T_{\cdot A_1}\cdot \mu & \Longrightarrow \mu_2>\mu_1, \\
    T_{\cdot 2A_1+A_2}\cdot \mu>T_{\cdot A_1}\cdot \mu & \Longrightarrow p_2\mu_1+q_2\mu_2>\mu_1.
\end{align*}

Now, $A_1+2A_2$ and $2A_1+A_2$ forms a cut pair in $U$. Thus,
\[h(\mathscr{W}(A_1+2A_2))=h_1=-1,\]
\[h(\mathscr{W}(2A_1+A_2))=h_2=1.\]
Thus,
\[T_{\cdot 2A_1+A_2}\cdot \mu>T_{\cdot A_1+2A_2}\cdot \mu\Longrightarrow p_2\mu_1+q_2\mu_2>p_1\mu_1+q_1\mu_2.\]
Also, 
$A_2$ and $A_3$ forms a cut pair in $U$. Thus
\[h(\mathscr{W}(A_2))=h_4=0,\]
\[h(\mathscr{W}(A_3))=h_5=0.\]
Therefore,
\[T_{\cdot A_2}\cdot \mu=T_{\cdot A_3}\cdot \mu\Longrightarrow \mu_2=\mu_3.\]

From the previous equalities, $\mu_2=\mu_3>\mu_1=\mu_5$, but 
\[e_{4,15}\mu_4+f_{4,15}\mu_5=\mu_5\Longrightarrow \mu_5(1-f_{4,15})=e_{4,15}\mu_4\]
since $e_{4,15}+f_{4,15}=1$, then $\mu_5=\mu_4$.
Hence, $\mu_2=\mu_3>\mu_1=\mu_5=\mu_4$.
Therefore, $\mu=\begin{bmatrix}
    \mu_1\\\mu_2\\\mu_3\\\mu_4\\\mu_5\end{bmatrix}=\begin{bmatrix}
        x\\y\\y\\x\\x
    \end{bmatrix}\in \mathbb{R}^\mathscr{S}$ where $x<y$.
The rest of the computation is the same with the last part of Proof of Proposition in \ref{BECCS DOA} and will have the same result that multistationarity is achieved only when $p_1>p_2>0$ and $q_1>q_2>0$.

\end{proof}

In summary, AR system will only admit multistationarity if it is positive with $p_1>p_2>0$ and $q_1>q_2>0$.

%%=============================================%%
%% For submissions to Nature Portfolio Journals %%
%% please use the heading ``Extended Data''.   %%
%%=============================================%%

%%=============================================================%%
%% Sample for another appendix section			       %%
%%=============================================================%%

%% \section{Example of another appendix section}\label{secA2}%
%% Appendices may be used for helpful, supporting or essential material that would otherwise 
%% clutter, break up or be distracting to the text. Appendices can consist of sections, figures, 
%% tables and equations etc.

\end{appendix}

\singlespacing

%%===========================================================================================%%
%% If you are submitting to one of the Nature Portfolio journals, using the eJP submission   %%
%% system, please include the references within the manuscript file itself. You may do this  %%
%% by copying the reference list from your .bbl file, paste it into the main manuscript .tex %%
%% file, and delete the associated \verb+\bibliography+ commands.                            %%
%%===========================================================================================%%

\end{document}